\title{Testing Forest-Isomorphism in the Adjacency List Model}
\author{
  Mitsuru Kusumoto\inst{1}\thanks{JST, ERATO, Kawarabayashi Large Graph Project.}
  \and Yuichi
  Yoshida\inst{1, 2}\thanks{Supported by JSPS Grant-in-Aid for Research Activity Start-up (24800082), MEXT Grant-in-Aid for Scientific Research on Innovative Areas (24106003), and JST, ERATO, Kawarabayashi Large Graph Project.}
}
\institute{
  Preferred Infrastructure, Inc. \email{mkusumoto@preferred.jp}
  \and
  National Institute of Informatics. \email{yyoshida@nii.ac.jp}.
}
\newcommand{\calD}{\mathcal{D}}
\newcommand{\calO}{\mathcal{O}}
\newcommand{\calU}{\mathcal{U}}
\newcommand{\calF}{\mathcal{F}}
\newcommand{\calT}{\mathcal{T}}
\newcommand{\bfF}{\mathbf{F}}
\newcommand{\bfw}{\mathbf{w}}
\newcommand{\bfu}{\mathbf{u}}
\newcommand{\bfv}{\mathbf{v}}
\newcommand{\bfone}{\mathbf{1}}
\newcommand{\poly}{\mathrm{poly}}
\newcommand{\polylog}{\mathrm{polylog}}
\newcommand{\rt}{\mathrm{root}}
\newcommand{\bbN}{\mathbb{N}}
\newcommand{\E}{\mathop{\mathbf{E}}}
\newcommand{\Var}{\mathop{\mathbf{Var}}}
\newcommand{\dist}{d}
\newcommand{\Freq}{\mathsf{\mathop{Freq}}}
\newcommand{\Size}{\mathsf{\mathop{Size}}}
\newcommand{\PReal}{\mathbb{R}_{\geq 0}}
\newcommand{\Real}{\mathbb{R}}
\newcommand{\ret}{\textbf{return} }
\newcommand{\vtx}{\mathbf{vtx}}
\newcommand{\Szemeredi}{Szemer{\'e}di}
\newcommand{\qI}{ {q_{\ref{lemma:approximate-freq}}}}
\newcommand{\qII}{ {q_{\ref{lemma:approximate-size}}}}
\begin{document}

\let\doendproof\endproof
\renewcommand\endproof{~\hfill\qed\doendproof}

\newcommand{\nqed}{~\hfill$\blacksquare$}

\maketitle

\normalem

\begin{abstract}
  We consider the problem of testing if two input forests are isomorphic or are far from being so.
  An algorithm is called an $\varepsilon$-tester for forest-isomorphism if 
  given an oracle access to two forests $G$ and $H$ in the adjacency list model,
  with high probability,
  accepts if $G$ and $H$ are isomorphic and rejects if we must modify at least $\varepsilon n$ edges to make $G$ isomorphic to $H$.
  We show an $\varepsilon$-tester for forest-isomorphism with a query complexity $\polylog(n)$ and a lower bound of $\Omega(\sqrt{\log{n}})$.
  Further, with the aid of the tester,
  we show that every graph property is testable in the adjacency list model with $\polylog(n)$ queries if the input graph is a forest.
\end{abstract}


\section{Introduction}
In \emph{property testing},
we want to design an efficient algorithm that distinguishes the case in which the input object satisfies some property or is ``far'' from satisfying it~\cite{Rubinfeld:1996um}.
In particular, an object is called \emph{$\varepsilon$-far} from a property $P$ if we have to modify an $\varepsilon$-fraction of the input to make it satisfy $P$.
A (randomized) algorithm is called an \emph{$\varepsilon$-tester} for a property $P$ if
it accepts objects satisfying $P$ and rejects objects that are $\varepsilon$-far from $P$ with high probability (say $2/3$).

Graph property testing is one of the major topics in property testing, and many properties are known to be testable in sublinear time or even in constant time (in the input size).
See~\cite{Goldreich:2010vk} for surveys.
In order to design sublinear-time testers, we have to define how to access the input graph, as just reading the entire graph requires linear time.
The model used here is the \emph{adjacency list model}~\cite{Kaufman:2004vg}.
In this model, the input graph $G=(V,E)$ is represented by an adjacency list and we are given an oracle access $\calO_G$ to it.
We have two types of queries.
The first query, called a \emph{degree query}, specifies a vertex $v$, and the oracle $\calO_G$ returns the degree of $v$.
The second query, called a \emph{neighbor query}, specifies a vertex $v$ and an index $i$, and the oracle $\calO_G$ returns the $i$-th neighbor of $v$.
A graph $G$ is called \emph{$\varepsilon$-far} from a property $P$ if we must add or remove at least $\varepsilon m$ edges for it to satisfy the property $P$,
where $m$ is the number of edges.
In contrast to other models such as the adjacency matrix and the bounded-degree models,
only a few properties are known to be efficiently testable in the adjacency list model.
For examples, 
testing triangle-freeness, $k$-colorability for a constant $k$, and bipartiteness requires $\Omega(\sqrt{n})$ queries~\cite{Alon:2008gn,BenEliezer:2008wz,Kaufman:2004vg},
where $n$ is the number of vertices.

A graph $G$ is called \emph{isomorphic} to another graph $H$ if there is a bijection $\pi:V(G) \to V(H)$ such that $(u,v)\in E(G)$ if and only if $(\pi(u),\pi(v)) \in E(H)$.
In this paper, we consider the problem of testing if the input graph $G$ is isomorphic to a fixed graph $H$, or if it is \emph{$H$-isomorphic}.
We assume that the (unknown) input graph $G$ has the same number of vertices as $H$.
The problem of deciding if a graph is isomorphic to $H$ is fundamental and theoretically important.
For example, the problem is one of the rare problems that is neither known to be in $\textbf{P}$ nor $\textbf{NP-Complete}$.
This motivates us to consider $H$-isomorphism in the property testing literature.
A \emph{graph property} refers to a property that is closed under taking isomorphism.
Then, $H$-isomorphism can be identified as the simplest graph property such that every graph property can be expressed as a union of $H$-isomorphisms.
Owing to these observations,
Newman and Sohler~\cite{Newman:2013hg} showed that every graph property is testable in the bounded-degree model if the input graph is a (bounded-degree) planar graph.
This connection also holds for the adjacency list model,
which motivates us to consider $H$-isomorphism in the adjacency list model.

If we assume that the input graph is an arbitrary graph possibly containing $\Omega(n^2)$ edges,
testing $H$-isomorphism in the adjacency list model requires $\Omega(\sqrt{n})$ queries~\cite{Fischer:2008tj}.
To investigate efficient testers for $H$-isomorphism, we restrict the input graph: We assume that the input graph and $H$ are forests with the same number of vertices $n$.
Note that we have no assumption on the degree as opposed to the bounded-degree model.
To avoid uninteresting technicalities,
we modify the definition of $\varepsilon$-farness as follows:
Instead of using the number of edges in $G$ to measure the distance,
we say that a forest $G$ is \emph{$\varepsilon$-far} from isomorphic to a forest $H$ if we must add or remove $\varepsilon n$ edges to transform $G$ to $H$.\footnote{Indeed, we often assume that the input graph contains $\Omega(n)$ edges in the adjacency list model. Thus, our definition of $\varepsilon$-farness for forests and the definition of $\varepsilon$-farness in the adjacency list model with the assumption are identical up to a constant multiplicative factor.}

With these definitions, we refer to the problem of testing the property of being isomorphic to a fixed forest as \emph{testing forest-isomorphism}.
The main result of this paper is as follows.
\begin{theorem}\label{thm:forest-isomorphism}
  In the adjacency list model, 
  we can test forest-isomorphism with $\polylog(n)$ queries.
\end{theorem}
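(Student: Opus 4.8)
The plan is to reduce testing forest-isomorphism to comparing a succinct \emph{signature} of the unknown forest $G$ with that of the fixed forest $H$. A forest is a disjoint union of trees, and the edit distance $\dist(\cdot,\cdot)$ is tame on such unions: small components must be corrected essentially one by one, while a large tree can be partially rearranged cheaply. So one expects $G$ and $H$ to be isomorphic essentially iff a family of frequency statistics of small local pieces agree, and to be $\varepsilon$-far iff these statistics disagree by $\Omega(\varepsilon)$. Fix a threshold $s=\polylog(n)$, call a component \emph{small} if it has at most $s$ vertices and \emph{large} otherwise, and for a forest $F$ and a pattern $t$ let $\Freq_F(t)$ be the fraction of vertices of $F$ lying in a piece of type $t$ (for a notion of ``piece'' defined below). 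The tester samples $\polylog(n)$ vertices of $G$, estimates the vector $(\Freq_G(t))_t$, compares it in $\ell_1$ with $(\Freq_H(t))_t$ (which can be computed exactly from the given $H$), and accepts iff they are close.

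The delicate point is the choice of ``types'': they cannot be merely rooted subtrees of size $\le s$, since two large trees can have identical such statistics yet be $\varepsilon$-far — an $n$-vertex caterpillar and a path carrying a large star at one end have essentially the same multiset of small hanging subtrees, but their numbers of vertices of degree $\ge 3$ differ by $\Omega(n)$, forcing $\Omega(n)$ edits. I would therefore use a \emph{hierarchical decomposition}. First type each small component by its isomorphism class. Then, inside each large tree, repeatedly peel off the maximal ``petals'' (subtrees of mass below the current threshold hanging off a cut edge whose other side is large), recording for each petal its isomorphism type \emph{together with enough of the attachment structure} — the degree at the attachment point and the multiset of co-attached petal types — to make the statistics informative (in the example above this already separates the caterpillar from the star-tree, since a pendant leaf's neighbor has degree $3$ versus degree $\Omega(n)$); then contract each petal and recurse on the resulting vertex-weighted ``core'' with a geometrically growing threshold $s, 2s, 4s, \dots$. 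The recursion has depth $O(\log n)$, each leaf of a level-$i$ core has at least $2^i s$ vertices of $G$ hanging below it (so level-$i$ cores have only $O(n/(2^i s))$ branch vertices), and although the type alphabet at each level is $n$-dependent it has description length $\polylog(n)$ — which is all the tester needs.

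The estimation has two ingredients, matching Lemmas~\ref{lemma:approximate-freq} and~\ref{lemma:approximate-size}. For a sampled vertex $v$ one must identify the level of the hierarchy to which $v$ belongs and its petal type there; this reduces to exploring a bounded-size neighborhood of $v$ and estimating the masses of a few subtrees cut off at $v$ and at its images in the higher-level cores — exactly the quantity $\Size$ that Lemma~\ref{lemma:approximate-size} approximates with $\polylog(n)$ queries (degree queries keep exploration from stalling at high-degree vertices, so no bounded-degree assumption is needed). Each $\Freq_G(t)$ is then estimated by sampling and Chernoff bounds, which is Lemma~\ref{lemma:approximate-freq}; a union bound over the $\polylog(n)$ relevant types and the $O(\log n)$ levels keeps the overall query complexity $\polylog(n)$.

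The main obstacle is the robustness half of the characterization: if every level-$i$ type-frequency of $G$ and of $H$ agrees to within $\delta$, then $G$ must be editable into an isomorph of $H$ using $\poly(\delta)\cdot n + O((n/s)\log n)$ edits, which is $o(n)$ for suitable $s$ and $\delta$. This demands a tight balance: each level's type must be \emph{rich} enough that matching frequencies genuinely pin down the structure up to cheap edits — in particular it must register where petals sit on the core, which is precisely why a flat subtree-isomorphism signature fails — yet \emph{coarse} enough to admit a $\polylog(n)$-size description and be estimable; and one must exhibit an explicit level-by-level editing strategy that charges every correction either to a detected frequency discrepancy or to the $O(n/s)$ ``core budget'' available at that level. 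Making the threshold schedule, the type definition, and this edit accounting fit together is the crux, and it is what pins the complexity at $\polylog(n)$, consistent with the accompanying $\Omega(\sqrt{\log n})$ lower bound showing that some dependence on $n$ is intrinsic.
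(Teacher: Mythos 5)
The central gap is the one you yourself flag: the soundness direction (``if all type frequencies of $G$ and $H$ agree to within $\delta$ then $G$ is $o(n)$-editable into $H$'') is declared to be ``the crux'' and is never established, so what you have is a plan rather than a proof. The paper's route here is instructive. Instead of an $\ell_1$ comparison of exact type frequencies, it buckets the high-degree components by root degree into $O(\log n)$ groups and, within each group, compares the empirical distributions of the (binned) subtree-count vectors $\Freq(T)$ under a min-cost-matching / earth-mover distance (Lemma~\ref{lemma:mmgh-mmtcgtch}), and then proves $\dist(G,H)\le 2s\cdot\MM(F_G,F_H)$ (Lemma~\ref{lemma:mmgh-dist}). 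The matching distance is not a convenience: a root can have degree up to $n$, so its subtree-count vector can only be estimated to additive error $\delta\deg(v)$, and the tester must therefore treat two roots with nearby $\Freq$ vectors as interchangeable (they are cheap to edit into one another). A flat $\ell_1$ test over exact types --- in particular over types that record ``the multiset of co-attached petal types,'' which is exactly such a vector --- cannot be estimated and cannot tolerate this slack. Your proposal needs this robustness lemma or an equivalent, and it is the hardest part of the paper.

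Second, your hierarchical peeling decomposition is not shown to be computable from the adjacency-list oracle, and I do not believe it is. Deciding whether a subtree hanging off a cut edge is a level-$i$ petal requires comparing its size to $2^is$ and certifying that ``the other side is large''; at intermediate levels this is a size query on a subtree that may contain $\Theta(n)$ vertices, and even deciding membership of a sampled vertex in such a subtree can require tracing a path of unbounded length. The paper's single-level partition (Lemma~\ref{lmm:partition-to-s-forest}) is designed precisely to avoid this: it deletes the $O(\varepsilon n)$ edges between pairs of high-degree vertices, between high-degree vertices and large low-degree components, and from small components touching two high-degree vertices, after which every decision is local (one degree query plus a BFS of at most $s$ low-degree vertices), each surviving component is either bounded-degree (handled by the Newman--Sohler machinery) or a single high-degree root with subtrees of size at most $s$, and no recursion is needed; the procedure $\widetilde{\Size}$ only ever has to estimate the size of such a rooted component. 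Without a locally decidable decomposition of this kind, your estimation step has no $\polylog(n)$ implementation.
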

Indeed, in our proof, we show that we can test forest-isomorphism even if both graphs are given as oracle accesses.

Further, we show a lower bound for testing forest-isomorphism.
\begin{theorem}\label{thm:lower-bound}
  In the adjacency list model,
  testing forest-isomorphism requires $\Omega(\sqrt{\log{n}})$ queries.
\end{theorem}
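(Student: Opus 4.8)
The plan is to invoke Yao's minimax principle. I would fix a forest $H$ on $n$ vertices, a constant $\varepsilon>0$, and two distributions --- $\calD_{\mathrm{yes}}$, supported on forests isomorphic to $H$, and $\calD_{\mathrm{no}}$, supported on forests that are $\varepsilon$-far from $H$ --- and then show that no deterministic algorithm making $q=o(\sqrt{\log n})$ degree/neighbor queries can distinguish a sample of $\calD_{\mathrm{yes}}$ from one of $\calD_{\mathrm{no}}$ with advantage $\Omega(1)$; Theorem~\ref{thm:lower-bound} follows.

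For the construction I would use a ``caterpillar of deep subtrees''. Fix $D=\Theta(\sqrt{\log n})$ and let $H$ consist of a spine path $u_1-\cdots-u_t$ where to each $u_i$ we attach, through a fresh neighbor $r_i$, a pendant complete binary tree of depth $D$; here $t=\Theta(n/2^{D})$ is chosen so that $|V(H)|=n$. Let $H'$ be built the same way except that the pendant trees have depth $D-1$ on $2t/3$ of the spine vertices and depth $D+1$ on the remaining $t/3$ (in any fixed interleaving). A short computation shows $|V(H')|=n$ and that $H$ and $H'$ have the same number of vertices of each degree. Nonetheless $H'$ will be $\varepsilon$-far from $H$ for a suitable constant $\varepsilon$: passing from $H$ to $H'$ shrinks $\Omega(t)$ of the depth-$D$ pendant trees by one level and enlarges $\Omega(t)$ of them by one level, and this forces the rewiring of a constant fraction of the $\Theta(2^{D})$ vertices in each affected tree, i.e.\ $\Omega(t\cdot 2^{D})=\Omega(n)$ edge edits overall. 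I would let $\calD_{\mathrm{yes}}$ be a uniformly random relabeling of $H$ (with each adjacency list independently uniformly ordered) and $\calD_{\mathrm{no}}$ a uniformly random relabeling of $H'$.

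The heart of the argument will be that $H$ and $H'$ are \emph{locally} indistinguishable up to depth $\Theta(D)$. Since the spine contributes only $t=o(n)$ vertices, a uniformly random vertex lands inside some pendant tree, and inside a complete binary tree of \emph{any} depth a uniform vertex sits at height $h$ above the leaves with probability $2^{-h-1}$ --- a law that does not depend on the tree's depth. Hence the radius-$r$ ball around a random vertex (for $r<D$) is, with high probability, a truncated piece of the infinite $3$-regular tree whose only boundary is a single flat layer of leaves at a depth-independent random distance $h$, and its distribution is identical under $H$ and $H'$. With $q$ queries the algorithm touches at most $q$ vertices, which form ``excursions'' rooted at essentially independent uniform vertices; I would use a union bound --- relying on $P[\text{a random vertex lies within } q \text{ steps of the spine}]=2^{-\Omega(D-q)}$ and on $t\gg q^{2}$ --- to argue that with probability $1-o(1)$ these excursions are pairwise non-interacting and none climbs out of its pendant tree onto the spine, which is the only route by which the algorithm could ever compare a pendant tree's depth against its neighbors'. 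Conditioned on that event the transcript is a product of independent samples of the common ``truncated $3$-regular ball'' law, so the views under $\calD_{\mathrm{yes}}$ and $\calD_{\mathrm{no}}$ are $o(1)$-close in total variation, and taking $D=\Theta(\sqrt{\log n})$ gives the bound.

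The hard part will be this joint calibration. The pendant trees have to be deep enough that no $o(D)$-query walk can reach the spine (a random vertex sits roughly $D$ steps below it, and the chance of landing within $q$ steps of it decays like $2^{-(D-q)}$), yet the one-level depth discrepancy, summed over $\Theta(t)$ pendant trees, still has to touch a constant fraction of all $n$ vertices --- and balancing these two requirements is what naturally sets $D$, and hence the lower bound, at $\Theta(\sqrt{\log n})$. The remaining ingredients I expect to be routine but to require care: making the ``excursion'' decomposition of a $q$-query transcript precise, the union bound over the escape and collision events, checking that the height-above-the-leaves distribution (and thus the truncated-ball law) is genuinely depth-independent, and upgrading the ``same degree sequence but different pendant-depth multiset'' observation into a true $\Omega(n)$ lower bound on the $\varepsilon$-distance.
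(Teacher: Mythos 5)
Your construction does not produce a far pair, and this is fatal. A complete binary tree of depth $D+1$ is exactly a fresh root joined to two complete binary trees of depth $D$, so it can be split into two depth-$D$ trees plus one isolated vertex by deleting just the $2$ edges at its apex; dually, two depth-$(D-1)$ trees plus an isolated vertex become one depth-$D$ tree with $2$ edge insertions. Concretely: delete the $t$ spine--pendant edges of $H'$ and of $H$ ($2t$ edits), split each of the $t/3$ depth-$(D+1)$ pendants at its apex ($2t/3$ edits), and use the $t/3$ freed apex vertices to merge the $2t/3$ depth-$(D-1)$ pendants in pairs ($2t/3$ edits); the result is isomorphic to $H$ with its pendants detached. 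Hence $\dist(H,H') \le \frac{10t}{3} = O(n/2^{D}) = o(n)$, so $H'$ is not $\varepsilon$-far from $H$ for any constant $\varepsilon$ and the lower bound collapses. Your intuition that changing the depth by one level ``forces the rewiring of a constant fraction of the $\Theta(2^{D})$ vertices in each affected tree'' implicitly assumes the modification happens at the leaf layer; the edit distance lets you re-partition at the top instead, where it is essentially free. (Relatedly, the ``joint calibration'' you describe does not exist: even granting your farness claim, nothing in your argument would have pinned $D$ at $\Theta(\sqrt{\log n})$ rather than $\Theta(\log n)$.)

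The failure is not an accident of complete binary trees but a conceptual obstruction to any bounded-degree version of your plan. Your $H$ and $H'$ have \emph{identical} radius-$r$ neighborhood statistics $\Freq_r$ for every $r < (D-1)/2$ (the joint law of a random vertex's depth-below-root and height-above-leaves differs only when both are simultaneously visible, which requires $r \ge (D-1)/2$), and they have maximum degree $3$. The Newman--Sohler theorem (quoted in this paper as Theorem~\ref{theorem:newman2011}) states that for every $\varepsilon'$ there is a \emph{constant} radius $D(\varepsilon')$ such that bounded-degree forests with matching $\Freq_{D(\varepsilon')}$ are $\varepsilon'$-close. So any pair of bounded-degree forests that is locally indistinguishable to super-constant radius is automatically $o(n)$-close: local indistinguishability and farness are mutually exclusive in exactly the regime you are working in. The paper's proof therefore puts the hardness where Newman--Sohler does not reach, namely in unbounded degrees: the yes-instance is a union of star packings $T^N_{2^i}$ for all $i \in [s]$ with $s = \Theta(\log n)$, the no-instance uses only half of the star sizes (doubled to equalize $n$), farness follows from a degree-matching argument (Lemma~\ref{lem:distance-by-degree}), and the $\Omega(\sqrt{s}) = \Omega(\sqrt{\log n})$ bound comes from a birthday-paradox reduction to distinguishing $\calU([s])$ from $\calU(S)$ with $|S|=s/2$. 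If you want to salvage your approach, the gadgets must encode the distinguishing information in high-degree vertices rather than in deep bounded-degree structure.
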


As a corollary of Theorem~\ref{thm:forest-isomorphism}, 
we show the following general result.
\begin{theorem}\label{thm:every-property-is-testable}
  In the adjacency list model,
  given an oracle access to a forest,
  we can test any graph property with $\polylog(n)$ queries.
\end{theorem}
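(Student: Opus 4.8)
The plan is to reduce testing an arbitrary graph property $P$ to the \emph{learning} component that is already present inside our forest-isomorphism tester. Fix the proximity parameter $\varepsilon > 0$ and let $\calF_P$ be the set of $n$-vertex forests belonging to $P$. Since $P$ is closed under isomorphism, $\calF_P$ is a union of isomorphism classes, and since the input $G$ is a forest on $n$ vertices, $G \in P$ is the same as $G \in \calF_P$. The key observation is that the tester of Theorem~\ref{thm:forest-isomorphism} should be of the following shape: extract from $\calO_G$ a succinct \emph{signature} $\widehat{\sigma}(G)$, compute the signature $\sigma(H)$ of the target forest $H$ directly from $H$, and accept iff $\|\widehat{\sigma}(G) - \sigma(H)\| \le \theta$ for a threshold $\theta = \theta(\varepsilon)$. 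From its analysis I would isolate two facts: \textbf{(i)~estimation} --- using $\polylog(n)$ queries to $\calO_G$ only, and independently of any target forest, $\widehat{\sigma}(G)$ lies within $\theta$ of the true signature $\sigma(G)$ with probability at least $2/3$; and \textbf{(ii)~faithfulness} --- $\sigma(\cdot)$ is isomorphism-invariant, and any two $n$-vertex forests $F_1, F_2$ with $\|\sigma(F_1) - \sigma(F_2)\| \le 2\theta$ satisfy $\dist(F_1, F_2) \le \varepsilon n$. (Completeness and soundness of the forest-isomorphism tester are exactly (i) and (ii) combined.)

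Granting (i) and (ii), the tester for $P$ is immediate. First I would run the estimation step to obtain $\widehat{\sigma}(G)$ with $\polylog(n)$ queries; then, \emph{with no further queries}, I would accept iff there is an $n$-vertex forest $H \in \calF_P$ with $\|\sigma(H) - \widehat{\sigma}(G)\| \le \theta$. For completeness, if $G \in P$ then on the event that $\widehat{\sigma}(G)$ is within $\theta$ of $\sigma(G)$ --- which has probability at least $2/3$ --- the choice $H = G \in \calF_P$ witnesses acceptance. For soundness, suppose the tester accepts, witnessed by some $H \in \calF_P \subseteq P$; on the same good event, the triangle inequality gives $\|\sigma(G) - \sigma(H)\| \le 2\theta$, so $\dist(G, H) \le \varepsilon n$ by faithfulness, and hence $G$ is not $\varepsilon$-far from $P$. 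Thus the tester errs only when the estimation step fails, which happens with probability at most $1/3$. Only the estimation step touches the oracle, so the query complexity is $\polylog(n)$; the search over forests $H$ consumes no queries. (That search may require unbounded computation, and for a non-recursive $P$ the resulting tester is non-uniform in $P$; this is standard for statements of the form ``every property is testable'' and does not affect the query complexity.)

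The substance of the proof therefore lies entirely in establishing (i) and (ii) for a suitable signature, which is essentially what the proof of Theorem~\ref{thm:forest-isomorphism} must already accomplish. What I would have to be careful about --- and what I expect to be the main obstacle --- is that the signature of $G$ must be extractable from $\calO_G$ alone, \emph{before} and \emph{independently of} the target forest: I need the forest-isomorphism tester to first sample enough of $G$ (random vertices, their bounded-depth neighborhoods explored under a vertex budget so that large degrees do not inflate the query count, and the sizes and shapes of their connected components up to a $\poly(1/\varepsilon)$ cutoff) to pin down $\widehat{\sigma}(G)$, and only then to compare with the target. The other place where the real work sits is the quantitative faithfulness bound: one must show that forests whose signatures nearly coincide can be made isomorphic with at most $\varepsilon n$ edge modifications --- in particular that realigning small-component frequencies and re-assembling the few large trees costs only $O(\varepsilon n)$ edits, which is why the signature must record the shapes of the large trees in enough (recursive) detail. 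Once (i) and (ii) are available in this form, the reduction above is routine and yields Theorem~\ref{thm:every-property-is-testable}.
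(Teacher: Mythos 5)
Your proposal matches the paper's proof in essence: the paper computes the sketches $\widetilde{\Sketch}(\GP{i})$ (and the frequency vector of the bounded-degree part) from $\calO_G$ alone with $\polylog(n)$ queries, then, with no further queries, exhaustively compares them against the exactly-computed point sets $F_H^{[i]}$ of every $H$ in the property via the matching distance $\MM$, accepting iff some $H$ is close in every group; your ``estimation'' and ``faithfulness'' facts correspond precisely to Lemma~\ref{lemma:mmgh-mmtcgtch} (with the triangle inequality for $\MM$) and Lemmas~\ref{lemma:mmgh-dist} and~\ref{lemma:farness-test}. The only cosmetic difference is that the paper's ``signature'' is a family of weighted point sets compared by an earth-mover-type distance rather than a vector compared in a norm, which does not change the argument.
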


\paragraph{Techniques}
We state a proof sketch of our main theorem, Theorem~\ref{thm:forest-isomorphism}.
Given a tree $G$,
by removing $\varepsilon n$ edges from $G$,
we can obtain a graph $G'$ with the following property for some $s = s(\varepsilon)$.
Each connected component of $G'$ is either (i) a tree of maximum degree at most $s$, or (ii) a tree consisting of a (unique) root vertex of degree more than $s$ and subtrees of size at most $s$.

The first step in our algorithm is providing an oracle access $\calO_{G'}$ to $G'$ using the oracle access $\calO_G$ to $G$.
We call $\calO_{G'}$ the \emph{partitioning oracle}.
In particular, if we specify a vertex $v$ and an index $i$, the oracle $\calO_{G'}$ returns whether the $i$-th edge incident to $v$ in $G$ is still alive in $G'$.
By carefully designing the construction of $G'$,
we can answer the query with $O(s^2)$ queries to $\calO_G$.


Suppose that we have an oracle access $\calO_{G'}$ to $G'$.
Since we can deal with trees of type (i) using existing algorithms in the bounded-degree model,
let us elaborate on trees of type (ii).
For a tree $T$ of type (ii), we can associate a tuple $(d,c_1,\ldots,c_{t(s)})$ with it,
where $t(s)$ is the number of possible trees of maximum degree at most $s$ and size at most $s$.
Note that $t(s)$ depends only on $\varepsilon$.
Here, $d$ is the degree of the root vertex of $T$,
and $c_i$ is the number of subtrees of the $i$-th type in $T$.
Though we cannot exactly compute the tuple,
given the root vertex of $T$,
we can approximate it well using $\calO_{G'}$.
Since $G'$ consists of trees of type (ii), we can associate a multiset of tuples with $G'$.
We call it the \emph{sketch} of $G'$.
Though we cannot exactly compute the sketch,
we can approximate it to some extent.
The query complexity becomes $\polylog(n)$ since we want to approximate $d$ to within the multiplicative factor of $1+\varepsilon$ and $d$ can be up to $n$.

If $G$ and $H$ are isomorphic, 
then sketches associated with $G'$ and $H'$ must be the same.
Our claim is that, if $G'$ and $H'$ are $\varepsilon$-far from being isomorphic,
then their sketches are also far.
Further, we will show that the distance between two sketches can be computed via maximum matching in the bipartite graph such that each vertex in the left part corresponds to a tree in $G'$ and each vertex in the right part corresponds to a tree in $H'$.
Since we can approximate sketches well and then approximate the size of the maximum matching from them, we obtain a tester for forest-isomorphism.

\paragraph{Related works}
There are two major models on the representation of graphs.
In the \emph{dense graph model}, a graph $G=(V,E)$ is given as an oracle $\calO_G:V \times V \to \{0,1\}$.
Given two vertices $u,v\in V$, the oracle returns whether $u$ and $v$ are connected in $G$.
A graph is called \emph{$\varepsilon$-far} from a property $P$ if we must add or remove at least $\varepsilon n^2$ edges for it to satisfy $P$.

In the dense graph model, many properties such as triangle-freeness and $k$-colorability are known to be testable in constant time~\cite{Goldreich:1998wa}.
Indeed, Alon et~al.~\cite{Alon:2009gn} obtained the characterization of constant-time testable properties using \Szemeredi's regularity lemma.
As for graph isomorphism, Fischer and Matsliah~\cite{Fischer:2008tj} showed that testing $H$-isomorphism can be carried out with $\widetilde{\Theta}(\sqrt{n})$ queries.
If both $G$ and $H$ are given as oracle accesses, then we need $\Omega(n)$ queries, and we can test with $\widetilde{O}(n^{5/4})$ queries.
We can trivially test forest-isomorphism:
If a graph is isomorphic to a forest $H$, then it has at most $n$ edges.
If a graph is $\varepsilon$-far from being isomorphic to $H$, then it has at least $\varepsilon n^2 - n$ edges (otherwise, we can remove all edges and then add new edges to make $H$).
Thus, we can distinguish the two cases only by estimating the number of edges up to, say $\frac{\varepsilon n^2}{2}$.

In the \emph{bounded-degree model} with a degree bound $d$,
a graph $G=(V,E)$ is given as an oracle $\calO_G:V \times [d] \to V \cup \{\bot\}$,
where $[d] = \{1,\ldots,d\}$ and $\bot$ is a special symbol.
Given a vertex $v \in V$ and an index $i \in [d]$, the oracle returns the $i$-th neighbor of $v$.
If there is no such neighbor, then the oracle returns $\bot$.

Many properties are known to be testable in constant time~\cite{Goldreich:2002bn} and several general conditions of constant-time testability are shown~\cite{Newman:2013hg,tanigawa2012testing}.
Hassidim et al.~\cite{Hassidim:2009ku} introduced the concept of the partitioning oracle to test minor closed properties.
Our partitioning oracle is similar to theirs, but their oracle provides an oracle access to the graph that is determined by its internal random coin whereas ours provides an oracle access to a graph that is deterministically determined.
As for graph isomorphism, it is known that $H$-isomorphism is testable in constant time when $H$ is hyperfinite~\cite{Newman:2013hg}.
Here, a graph is \emph{hyperfinite} if by removing $\varepsilon n$ edges, we can decompose the graph into connected components of size at most $f(\varepsilon)$ for some function $f$.

\paragraph{Organization}
In Section~\ref{sec:preliminaries}, we give notations and definitions used throughout the paper.
In Section~\ref{sec:partitioning-oracle}, we introduce the partitioning oracle.
Using the partitioning oracle, it suffices to consider the case where each tree in the input graph is either a bounded-degree tree or a tree consisting of a high-degree root and subtrees of small sizes.
In Section~\ref{sec:simple-case}, we consider the case in which every tree in the input graph is the latter type and the degrees of roots are within a small interval.
We deal with the general case and prove Theorem~\ref{thm:forest-isomorphism} in Section~\ref{sec:general-case}.
Due to limitations of space, some proofs in Section~\ref{sec:partitioning-oracle}, \ref{sec:simple-case}, \ref{sec:general-case} are presented in Appendix~\ref{sec:missing-po}, \ref{section:approx-alg-for-s-rooted-trees}, \ref{appendix:missing-general-case}.
We prove Theorem~\ref{thm:every-property-is-testable} in Appendix~\ref{appendix:every-property-is-testable}.
We show the lower bound in Appendix~\ref{sec:lower-bound}.

\section{Preliminaries}\label{sec:preliminaries}
For an integer $n$, we denote by $[n]$ the set $\{1, 2, \ldots, n\}$ and denote by $\bbN_{<n}$ (resp. $\bbN_{\le n}$) the set $\{0,1,\ldots,n-1\}$ (resp. $\{0,1,\ldots,n\}$).

Let $G=(V,E)$ be a graph.
For a vertex $v$, $\deg_G(v)$ denotes the \emph{degree} of $v$.
We omit the subscript if it is clear from the context.
For a set of vertices $S \subseteq V$, $G[S]$ denotes the subgraph \emph{induced} by $S$.
For graphs $G$ and $H$ with the same number of vertices, 
the \emph{distance} $d(G,H)$ between $G$ and $H$ is defined as the minimum number of edges that need to be added or removed to make $G$ isomorphic to $H$.
Formally, 
\begin{align*}
  d(G,H) = \min_{\pi} & (\#\{(u,v) \in E(G) \mid  (\pi(u),\pi(v)) \not \in E(H)\} \\
                             & + \#\{(u,v) \not \in E(G) \mid  (\pi(u),\pi(v))  \in E(H)\}),
\end{align*}
where $\pi$ is over bijections from $V(G)$ to $V(H)$.
We extend the definition of $d(G, H)$ for the case in which $G$ and $H$ have different number of vertices by adding a sufficient number of isolated vertices.
For a graph $G$ and an integer $k$, let $G+kv$ be the graph consisting of $G$ and $k$ isolated vertices.
If $|V(G)| > |V(H)|$, we define $d(G, H) = d(G, H+(|V(G)| - |V(H)|)v)$.
Similarly, if $|V(G)| < |V(H)|$, we define $d(G, H) = d(G+(|V(H)| - |V(G)|)v, H)$.

For an integer $s \ge 1$,
we call a tree $T$ an \emph{$s$-rooted tree} if $T$ contains a (unique) vertex $v$ with $\deg(v) \geq s+1$ such that each subtree of $v$ contains at most $s$ vertices.
The vertex $v$ is called the \emph{root vertex} of $T$ and is denoted by $\rt(T)$.
We call a tree $T$ an \textit{$s$-bounded-degree tree} if every vertex in $T$ has a degree of at most $s$.
We call a tree $T$ an \textit{$s$-tree} if it is an $s$-rooted tree or an $s$-bounded-degree tree.
To designate a union of trees, we use the term ``forest.''
For example, an \emph{$s$-rooted forest} means a disjoint union of $s$-rooted trees.

\section{Partitioning Oracle}\label{sec:partitioning-oracle}

In this section, we show that,
for any $\varepsilon > 0$,
there exists $s=s(\varepsilon)$ such that we can partition any forest into an $s$-forest by removing at most $\varepsilon n$ edges.
Then, we show that we can provide an oracle access to the $s$-forest, which we call the \emph{partitioning oracle}.
We refer to a vertex with degree more than $s$ in the original graph $G$ as a \emph{high-degree} vertex.

\newcommand{\sP}{ {s_{\ref{lmm:partition-to-s-forest}}}}
\newcommand{\qP}{ {q_{\ref{lmm:partition-to-s-forest}}}}
\newcommand{\tauP}{ {\tau_{\ref{lmm:partition-to-s-forest}}}}
\newcommand{\deltaP}{ {\delta_{\ref{lmm:partition-to-s-forest}}}}
\newcommand{\varepsilonP}{ {\varepsilon_{\ref{lmm:partition-to-s-forest}}}}

\newcommand{\Vhigh}{V_h}
\newcommand{\Vlow}{V_l}
\begin{lemma}[Partitioning oracle]\label{lmm:partition-to-s-forest}
  Suppose that we have an oracle access $\calO_G$ to a forest $G$ in the adjacency list model.
  Then for every $\varepsilon > 0$,
  we can provide an oracle access $\calO_G'$ to a graph $G'$ with the following properties:
  \begin{enumerate}
    \setlength{\itemsep}{0pt}
    \item $G'$ is an $s$-forest for some $s = \sP(\varepsilon)$. $G'$ depends only on $G$ and $\varepsilon$.
    \item $G'$ is obtained from $G$ by removing at most $\varepsilon n$ edges.
    \item Let $\Vhigh$ be high-degree vertices in $G$. Then, each tree in $G'$ contains at most one vertex from $\Vhigh$.
  \end{enumerate}
  The oracle $\calO'_G$ supports \emph{alive-edge queries}:
  Given a vertex $v$ and an integer $i$, the oracle returns whether the $i$-th edge incident to $v$ in $G$ still exists in $G'$.
  For each alive-edge query, the oracle issues $O(1/\varepsilon^2)$ queries to $\calO_G$.
  The output of $\calO'_G$ is deterministically calculated.
  Moreover,
  if $G$ and $H$ are isomorphic and $\Psi:V(G) \rightarrow V(H)$ is an isomorphism,
  $\calO'_G(e) = \calO'_H(\Psi(e))$ holds for every edge $e \in E(G)$.
\end{lemma}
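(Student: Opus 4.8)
The plan is to first fix the deterministic target graph $G'$ as a function of $(G,\varepsilon)$, and then to show that membership of an edge in $G'$ can be decided by a bounded-depth local exploration around its endpoints. Set $s = \sP(\varepsilon) := \lceil c/\varepsilon \rceil$ for a suitable absolute constant $c$; the high-degree vertices $\Vhigh$ are those with $\deg_G(v) > s$. The construction of $G'$ from $G$ proceeds in two independent, rule-based phases. \emph{Phase 1 (isolating high-degree vertices):} for every high-degree vertex $v$, and every neighbor $u$ of $v$, we remove the edge $\{u,v\}$ unless the subtree hanging off $v$ through $u$ (i.e.\ the component of $G - v$ containing $u$) has size at most $s$. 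After this phase, each high-degree vertex becomes the root of an $s$-rooted tree, and no two high-degree vertices survive in the same component (any path between two high-degree vertices passes through an edge whose ``hanging subtree'' on one side is large, hence cut); this gives property~3. \emph{Phase 2 (breaking remaining large components):} the components not containing a high-degree vertex now have maximum degree $\le s$ but may still be large; we apply a standard deterministic recursive centroid-style decomposition (as used for hyperfinite / bounded-treewidth partitioning) that, using a canonical tie-breaking rule based on the labels of $V$, cuts $O(\varepsilon n)$ edges and leaves components that are $s$-bounded-degree trees. Each rule in both phases is stated purely in terms of subtree sizes and vertex labels, so $G'$ depends only on $G$ and $\varepsilon$, and is isomorphism-equivariant once we make the tie-breaking rely on a canonical ordering that is transported by $\Psi$ — see the last point below.

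The second step is the query bound. Given an alive-edge query $(v,i)$, let $e=\{v,u\}$ be the $i$-th edge at $v$. To answer it we need to know (a) whether $e$ was cut in Phase~1, which requires only deciding whether $\deg_G(v)$ or $\deg_G(u)$ exceeds $s$ and, if so, whether the hanging subtree on the low-degree side has size $\le s$ — this is a truncated BFS that explores at most $s+1$ vertices and hence costs $O(s^2) = O(1/\varepsilon^2)$ queries (each explored vertex costs one degree query plus up to $\deg$ neighbor queries, but we abort as soon as $s+1$ vertices are seen, so the total is $O(s)$ vertices each of cost $O(s)$); and (b) if $e$ survived Phase~1, whether it was cut in Phase~2. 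The crucial design choice is that the Phase~2 decomposition is \emph{local}: the centroid/ladder decomposition of a bounded-degree tree is arranged so that whether a particular edge is cut depends only on the $O(s)$-radius (or better, $O(\log s)$-level) neighborhood of that edge within its component, again touching only $O(s)$ vertices. Hence the whole query costs $O(1/\varepsilon^2)$ queries to $\calO_G$, as claimed. (If the clean local description of the bounded-degree phase is awkward, one can instead observe that after Phase~1 the surviving components that were \emph{not} already small are exactly the $s$-bounded-degree trees, and note that trees of type~(i) need no further cutting at all for the statement as phrased — i.e.\ ``$s$-forest'' already permits unbounded-size $s$-bounded-degree trees — in which case Phase~2 is vacuous and the query bound is immediate. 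I would check the definition of $s$-forest in Section~\ref{sec:preliminaries} and take whichever reading keeps Phase~2 trivial.) Indeed $s$-bounded-degree trees are allowed to be arbitrarily large, so Phase~2 can be dropped entirely, and only Phase~1 is needed.

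The third step is the edge-count bound (property~2). Only Phase~1 cuts edges. Charge each cut edge $\{u,v\}$ (with $v$ high-degree) to the $\ge s$ vertices lying on the $v$-side beyond $u$ together with $u$ itself — wait, that side may be large; instead charge it to the high-degree endpoint's ``budget.'' Better: observe that when we cut $\{u,v\}$ with $v$ high-degree and the $u$-side small, we may instead charge to the small side, but a small side of size $\le s$ could be charged many times. The right accounting: a cut edge $\{u,v\}$ with $v\in\Vhigh$ has the property that the component of $G-\{u,v\}$ containing $u$ has more than $s$ vertices (else we would not have cut it). These ``big sides'' for distinct cut edges at distinct high-degree vertices are pairwise disjoint as subsets of $V$ (they are nested or disjoint, and bigness forces disjointness after removing the at most one exceptional nesting), so the number of such edges is at most $n/s \le \varepsilon n / c \le \varepsilon n$. (The at most one edge per high-degree vertex whose big side ``wraps around'' toward another high-degree vertex is handled by property~3's argument and contributes at most one per high-degree vertex, also $\le n/s$.) Thus at most $O(n/s) = \varepsilon n$ edges are removed. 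Finally, isomorphism-equivariance: all the above rules refer only to (i) $\deg_G$, (ii) sizes of components of $G$ minus a vertex or an edge, and (iii) possibly a canonical vertex ordering; if $\Psi$ is an isomorphism from $G$ to $H$ then it preserves (i) and (ii) exactly, and we may take the canonical ordering on $V(G)$ to be pulled back through $\Psi$ from $V(H)$ (or, cleanly, note that $s$-forest membership of an edge under Phase-1-only rules never consults vertex labels at all), so $\calO'_G(e)=\calO'_H(\Psi(e))$ for every $e\in E(G)$.

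The main obstacle I anticipate is the charging argument for property~2 together with property~3: one must argue that the ``big sides'' of the cut edges can be chosen pairwise disjoint (so that their number is at most $n/s$), which requires a careful case analysis of how the subtrees hanging off different high-degree vertices can be nested along a path, and simultaneously that along any path between two high-degree vertices at least one incident ``hanging subtree'' is big (forcing a cut, hence separation). Making the exploration needed to answer a query provably touch only $O(s)$ vertices — in particular arguing that one never needs to walk an unbounded distance to certify that a hanging subtree is ``big'' (one stops as soon as $s+1$ vertices are seen) — is routine once Phase~2 is eliminated.
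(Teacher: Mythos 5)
Your construction, query procedure, and equivariance argument all match the paper's: your Phase-1 rule (cut an edge at a high-degree vertex iff the hanging subtree on the other side has more than $s$ vertices) produces exactly the graph the paper builds via its three explicit removal rules (high--high edges, edges to large components of $G[\Vlow]$, and edges to small components of $G[\Vlow]$ adjacent to two or more high-degree vertices); your observation that Phase 2 is vacuous because $s$-bounded-degree trees may be arbitrarily large is exactly right; and the truncated BFS costing $O(s^2)$ queries is the paper's procedure, as is the remark that the rule consults only degrees and component sizes and hence commutes with any isomorphism. The genuine gap is where you anticipated it: property 2. Your claim that the ``big sides'' of distinct cut edges are pairwise disjoint (``nested or disjoint, and bigness forces disjointness after removing the at most one exceptional nesting'') is false. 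Take a path $v_1-v_2-\cdots-v_k$ of high-degree vertices, each carrying more than $s$ pendant leaves: every path edge is cut, and the big sides taken away from $v_i$ form a strictly nested chain $S_1\supset S_2\supset\cdots\supset S_{k-1}$, all of size greater than $s$. No single ``exceptional nesting'' can be removed to restore disjointness, and a disjointness argument would cap the number of these edges at $n/s$, whereas only the weaker (but still sufficient) bound $k-1\le|\Vhigh|\le 2n/s$ actually holds.

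The count is nevertheless true, and the paper proves it by splitting the removed edges into the three types above and bounding each by $O(n/s)$ directly: there are at most $|\Vhigh|$ edges between high-degree vertices and at most $n/s$ large components of $G[\Vlow]$, and the forest structure of the bipartite contraction (high-degree vertices on one side, components of $G[\Vlow]$ on the other) bounds the number of edges incident to large components and to multiply-adjacent small components by $O(|\Vhigh|+n/s)$. With $|\Vhigh|\le 2n/s$ this yields at most $11n/s=\varepsilon n$ removed edges for $s=11/\varepsilon$. Replace your disjoint-charging scheme by this per-type accounting and the rest of your write-up goes through.
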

\begin{proof}
  We set $s = \frac{11}{\varepsilon}$.
  If the degree of a vertex is at most $s$, we call it \emph{low-degree}.
  Let $\Vhigh$ and $\Vlow$ be the sets of high-degree and low-degree vertices in $G$, respectively.
  We call a connected component in $G[V_l]$ \emph{large} if it has more than $s$ vertices and \emph{small} otherwise.
  From the definition, there are at most $2n/s$ high-degree vertices in $G$ and at most $n/s$ large components in $G[V_l]$.

  We first give a polynomial-time algorithm that outputs an $s$-forest from the input forest $G$.
  First, we remove edges $(u,v)$ with $u,v \in \Vhigh$ from $G$.
  Owing to this, the resulting graph can be seen as a bipartite graph, where the left part is $\Vhigh$ and the right part consists of components in $G[V_l]$.
  Now for each small component $C$ in $G[V_l]$, if it is adjacent to two or more vertices in $\Vhigh$,
  we remove all the edges connecting $C$ and $\Vhigh$.
  Further, we remove all the edges between large components in $G[V_l]$ and $\Vhigh$.
  We define $G'$ as the resulting graph.
  As every subtree of each high-degree vertex is small, $G'$ is an $s$-forest.
  Since each connected component of $G'$ contains at most one high-degree vertex, the third property holds.
  Further, since any large small-degree connected component is not connected to a high-degree vertex, the first property holds.
  The total number of removed edges is at most $|V_h| + 2|V_h| + (n/s + 2|V_h|) = \varepsilon n$.
  Thus, the second property also holds.

  We next show how to provide an oracle access to $G'$.
  We can support alive-edge queries as follows:
  Let $e=(v,w)$ be the queried edge.
  For $v$ and $w$, we check if they are in $\Vhigh$, in a large component of $G[V_l]$, or in a small component of $G[V_l]$.
  If they are in a small component of $G[V_l]$,
  we check whether the component is incident to two or more vertices in $\Vhigh$.
  We can check these properties by performing a BFS in $G[V_l]$:
  If the BFS stops before visiting more than $s$ vertices, it means that the vertex belongs to a small component.
  Otherwise, the vertex belongs to a large component.
  From this information, we can answer the alive-edge query.
  The total number of queries to $\calO_G$ is $O(s^2)$.
  From the argument above, answers to alive-edge queries are determined deterministically.
\end{proof}

Since our construction of $G'$ is deterministic and we remove at most $\varepsilon n$ edges, the following corollary holds.

\begin{corollary}\label{corollary:through-oracle}
  Let $G$ and $H$ be two forests of $n$ vertices,
  and $G'$ and $H'$ be the graphs obtained from $G$ and $H$ by the partitioning oracle with a parameter $\frac{\varepsilon}{4}$, respectively.
  If $d(G,H)=0$, then $d(G',H')=0$ holds.
  If $d(G, H) \ge \varepsilon n$, then $d(G',H') \ge \varepsilon n/2$ holds.
  \qed
\end{corollary}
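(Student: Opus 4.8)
The plan is to deduce the corollary from two facts furnished by Lemma~\ref{lmm:partition-to-s-forest}, applied with parameter $\varepsilon/4$ to both $G$ and $H$: first, $G'$ (resp.\ $H'$) is a spanning subgraph of $G$ (resp.\ $H$) from which at most $\frac{\varepsilon}{4}n$ edges have been deleted, so that $d(G,G')\le\frac{\varepsilon}{4}n$ and $d(H,H')\le\frac{\varepsilon}{4}n$; second, the partitioning oracle is isomorphism-equivariant, i.e.\ if $\Psi:V(G)\to V(H)$ is an isomorphism then $\calO'_G(e)=\calO'_H(\Psi(e))$ for every $e\in E(G)$. I will combine these with the fact that $d(\cdot,\cdot)$ is a metric on isomorphism classes (after padding with isolated vertices as in Section~\ref{sec:preliminaries}); in particular it obeys the triangle inequality, which is checked by composing the bijections that realize the two distances.

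For the first assertion, assume $d(G,H)=0$ and fix an isomorphism $\Psi:V(G)\to V(H)$. Since $\Psi$ is a bijection on vertices and, by equivariance, an edge $e\in E(G)$ survives in $G'$ if and only if $\Psi(e)\in E(H)$ survives in $H'$, the map $\Psi$ carries $E(G')$ bijectively onto $E(H')$. As $G'$ and $H'$ have the same vertex set as $G$ and $H$ respectively (the oracle deletes only edges), $\Psi$ is an isomorphism from $G'$ to $H'$, hence $d(G',H')=0$.

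For the second assertion, suppose $d(G,H)\ge\varepsilon n$. Applying the triangle inequality twice and then the bounds $d(G,G'),d(H',H)\le\frac{\varepsilon}{4}n$ gives
\[
  \varepsilon n \le d(G,H) \le d(G,G') + d(G',H') + d(H',H) \le \tfrac{\varepsilon}{4}n + d(G',H') + \tfrac{\varepsilon}{4}n,
\]
so $d(G',H') \ge \varepsilon n - \frac{\varepsilon}{2}n = \frac{\varepsilon}{2}n$, as claimed. I do not expect a substantive obstacle here; the only point requiring a line of care is the inequality $d(G,G')\le\frac{\varepsilon}{4}n$, for which one uses the identity bijection on $V(G)=V(G')$, so that $d(G,G')$ is at most $|E(G)\triangle E(G')|$, which equals the set of edges removed by the oracle and is bounded by $\frac{\varepsilon}{4}n$ via the second property of Lemma~\ref{lmm:partition-to-s-forest} invoked with parameter $\varepsilon/4$.
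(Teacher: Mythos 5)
Your proposal is correct and matches the paper's (unwritten) argument: the paper justifies the corollary exactly by the determinism/equivariance of the partitioning oracle (for the isomorphic case) and the bound of $\frac{\varepsilon}{4}n$ removed edges combined with the triangle inequality (for the far case). The arithmetic $\varepsilon n - 2\cdot\frac{\varepsilon}{4}n = \frac{\varepsilon}{2}n$ checks out.
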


Thus,
we can preprocess the graph using the partitioning oracle,
and it is sufficient to show that we can test isomorphism between two $s$-forests.
We consider $s$-bounded-degree forests and $s$-rooted forests separately.
Therefore, we construct a tester for the isomorphism of each corresponding tree in $G'$ and $H'$.
To test isomorphism between $s$-bounded-degree forests, we use a technique from \cite{Newman:2013hg}.
We will develop a technique to test isomorphism between $s$-rooted forests in $G'$ and $H'$ under some conditions in the next section.

One technical issue of the partitioning oracle is
that we cannot obtain the exact degree $\deg_{G'}(v)$ of a vertex $v$ in $G'$
since $\deg_G(v)$ can be up to $n$.
Instead of computing the exact degree, we approximate the degree by randomly sampling incident edges as follows:
Choose $i \in [\deg_G(v)]$ uniformly at random
and apply the alive-edge query to the $i$-th incident edge.
For a parameter $q \ge 1$, repeat this $q$ times.
Then, count the number of existing edges. Let $c$ be this count.
We use the value $\frac{c\deg_G(v)}{q}$ as an approximation to $\deg_{G'}(v)$ and denote it by $\widetilde{\deg}_{G',q}(v)$.  
The standard argument using Chernoff's bound gives the following lemma.

\newcommand{\qD}{q_{\ref{lemma:approx-degree}}}
\begin{lemma}
  \label{lemma:approx-degree}
  Let $G'$ be the graph obtained from a graph $G$ by the partitioning oracle.
  For any $\delta,\tau \in (0,1)$ and a vertex $v$, there exists a polynomial $q = \qD(\delta, \tau)$ such that $\Pr[|\widetilde{\deg}_{G',q}(v) - \deg_{G'}(v)| \le \delta\deg_G(v)] \geq 1-\tau$.
\end{lemma}

There is another issue of the partitioning oracle.
If most parts of edges incident to a high-degree vertex $v$ (i.e., a vertex with degree more than $s$) are removed by the partitioning oracle,
the approximation $\widetilde{\deg}_{G',q}(v)$ may have a considerably large relative error.
However, we can ensure that the number of such high-degree vertices $v$ is sufficiently small.
To make the argument more formal, for an integer $R > s$, we call a vertex $v$ \emph{$R$-bad} if $R \cdot \max(\deg_{G'}(v), 1) \le \deg_{G}(v)$.
Otherwise, we call $v$ \emph{$R$-good}.
Note that an $R$-bad vertex must satisfy $\deg_{G}(v) \ge R > s$.
Thus, an $R$-bad vertex must be a high-degree vertex in $G$.
Further, we call an $s$-rooted tree \emph{$R$-bad} (resp. \emph{$R$-good}) if the root vertex is $R$-bad (resp. $R$-good).
Then, the number of vertices in $R$-bad $s$-rooted trees is bounded as follows.

\begin{lemma}
  \label{lemma:R-bad}
  Let $G'$ be the $s$-forest obtained from a graph $G$ by the partitioning oracle.
  For any $R > s$, the number of vertices in $R$-bad $s$-rooted trees of $G'$ is at most $\frac{4sn}{R}$.
\end{lemma}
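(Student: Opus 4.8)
The plan is to bound the number of $R$-bad $s$-rooted trees and then multiply by an upper bound on the size of each such tree. First I would observe that, by the definition in the excerpt, an $R$-bad vertex $v$ satisfies $R \cdot \max(\deg_{G'}(v), 1) \le \deg_G(v)$, so in particular $\deg_G(v) \ge R > s$, meaning $v$ is high-degree in $G$; moreover $\deg_{G'}(v) \le \deg_G(v)/R$. Since $G'$ is obtained from $G$ by removing at most $\varepsilon n$ edges (Lemma~\ref{lmm:partition-to-s-forest}, property~2) — but more usefully, since $\sum_{v} \deg_G(v) = 2|E(G)| \le 2n$ because $G$ is a forest — we can charge each $R$-bad root vertex against its $G$-degree. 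Each $R$-bad vertex $v$ has $\deg_G(v) \ge R$, so the number of $R$-bad vertices is at most $2n/R$. (One can get a sharper count using that high-degree vertices are $R$-bad only if many of their incident edges were removed, but the crude bound already suffices.)

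Next I would bound the number of vertices inside a single $R$-bad $s$-rooted tree $T$. By definition of an $s$-rooted tree, every subtree hanging off the root $v = \rt(T)$ has at most $s$ vertices, and the root has exactly $\deg_{G'}(v)$ subtrees in $G'$; hence $|V(T)| \le 1 + s \cdot \deg_{G'}(v) \le 1 + s \cdot \deg_G(v)/R$, using $R$-badness. Summing over all $R$-bad $s$-rooted trees, with $v$ ranging over their (distinct, by property~3 of Lemma~\ref{lmm:partition-to-s-forest}) root vertices, the total number of vertices is at most
\begin{align*}
  \sum_{v \text{ $R$-bad root}} \left(1 + \frac{s \deg_G(v)}{R}\right)
  \le \frac{2n}{R} + \frac{s}{R}\sum_{v} \deg_G(v)
  \le \frac{2n}{R} + \frac{2sn}{R}
  \le \frac{4sn}{R},
\end{align*}
where the middle inequality uses the count $2n/R$ of $R$-bad vertices for the first term and $\sum_v \deg_G(v) \le 2n$ for the second, and the last inequality absorbs the $2n/R$ term into $2sn/R$ since $s \ge 1$. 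This gives the claimed bound.

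The main obstacle I anticipate is being careful about which degree ($G$ versus $G'$) governs the size of an $s$-rooted tree and making sure the root vertices being summed over are genuinely distinct so that no edge of $G$ is double-charged — this is exactly where property~3 of the partitioning oracle (each tree of $G'$ contains at most one high-degree vertex, hence at most one $R$-bad vertex, which must be its root) is essential. A secondary subtlety is the $\max(\deg_{G'}(v),1)$ in the definition of $R$-bad: when $\deg_{G'}(v) = 0$ the vertex is isolated in $G'$ and lies in no $s$-rooted tree, so such vertices contribute nothing and can be ignored; otherwise $\deg_{G'}(v) \ge 1$ and the bound $|V(T)| \le 1 + s\deg_{G'}(v)$ is exactly what the $s$-rooted structure gives.
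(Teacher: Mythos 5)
Your proof is correct and follows essentially the same route as the paper: bound the number of $R$-bad roots by $2n/R$ via $\sum_v \deg_G(v) \le 2n$, bound each $R$-bad tree's size by $1 + s\deg_{G'}(v) \le 1 + s\deg_G(v)/R$, and sum. The handling of the $\max(\deg_{G'}(v),1)$ case and the use of property~3 to ensure distinct roots match the paper's (implicit) reasoning.
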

\begin{proof}
  Let $B$ be the set of $R$-bad vertices and $B'$ be the set of vertices in $R$-bad $s$-rooted trees.
  Since there are at most $2n/R$ vertices with $\deg_G(v) \ge R$, $|B| \le 2n/R$ holds.
  From the third property of Lemma~\ref{lmm:partition-to-s-forest},
  each $s$-rooted tree in $G'$ contains at most one high-degree vertex in $G$.
  Hence,
  \[
  |B'| \le \sum_{v \in B} (s\cdot \deg_{G'}(v)+1)
  \le \sum_{v \in B} (\frac{s\deg_{G}(v)}{R}+1) \le \frac{4sn}{R}.
  \]
\end{proof}

By Lemma~\ref{lemma:R-bad}, random vertex sampling
does not pick up any $R$-bad vertex with high probability if $R$ is chosen sufficiently large.
In Section~\ref{sec:simple-case}, assuming that every s-rooted tree is $R$-good in the input graph,
we will construct a tester for forest-isomorphism.
In Section~\ref{sec:general-case}, combining Lemma~\ref{lemma:R-bad} and the tester given in Section~\ref{sec:simple-case},
we will construct a tester for any $s$-forest.

For later use, we define auxiliary procedures on $s$-rooted trees.
First, the following lemma is useful.
\begin{lemma}
  \label{lemma:find-root}
  Given a vertex $v \in V(G')$ in an $s$-rooted tree $T$,
  there is an algorithm that finds a root vertex $\rt(T)$ with query complexity $O(\poly(s))$.
\end{lemma}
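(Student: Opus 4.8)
The plan is to find $\rt(T)$ by repeatedly moving toward the high-degree vertex of $T$, using the fact that in an $s$-rooted tree the root is the unique vertex whose removal leaves every component with at most $s$ vertices. First I would run a bounded BFS from $v$ in $G'$ (via alive-edge queries to $\calO'_G$), exploring at most $s+1$ vertices. If this BFS terminates having seen at most $s$ vertices, then $v$ lies in a subtree of size $\le s$ hanging off the root, and the root is exactly the unique vertex at which one of the explored branches ``opens up'' — concretely, when we BFS from $v$ we will discover a vertex $w$ that has a neighbor through which more than $s$ vertices are reachable (i.e. the BFS in the $v$-side of $w$ stays small but through one edge it does not); that $w$ is $\rt(T)$. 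If instead the initial BFS from $v$ already exceeds $s$ vertices, then $v$ is itself on the ``large side,'' and since the only way a subtree can have size $> s$ in an $s$-rooted tree is to contain the root, we know $\rt(T)$ is reachable from $v$; we then walk greedily: at the current vertex $u$, for each incident alive edge $(u,x)$ we do a size-$(s+1)$-bounded BFS on the $x$-side (the component of $G' - u$ containing $x$), and we move to the unique neighbor $x$ for which that side has more than $s$ vertices. We repeat until no neighbor has a large side; the vertex reached is $\rt(T)$.

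The correctness comes from two observations about $s$-rooted trees. First, $\rt(T)$ exists and is unique by definition, and every subtree of $\rt(T)$ has $\le s$ vertices, so from $\rt(T)$ no neighbor has a large side — the walk terminates there. Second, from any non-root vertex $u$, exactly one neighbor $x$ has a large side, namely the neighbor on the path from $u$ to $\rt(T)$: all other sides are subtrees not containing the root, hence (being subtrees of $T$ that avoid $\rt(T)$) are contained in subtrees of $\rt(T)$ and so have size $\le s$; the remaining side contains $\rt(T)$ and all of $\rt(T)$'s other subtrees plus the root's degree, which is $> s$. So the walk is well-defined, strictly decreases the distance to $\rt(T)$, and the path from $v$ to $\rt(T)$ has length at most $s$ (since the $v$-side of $\rt(T)$, a subtree of the root, has $\le s$ vertices, and the walk stays within it until the last step). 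Hence the walk makes at most $s$ steps.

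For the query bound: each step of the walk examines at most $\deg_{G'}(u) \le s$ incident alive edges of the current vertex $u$ — here I use that $u$ is not $\rt(T)$ until the final vertex, so $\deg_{G}(u) \le s$ and a degree query plus up to $s$ alive-edge queries suffice — and for each such edge performs a BFS that explores at most $s+1$ vertices, each vertex of degree at most $s$, so $O(s^2)$ alive-edge queries per edge by Lemma~\ref{lmm:partition-to-s-forest} each alive-edge query costs $O(1/\varepsilon^2) = O(s^2)$ queries to $\calO_G$, but we only count queries to $\calO'_G$ here. Thus each step costs $O(s^3)$ alive-edge queries, and with at most $s$ steps plus the initial $O(s)$-vertex BFS the total is $O(s^4) = O(\poly(s))$. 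The only subtlety — and the step I expect to need the most care — is the initial case analysis: when the first BFS from $v$ stays small we must correctly identify $\rt(T)$ as the vertex inside the explored ball at which a large side attaches, rather than starting a walk from $v$ directly; handling the boundary (when $v$ itself is adjacent to $\rt(T)$, or when $v = \rt(T)$) and confirming that the ``large side'' of that attaching vertex is genuinely $> s$ even though we only explored $s+1$ vertices total is where the bookkeeping lives, but none of it is deep.
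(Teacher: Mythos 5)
The paper's proof is one line: BFS from $v$ in $G'$ and stop at the first vertex $w$ with $\deg_G(w)>s$; by property~3 of Lemma~\ref{lmm:partition-to-s-forest} the tree $T$ contains at most one vertex of $\Vhigh$, and $\rt(T)$ (having $\deg_{G'}\ge s+1$, hence $\deg_G\ge s+1$) is that vertex, so the first high-degree vertex encountered is the root. Your structural walk --- move to the unique neighbor whose side of $G'-u$ has more than $s$ vertices --- is combinatorially correct: sides avoiding the root have at most $s-1$ vertices, the side containing the root has at least $s+1$, the walk strictly approaches the root and terminates there after at most $s$ steps. But it deliberately avoids the degree characterization, and that is where it breaks.

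The gap is in the query complexity, in two places. First, to certify that a side is \emph{large}, your size-$(s+1)$-bounded BFS must discover at least $s+1$ vertices on that side; but the large side, before expanding the root, contains only the at most $s-1$ vertices of the subtree above $u$ plus the root itself, i.e.\ at most $s$ vertices. So the BFS is forced to expand $\rt(T)$, which means enumerating alive edges among the $\deg_G(\rt(T))$ incident edges of the root in $G$ --- and $\deg_G(\rt(T))$ can be $\Theta(n)$, with no guarantee (no $R$-goodness is assumed here) that an alive edge is found after few probes. Second, your termination test ``no neighbor has a large side'' must be run at the root itself, where enumerating the incident alive edges again costs up to $\deg_G(\rt(T))$ queries rather than $O(s)$; your degree bound $\deg_G(u)\le s$ explicitly holds only ``until the final vertex.'' Both problems disappear the moment you allow yourself a degree query and property~3 of Lemma~\ref{lmm:partition-to-s-forest}: a vertex with $\deg_G>s$ in $T$ must be $\rt(T)$, so you can stop as soon as the BFS touches it --- but with that observation the entire walk is unnecessary and you recover the paper's proof. (A minor additional remark: your first case, where the initial BFS from $v$ exhausts after at most $s$ vertices, never occurs, since an $s$-rooted tree has at least $s+2$ vertices and the BFS in $G'$ is not confined to the subtree of the root containing $v$.)
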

\begin{proof}
  Perform a BFS in $G'$ starting from the vertex $v$ until we find a high-degree vertex.
  The third property of Lemma~\ref{lmm:partition-to-s-forest} guarantees that
  we can find the high-degree vertex and it is $\rt(T)$.
\end{proof}

Let $\mathcal{T}(s) = \{T^{(1)}, T^{(2)}, \ldots, T^{(t(s))}\}$ be the family of all rooted trees with at most $s$ vertices,
where $t(s) = |\mathcal{T}(s)|$.
For an $s$-rooted tree $T$, let $\Freq(T)$ be the $t(s)$-dimensional vector whose $i$-th coordinate is the number of subtrees of $\rt(T)$ isomorphic to $T^{(i)}$.
As the root vertex uniquely exists in an $s$-rooted tree $T$,
there is a unique $t(s)$-dimensional vector corresponding to $T$.

Since the degree of a root vertex can be up to $n$, we cannot exactly compute $\Freq(T)$.
Instead, we approximate $\Freq(T)$ by randomly sampling subtrees in $T$.
Given the root vertex $v$ of an $s$-rooted tree $T$,
we can define a procedure that approximates $\Freq(T)$.
We denote the procedure by $\widetilde\Freq_q(v)$.
The procedure $\widetilde\Freq$ randomly samples an edge incident to $v$ in $G$ (rather than $G'$) and invokes the alive-edge query.
If the edge is alive, the procedure performs a BFS from the edge to obtain the whole subtree rooted at the edge.
The procedure repeats this $q$ times, where $q$ is the parameter of the procedure.
We give the procedure $\widetilde\Freq$ in Algorithm~\ref{algorithm:estimate-freq}.
Again, Chernoff's bound guarantees the following.

\begin{algorithm}[!h]
  \caption{Given the root vertex $v$ of an $s$-rooted tree $T$ and an integer $q$,
  the procedure ${\widetilde{\Freq}}_q(v)$ returns an approximation to $\Freq(T)$ by randomly sampling subtrees in $T$.
  The integer $q$ represents the number of samples.}
  \label{algorithm:estimate-freq}
\begin{algorithmic}[1]
  \Procedure{${\widetilde{\Freq}}_q(v)$}{}
  \State{Let $\widetilde{\bfF}$ be the all-zero $t(s)$-dimensional vector.}
  \For{$j=1,\dots,q$}
    \State{Choose an integer $k$ from $[\deg_G(v)]$ uniformly at random.}
    \State{Ask whether the $k$-th edge $(v, u)$ incident to $v$ is alive.}
    \If{the edge is alive}
      \State{Perform a BFS from $u$ to obtain the whole subtree rooted at $u$.}
      \State{Suppose that the subtree is isomorphic to $T^{(i)}$. \label{alg-line:choose-t} Then, set $\widetilde{\bfF}[i] = \widetilde{\bfF}[i] + 1$.}
    \EndIf
  \EndFor

  \State{\ret $(\deg_G(v) / q) \cdot \widetilde{\bfF}$}
  \EndProcedure
\end{algorithmic}
\end{algorithm}

\begin{lemma}
  \label{lemma:approximate-freq}
  For $s\geq 1$ and $\delta,\tau \in (0,1)$,
  there exists a polynomial $q=\qI(s, \delta, \tau)$ such that
  for any $s$-rooted tree $T$,
  $|\Freq(T)[i] -\widetilde{\Freq}_q(\rt(T))[i]| \le \delta \deg_G(v)$
  for all $i \in [t(s)]$
  with probability at least $1-\tau$.
\end{lemma}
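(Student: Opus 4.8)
The plan is to analyze the procedure $\widetilde{\Freq}_q(v)$ in Algorithm~\ref{algorithm:estimate-freq} as a straightforward Monte-Carlo estimator for each coordinate of $\Freq(T)$, and then apply a Chernoff bound together with a union bound over the $t(s)$ coordinates. First I would fix an $s$-rooted tree $T$ with root $v = \rt(T)$, write $D = \deg_G(v)$, and fix a type index $i \in [t(s)]$. For a single iteration $j$ of the loop, the index $k$ is chosen uniformly from $[D]$; let $X_j^{(i)}$ be the indicator of the event that the $k$-th edge $(v,u)$ incident to $v$ is alive in $G'$ \emph{and} the subtree of $\rt(T)$ hanging off $u$ is isomorphic to $T^{(i)}$. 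Since the alive edges incident to $v$ are exactly the edges of $T$ at $v$, and each such alive edge corresponds to exactly one subtree of $\rt(T)$, we have $\E[X_j^{(i)}] = \Freq(T)[i]/D$; note $\Freq(T)[i] \le D$ always. The estimator returned in coordinate $i$ is $(D/q)\sum_{j=1}^q X_j^{(i)}$, whose expectation is exactly $\Freq(T)[i]$.

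Next I would bound the deviation. The variables $X_1^{(i)},\dots,X_q^{(i)}$ are i.i.d.\ $\{0,1\}$-valued, so by Hoeffding's (or the additive Chernoff) bound,
\[
  \Pr\Bigl[\,\bigl|\tfrac1q\textstyle\sum_{j=1}^q X_j^{(i)} - \tfrac{\Freq(T)[i]}{D}\bigr| > \delta\,\Bigr] \le 2\exp(-2\delta^2 q).
\]
Multiplying through by $D$, this says $\Pr[\,|\widetilde{\Freq}_q(v)[i] - \Freq(T)[i]| > \delta D\,] \le 2\exp(-2\delta^2 q)$. Taking a union bound over all $i \in [t(s)]$ gives failure probability at most $2t(s)\exp(-2\delta^2 q)$, and this is at most $\tau$ provided $q \ge \frac{1}{2\delta^2}\ln\frac{2t(s)}{\tau}$. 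Since $t(s)$ is finite and depends only on $s$ (it is the number of rooted trees on at most $s$ vertices, which is at most $2^{O(s\log s)}$), the quantity $\frac{1}{2\delta^2}\ln\frac{2t(s)}{\tau}$ is polynomial in $s$, $1/\delta$, and $1/\tau$ in the relevant sense; I would set $\qI(s,\delta,\tau)$ to be (the ceiling of) this bound, which completes the proof.

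The one point requiring a little care — the main (minor) obstacle — is justifying that the BFS in line~7 of the algorithm, started from the alive endpoint $u$, indeed recovers the subtree of $\rt(T)$ hanging off $u$ and nothing more, so that the type $T^{(i)}$ assigned in line~\ref{alg-line:choose-t} is well-defined and agrees with the combinatorial definition of $\Freq(T)$. This follows from the structure of $G'$ guaranteed by Lemma~\ref{lmm:partition-to-s-forest}: the tree $T$ containing $v$ is an $s$-rooted tree, so every subtree of $\rt(T)$ has at most $s$ vertices; hence the BFS from $u$ terminates after exploring at most $s$ vertices, and since $\rt(T)$ is the unique high-degree vertex in $T$, the BFS (which explores away from $v$) never returns through $v$ and so stays within the intended subtree. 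In particular each alive-edge query together with the BFS costs $O(\poly(s))$ queries to $\calO_G$, so the whole procedure uses $\poly(s)\cdot q$ queries; this will be recorded separately where the tester's query complexity is tallied. No other subtlety arises, so the argument is essentially the "standard argument using Chernoff's bound" referenced in the text.
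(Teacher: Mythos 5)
Your proposal is correct and is essentially the paper's own argument: a per-coordinate additive Chernoff/Hoeffding bound giving $q = O(\log(t(s)/\tau)/\delta^2)$, followed by a union bound over the $t(s)$ coordinates. The extra remarks about the BFS correctly recovering each subtree are fine and consistent with the structure guaranteed by Lemma~\ref{lmm:partition-to-s-forest}, though the paper does not spell them out.
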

\begin{proof}
  Let $\qI(s, \delta, \tau) = O(\frac{\log(t(s)/\tau)}{\delta^2})$.
  By Chernoff's bound, it holds that $\Pr[|\Freq(T)[i] -\widetilde{\Freq}_\qI(v)[i]| > \delta \deg_G(v)] < \tau/t(s)$ for each $i$.
  By applying the union bound over all $i \in [t(s)]$, we obtain the lemma.
\end{proof}

It is also useful to approximate the number of vertices in an $s$-rooted tree.
For an $s$-rooted tree $T$, we can define a procedure $\widetilde{\Size}$
that approximates $|V(T)|$ by randomly sampling the subtrees of $T$ and computing the number of vertices in the subtrees.
We give the procedure $\widetilde{\Size}$ in Algorithm~\ref{alg:size}.
the procedure $\widetilde{\Size}$ first computes the approximate degree of the root vertex $v$ of $T$ by $\widetilde{\deg}$
with sufficiently large samples.
If $\widetilde{\deg} = 0$, the procedure just returns $1$ since $T$ looks an isolated vertex.
Otherwise, we randomly sample subtrees in $G'$ $q$ times, where $q$ is the parameter of the procedure.
For each subtree, we compute the number of vertices in the subtree.
To randomly sample the subtrees, we randomly choose an edge in $G$ (rather than $G'$) until we choose an alive edge.
This may take large amount of time since it is possible that most parts of edges incident to $v$ are not alive.
However, if $T$ is guaranteed to be $R$-good for some $R > s$, the following holds.

\begin{algorithm}[!h]
  \caption{Given two integers $q$, $R$ and the root vertex $v$ of an $R$-good $s$-rooted tree $T$,
    returns an approximation to $|V(T)|$ by randomly sampling the subtrees in $T$ and computing the size of the subtrees.
    The integer $q$ represents the number of samples.}
  \label{alg:size}
\begin{algorithmic}[1]
  \Procedure{$\widetilde{\Size}_{G', q, R}(v)$}{}
  \State{Set $q' = \qD(O(\delta/R), O(\tau))$ and compute $\tilde{d} = \widetilde{\deg}_{G', q'}(v)$.} \label{line:compute-tilde-d}
  \State{\textbf{if} $\deg_G(v) < R$ \textbf{then} round $\tilde{d}$ to the nearest integer.} \label{line:rounding}
  \State{\textbf{if} $\tilde{d} = 0$ \textbf{then} \ret 1} \label{line:isolated-vertex}
  \State{$\tilde{S} = 0$}
  \For{$j=1,\dots,q$}
    \Loop \label{line:loop-begin}
      \State{Choose an integer $k \in [\deg_G(v)]$ uniformly at random.}
      \State{Ask whether the $k$-th edge $(v,u)$ incident to $v$ is alive.}
      \State{\textbf{if }the edge is alive\textbf{ then break}}
    \EndLoop
    \State{Perform a BFS from $u$ to obtain the size $t$ of the subtree rooted at $u$.} \label{alg-line:calculate-size}
    \State{$\tilde{S} = \tilde{S} + t$} \label{line:loop-end}
  \EndFor
  \State{\textbf{return} $\tilde{d}\frac{\tilde{S}}{q} + 1$}
  \EndProcedure
\end{algorithmic}
\end{algorithm}

\begin{lemma}
  \label{lemma:approximate-size}
  For any $s, R\ge 1$ and $\delta,\tau \in (0,1)$,
  there exists a polynomial $q=\qII(s, \delta, \tau)$
  such that,
  for any $R$-good $s$-rooted tree $T$,
  $|\widetilde\Size_{G',q, R}(\rt(T)) - |V(T)| | \le \delta |V(T)|$ holds
  with probability at least $1-\tau$.
  The expected number of queries issued by the procedure $\widetilde\Size$
    is $O(\poly(s, R, \delta, \tau))$.
\end{lemma}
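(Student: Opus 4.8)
The plan is to analyze Algorithm~\ref{alg:size} directly and bound its error in two regimes depending on whether the root has degree below or above $R$. First I would set up notation: let $v = \rt(T)$, $d = \deg_{G'}(v)$, $\bar d = \deg_G(v)$, and note that $|V(T)| = 1 + \sum_{\text{alive subtrees}} (\text{size of subtree})$, while the quantity $S := \tilde S / q$ computed by the algorithm is an empirical average of the sizes of $q$ i.i.d.\ uniformly random \emph{alive} subtrees, and the algorithm returns $\tilde d \cdot S + 1$. The mean of one sample is exactly $(|V(T)|-1)/d$, so $\E[\tilde d S + 1] = (\tilde d/d)(|V(T)|-1) + 1$, and the proof splits into (a) showing $\tilde d/d$ is close to $1$, and (b) showing $S$ concentrates around its mean.

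For part (a): if $\bar d < R$, then line~\ref{line:rounding} rounds $\tilde d$ to the nearest integer, and by Lemma~\ref{lemma:approx-degree} with the choice $q' = \qD(O(\delta/R), O(\tau))$ we get $|\widetilde\deg_{G',q'}(v) - d| \le (\delta/R)\bar d < \delta \cdot (\bar d/R) < \delta < 1/2$ for small enough $\delta$, so after rounding $\tilde d = d$ \emph{exactly} with probability $\ge 1-O(\tau)$. If $\bar d \ge R$, then since $T$ is $R$-good we have $d = \deg_{G'}(v) > \bar d / R \ge 1$, hence $d \ge \bar d/R$, and Lemma~\ref{lemma:approx-degree} gives $|\tilde d - d| \le (\delta/R)\bar d \le \delta d$, i.e.\ $\tilde d = (1 \pm \delta) d$. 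In both cases $\tilde d = (1\pm\delta)d$ with probability $\ge 1 - O(\tau)$, and moreover when $d = 0$ (isolated vertex, possible only in the $\bar d < R$ branch) line~\ref{line:isolated-vertex} returns the exact answer $1 = |V(T)|$.

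For part (b): each sampled subtree has size between $1$ and $s$ (subtrees of an $s$-rooted tree have at most $s$ vertices), so the per-sample values lie in $[1,s]$ with mean $\mu := (|V(T)|-1)/d$. By Hoeffding/Chernoff, $q = O(s^2 \log(1/\tau)/(\delta^2 \mu^2))$ samples suffice to get $|S - \mu| \le \delta \mu$ with probability $\ge 1-\tau$; since $\mu = (|V(T)|-1)/d \ge$ (number of alive subtrees)$/d \cdot 1$, and in fact $d\mu = |V(T)| - 1$, one can also bound things via $d\mu \ge d$ when $|V(T)| \ge 2$. Combining with part (a): $\tilde d S + 1 = (1\pm\delta)d \cdot (1\pm\delta)\mu + 1 = (1\pm O(\delta))(|V(T)|-1) + 1 = (1 \pm O(\delta))|V(T)|$ after adjusting $\delta$ by a constant factor, which gives the error bound with probability $\ge 1 - O(\tau)$; a final rescaling of $\delta,\tau$ by constants yields the statement as written. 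For the query complexity: the degree-estimation step costs $q' = \poly(R,\delta,\tau)$ queries; each of the $q$ main iterations performs a rejection-sampling loop whose expected number of trials is $\bar d / d \le R$ (using $R$-goodness, or $=1$ when rounding is in effect), and each successful trial runs a BFS of cost $O(\poly(s))$ on a subtree of size $\le s$; so the total expected query count is $q' + q \cdot O(R \cdot \poly(s)) = \poly(s,R,\delta,\tau)$, noting $1/\mu$ contributes at most a $\poly(s)$ factor since $\mu \ge 1/d$ and $d\mu = |V(T)|-1$ is an integer $\ge 1$ whenever the answer is nontrivial.

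The main obstacle I expect is the bookkeeping around the case $d = 1$ or very small $|V(T)|$, where the relative error in $S$ is governed by $1/\mu$ and one must argue this is not too large: here the key observations are that $\mu = (|V(T)|-1)/d$ and that $|V(T)| - 1 \ge d \ge 1$ in the $R$-good case (so $\mu \ge 1/d$ but also $d\mu \ge 1$), combined with the fact that the multiplicative error we ultimately need is on $|V(T)|$ itself, not on $S$ alone — so the contribution $\tilde d \delta \mu \le \delta(1+\delta)(|V(T)|-1)$ is automatically a $\delta$-fraction of $|V(T)|$ regardless of how small $\mu$ is. The other delicate point is confirming that the rejection-sampling loop on line~\ref{line:loop-begin} has expected length $\bar d/\max(d,1) \le R$ precisely because $T$ is $R$-good; without that hypothesis the loop could run for $\Omega(n)$ expected steps, which is exactly why the lemma is stated only for $R$-good trees.
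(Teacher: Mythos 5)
Your proof is correct and follows essentially the same route as the paper's: estimate the degree via Lemma~\ref{lemma:approx-degree} with parameter $O(\delta/R)$ (using $R$-goodness when $\deg_G(v)\ge R$ and exact rounding when $\deg_G(v)<R$), concentrate the empirical mean of sampled subtree sizes via Chernoff, and combine the two estimates multiplicatively — the paper packages that last step as an explicit claim about products of approximate reals, but the content is identical, as is the query-complexity argument via the expected rejection-sampling length $\deg_G(v)/\deg_{G'}(v)\le R$. The one spot worth tightening is your handling of small $\mu$: the clean observation is that $\mu=(|V(T)|-1)/\deg_{G'}(v)\ge 1$ because distinct alive edges lead to disjoint nonempty subtrees, which renders the $1/\mu^2$ factor in your sample count harmless (and is why the paper can work with a purely additive $O(\delta)$ concentration bound).
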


The proof of Lemma~\ref{lemma:approximate-size} is a little complicated.
We give the proof in Appendix~\ref{sec:missing-po}.

\newcommand{\Exact}{\mathrm{\mathop{Exact}}}
\newcommand{\Sketch}{\mathsf{\mathop{Sketch}}}
\newcommand{\Cell}{\mathrm{\mathop{Cell}}}
\newcommand{\Round}{\mathbf{\mathop{Round}}}
\newcommand{\size}{\mathrm{\mathop{size}}}
\newcommand{\qloop}{q_{\mathrm{\mathop{loop}}}}
\newcommand{\qfreq}{q_{\mathrm{\mathop{freq}}}}
\newcommand{\qsize}{q_{\mathrm{\mathop{size}}}}
\newcommand{\Capture}{\mathrm{\mathop{Capture}}}
\newcommand{\MM}{\mathcal{\mathop{M}}}
\newcommand{\MinMatch}{\mathrm{\mathop{MinMatch}}}

\newcommand{\bfz}{\mathbf{0}}
\newcommand{\const}{\mathrm{const}}
\newcommand{\mesh}{\mathbf{\mathop{mesh}}}
\newcommand{\ext}{\mathrm{\mathop{ext}}}

\newcommand{\qIV}{ {q_{\ref{lemma:approximate-count}}}}
\newcommand{\qIVloop}{ {q^{\ref{lemma:approximate-count}}_{\mathrm{loop}}}}
\newcommand{\qIVfreq}{ {q^{\ref{lemma:approximate-count}}_{\mathrm{freq}}}}
\newcommand{\qIVsize}{ {q^{\ref{lemma:approximate-count}}_{\mathrm{size}}}}

\newcommand{\qV}{ {q_{\ref{lemma:norm-of-count}}}}
\newcommand{\etaV}{ {\eta_{\ref{lemma:norm-of-count}}}}
\newcommand{\deltaV}{ {\delta'_{\ref{lemma:norm-of-count}}}}

\newcommand{\qrandom}{q_{\mathrm{\mathop{random}}}^{\ref{lemma:upper-bound-for-simple-case}}}

\section{When All Root Vertices Have Similar Degrees}\label{sec:simple-case}
In this section and the next section, we assume that we read the input graphs $G$ and $H$ through the partitioning oracle.
Thus, we are allowed to use alive-edge queries and the procedures $\widetilde\deg$, $\widetilde\Freq$, and $\widetilde\Size$.
Further, we assume that $s$ is a constant that depends only on $\varepsilon$.

We consider the case in which the root of all components have similar degrees.
Formally, we assume that each component in $G$ and $H$ is $R$-good $s$-rooted tree
and that the degree of each $s$-rooted tree in $G$ and $H$ is greater than $B$ and at most $\gamma B$.
Here, $B (> s)$ is an integer that can be up to $O(n)$ and $s, \gamma \ge 1$ is an arbitrary constant.
We call such a forest an \emph{$R$-good $s$-rooted forest with root degrees in $(B, \gamma B]$}.
In this section,
we will show that there is a forest-isomorphism tester for $R$-good $s$-rooted forest with root degrees in $(B, \gamma B]$
whose query complexity is a polynomial in $\gamma$ and $R$.

With the tester given in this section, we can construct a tester for the general case as follows.
After applying the partitioning oracle,
the graph becomes a disjoint union of an $s$-bounded-degree forest and an $s$-rooted forest.
We partition the $s$-rooted forest into several groups by the root degree.
First, we ignore all the $R$-bad $s$-rooted trees from the graph.
Since the number of $R$-bad trees is sufficiently small for a large $R$ from Lemma~\ref{lemma:R-bad}, this does not affect so much.
Second, if $\deg(\rt(T))$ is greater than $O(\gamma^i)$ and at most $O(\gamma^{i+1})$, we consider that a tree $T$ is in the $i$-th group.
Note that there are $O(\log{n})$ groups.
Then we apply the isomorphism tester of this section to each group.
If input graphs $G$ and $H$ are isomorphic, the tester must return YES (isomorphic) for all the groups.
In contrast, if $G$ and $H$ are $\varepsilon$-far from isomorphic, there must exist a group such that
the tester returns NO (not isomorphic) for the group.
Here, there is one technical issue: The number of vertices in such a group might be different.

We resolve this issue.
We assume that $n := |V(G)|$ and $n':=|V(H)|$ might be slightly different
and the algorithm does not know the exact values of $n$ and $n'$ but know their approximations.
Formally, we assume that our algorithm will be given a value $\tilde{n} \ge 1$, an approximation to $n$ and $n'$,
and $\eta \in (0,1)$ with $\frac{\tilde{n}}{n}, \frac{\tilde{n}}{n'} \in [1-\eta, 1]$.

We can prove the following lemma.

\newcommand{\etaI}{\eta_{\ref{lemma:upper-bound-for-simple-case}}}
\begin{lemma}
  \label{lemma:upper-bound-for-simple-case}
  Suppose that we are given $\varepsilon'>0$, $\tilde{n} \ge 1$, $\gamma \ge 1$, $R, B > s$, $\tau \in (0,1)$
  and we can access $s$-forests $G$ and $H$ through the partitioning oracle,
  where $n = |V(G)|$ and $n' = |V(H)|$ might be different.
  Then, there exists $\eta = \etaI(s, \varepsilon', \gamma, \tau, R)>0$ with the following property.
  If $G$ and $H$ are $R$-good $s$-rooted forests with root degrees in $(B, \gamma B]$ with $\frac{\tilde{n}}{n}, \frac{\tilde{n}}{n'} \in [1-\eta, 1]$,
  then there exists an algorithm that tests if $d(G, H) = 0$ or $d(G, H) \ge \varepsilon' \tilde{n}$
  with probability at least $1-\tau$.
  Assuming that $s$ is constant, the query complexity is a polynomial in $R, \gamma, \varepsilon', \tau$ and does not depend on $B, \tilde{n}$.
  Further, denote by $\qrandom(s, \gamma, \varepsilon', \tau)$ the number of random vertex queries the algorithm invokes.
  Then, $\qrandom$ is a polynomial in $\gamma, \varepsilon', \tau$.
\end{lemma}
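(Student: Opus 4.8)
The plan is to reduce isomorphism testing to an approximate comparison of the \emph{sketches} of $G$ and $H$, where the sketch of an $R$-good $s$-rooted forest with root degrees in $(B,\gamma B]$ is the multiset of tuples $\bigl(\deg(\rt(T)),\Freq(T)\bigr)$ ranging over its trees. Two such forests are isomorphic if and only if their sketches coincide, so the task is to distinguish ``sketches equal'' from ``sketches induce distance $\ge\varepsilon'\tilde n$.'' The first key step is a \emph{structural lemma}: the edit distance $d(G,H)$ is, up to a constant factor, computable from the sketches via a minimum-cost bipartite matching—put a left vertex per tree of $G$, a right vertex per tree of $H$, and set the cost of matching $T$ to $T'$ to roughly $|\,|V(T)|-|V(T')|\,| + (\text{cost of reconciling the two subtree-frequency vectors and root degrees})$, with a deletion/insertion cost equal to $|V(T)|$ for unmatched trees. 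Since each tree has exactly one high-degree root and subtrees of size $\le s$, relabeling within matched pairs and deleting/adding $O(1)$-size subtrees is what drives the distance, so this matching value is $\Theta(d(G,H))$; I would prove the two inequalities by exhibiting an explicit bijection on $V$ from an optimal matching, and conversely extracting a matching from an optimal bijection by charging edges to the trees they touch. Because $R$-goodness bounds $\deg_G(\rt(T))/\deg_{G'}(\rt(T))$ by $R$, and root degrees lie in $(B,\gamma B]$, all ``per tree'' quantities we need to estimate are within known multiplicative windows, which is what will make the estimation errors controllable.

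The second step is \emph{estimation}. We cannot read the sketch exactly, but using $\widetilde\deg$, $\widetilde\Freq$ (Lemma~\ref{lemma:approximate-freq}), and $\widetilde\Size$ (Lemma~\ref{lemma:approximate-size}) we can, for a uniformly random vertex, find its root via Lemma~\ref{lemma:find-root} and produce additively accurate estimates of $\deg(\rt(T))$ (to within $\delta\deg_G(\rt(T))\le \delta R\deg_{G'}(\rt(T))$), of each coordinate of $\Freq(T)$ (similarly), and of $|V(T)|$ (to within a $(1\pm\delta)$ factor). Sampling $\poly(\gamma,\varepsilon',\tau)$ random vertices and weighting each sampled tree by $1/(\text{probability of hitting it})\approx \tilde n/|V(T)|$—here is where the promise $\tilde n/n,\tilde n/n'\in[1-\eta,1]$ and the choice of $\eta=\etaI(s,\varepsilon',\gamma,\tau,R)$ enter—yields an \emph{approximate sketch}: a weighted multiset of (rounded) tuples that, with probability $\ge 1-\tau$, is close in $\ell_1$/earth-mover sense to the true sketch of both $G$ and $H$ up to an additive $O(\varepsilon'\tilde n)$. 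Quantizing root degrees into $O_{\gamma,\varepsilon'}(1)$ geometric buckets inside $(B,\gamma B]$ and $\Freq$-vectors into a bounded mesh keeps the number of distinct tuple-types constant (depending only on $s,\gamma,\varepsilon',\tau$), so the approximate sketch has bounded support and can be handled combinatorially. The query complexity is then $\qrandom(s,\gamma,\varepsilon',\tau)$ random-vertex queries times $\poly(R,\gamma,\varepsilon',\tau)$ queries per sampled tree, independent of $B$ and $\tilde n$ as required, and $\qrandom$ is a polynomial in $\gamma,\varepsilon',\tau$ since the number of tuple-types and the accuracy demanded are.

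The third step is \emph{deciding via matching}: compute the minimum-cost matching value between the two approximate sketches (this is a finite LP/assignment problem on a constant number of tuple-types with rational weights summing to $\approx\tilde n$), accept iff it is below a threshold like $\tfrac{1}{2}\varepsilon'\tilde n$. Completeness: if $d(G,H)=0$ the true sketches are equal, the approximate sketches differ by $O(\varepsilon'\tilde n)$ with the additive constants chosen $<\tfrac12\varepsilon'$, so the matching value is small. Soundness: if $d(G,H)\ge\varepsilon'\tilde n$ the structural lemma forces the true-sketch matching value to be $\Omega(\varepsilon'\tilde n)$, and the approximation only perturbs this by $o(\varepsilon'\tilde n)$ for suitably small $\delta$. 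The main obstacle I expect is the structural lemma—specifically the soundness direction, showing that a small edit distance implies nearly-matching sketches. The subtlety is that an optimal bijection $\pi$ between $V(G)$ and $V(H)$ need not respect the tree decomposition or even map roots to roots, so one must argue that at the cost of only $O(d(G,H))$ further edits $\pi$ can be ``cleaned'' into one that maps each tree of $G'$ onto a tree of $H'$ matching root to root and subtree-multiset to subtree-multiset (modulo $O(d)$-many small corrections), and only then read off a cheap matching; carefully bounding the charge, and handling unmatched trees and the interaction with the $R$-bad/small-degree parts deferred to Section~\ref{sec:general-case}, is where the real work lies. Degenerate cases ($\widetilde\deg=0$, i.e.\ isolated-vertex trees, and root degrees near the bucket boundaries $B$ and $\gamma B$) need separate but routine handling.
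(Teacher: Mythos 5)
Your proposal follows essentially the same route as the paper: approximate the multiset of $\Freq$-vectors (the sketch) by uniform vertex sampling with $1/\widetilde{\Size}$ reweighting and cell-rounding, and threshold the min-cost matching between the two approximate sketches (this is exactly the paper's $\widetilde{\Sketch}$, Lemma~\ref{lemma:mmgh-mmtcgtch}, Lemma~\ref{lemma:mmgh-dist}, and the procedure $\TestSimple$). One correction is worth making: the ``main obstacle'' you single out --- cleaning up an optimal bijection $\pi$ that need not respect the tree decomposition, so as to show that a small edit distance forces nearly-matching sketches, i.e.\ $\MM(F_G,F_H)=O(d(G,H))$ --- is not needed anywhere. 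Completeness only uses the case $d(G,H)=0$, where the two $\Freq$-multisets are literally identical and the true matching value is exactly $0$; soundness only needs the single inequality $d(G,H)\le 2s\cdot\MM(F_G,F_H)$, which is the constructive direction you describe (turn an integral optimal matching into an explicit sequence of subtree deletions/insertions, charging $\le s$ edits per unit of $\ell_1$-discrepancy and $\le s\deg(\rt(T))$ edits per unmatched tree). So the structural lemma collapses to that one easy direction, and the remaining work is the concentration analysis of the sketch and the triangle inequality for $\MM$, just as you outline; also note that carrying the root degree separately in the tuple is redundant, since $\deg(\rt(T))=\|\Freq(T)\|_1$.
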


In this section, we only write an overview of the proof of Lemma~\ref{lemma:upper-bound-for-simple-case}
since the proof is complicated,
We provide the proof in Appendix~\ref{section:approx-alg-for-s-rooted-trees}.

Since $\Freq(T)$ maps to a unique $t(s)$-dimensional vector corresponding to an $s$-rooted tree $T$,
there is a unique multiset of vectors corresponding to an $s$-rooted forest $G$.
For a $t(s)$-dimensional vector $\bfw \in \bbN_{<n}^{t(s)}$, let $\Psi_{G}[\bfw]$ be the number of $s$-rooted trees $T$ in $G$ such that $\Freq(T) = \bfw$.
Note that $\Psi_G$ can be seen as the sketch of $G$.
Clearly, $G$ is isomorphic to $H$ if and only if $\Psi_{G}[\bfw] = \Psi_{H}[\bfw]$ for all $\bfw$.
We use this property to create a tester.
Since it is impossible to compute $\Psi_G$ exactly,
we resort to approximate it.
We choose an integer $k \geq 1$, and
divide each axis of the $t(s)$-dimensional space into $k$ segments to make $k^{t(s)}$ cells.
We then estimate the number of $s$-rooted trees in each cell.
We call this estimation the \emph{sketch} of $G$.
We focus on computing the sketch.

For an integer $k \ge 1$, we define intervals $I_i = [\frac{\tilde{n}i}{(1-\eta)k}, \frac{\tilde{n}(i+1)}{(1-\eta)k})$ ($i \in \bbN_{<k}$).
Note that, for every $0\le i \le n-1$, there exists a unique interval $I_j$ with $i \in I_j$.
For a vector $\bfu \in \bbN_{<k}^{t(s)}$,
let $\Cell(\bfu)$ be the corresponding cell formed by intervals $I_{\bfu[1]},\ldots,I_{\bfu[t(s)]}$.
Further, for a vector $\bfw \in [0,n]^{t(s)}$,
we define $\Round(\bfw) = \bfu$, where $\bfu \in \bbN_{<k}^{t(s)}$ is such that $\Cell(\bfu) \ni \bfw$.

For a vector $\bfu \in \bbN_{<k}^{t(s)}$,
we approximate the number of $s$-rooted trees $T$ in $G$ with $\Freq(T) \in \Cell(\bfu)$
by the following algorithm $\widetilde{\Sketch}$.

\begin{algorithm}[!h]
  \caption{returns a map $\Phi : \bbN_{<k}^{t(s)} \rightarrow [0, n]$, given integers $\qloop$, $\qfreq$, $\qsize$,$R$,$k$, a real $\tilde{n}$ and an $R$-good $s$-rooted forest $G$ with root degrees in $(B, \gamma B]$ through the partitioning oracle.
  Here, $\Phi(\bfu)$ is an approximation to the number of $s$-rooted trees $T$ with $\Freq(T) \in \Cell(\bfu)$.}
  \label{alg:sketch}
\begin{algorithmic}[1]
  \Procedure{$\widetilde{\Sketch}_{\qloop, \qfreq, \qsize ,R,k}(G)$}{}
  \State{Set $\Phi(\bfu) = 0$ for all $\bfu \in \bbN_{<k}^{t(s)}$}
  \For{$j=1,\dots,\qloop$}
    \State{Choose a vertex $u \in V(G)$ uniformly at random} \label{alg-line:chose-u}
    \State{Perform a BFS from $u$ to find a root vertex $v$.}
    \State{$\bfu = \Round(\widetilde{\Freq}_{\qfreq}(v))$} \label{alg-line:rounding}
    \State{$\Phi(\bfu) = \Phi(\bfu) + 1/\widetilde{\Size}_{G, \qsize, R}(v)$}  \label{alg-line:approx-inverse-size}
  \EndFor
  \State{\textbf{return} $\frac{\tilde{n}}{\qloop} \Phi$} \label{alg-line:return-phi}
  \EndProcedure
\end{algorithmic}
\end{algorithm}

\newcommand{\qIII}{ {q_{\ref{lemma:inverse-size}}}}
To create a forest-isomorphism tester,
we first compute the sketches of $G$ and $H$ by the algorithm~$\widetilde{\Sketch}$,
and then, we compute the minimum matching between the sketches.
Here, the minimum matching is defined as the min-cost flow of complete bipartite graphs
where vertices correspond to the cells of the sketches and the weight of an edge is the L1 distance between two cells of the sketches in the $t(s)$-dimensional space.
Since the L1 distance in the $t(s)$-dimensional space corresponds to the number of different subtrees in $s$-rooted trees,
we can prove that the a minimum matching between the sketches
is a good approximation to $d(G, H)$ with high probability.
Thus, it suffices to compute the sketches of $G$ and $H$ and the minimum matching between them.
Note that we do not have to make any query to $G$ and $H$ to compute the minimum matching.
 

\newcommand{\qX}{ {q_{\ref{lemma:mmgh-mmtcgtch}}}}

\newcommand{\qVI}{ {q_{\ref{lemma:count-tildecount-diff}}}}
\newcommand{\deltaVI}{ {\delta_{\ref{lemma:count-tildecount-diff}}}}

\newcommand{\qVII}{ {q_{\ref{lemma:g-count-diff}}}}
\newcommand{\qVIIfreq}{ {q_{\mathrm{freq}\ref{lemma:g-count-diff}}}}
\newcommand{\deltaVII}{ {\delta_{\ref{lemma:g-count-diff}}}}
\newcommand{\kI}{ {k_{\ref{lemma:g-count-diff}}}}
 
\newcommand{\TestSimple}{\mathsf{\mathop{TestRootedForest}}}

\newcommand{\GP}[1]{{G}^{[#1]}}
\newcommand{\HP}[1]{{H}^{[#1]}}

\newcommand{\WhichComponent}{\mathsf{\mathop{Which}}}
\newcommand{\Which}{\mathsf{\mathop{Which}}}
\newcommand{\Random}{\mathsf{\mathop{Random}}}
\newcommand{\TestIsomorphism}{\mathsf{\mathop{TestIsomorphism}}}
\newcommand{\TestBounded}{\mathsf{\mathop{TestBoundedDegreeForest}}}

\newcommand{\TestEach}{\mathsf{\mathop{TestForestOfSameType}}}

\section{General Case}
\label{sec:general-case}
In this section, we prove Theorem~\ref{thm:forest-isomorphism}.
Missing parts of this section are given in Appendix~\ref{appendix:missing-general-case}.
Missing proofs are given in Appendix~\ref{ss:missing-general-case}.
Again $G$ and $H$ denote the graphs given through the partitioning oracle
and $s$ is constant.
For an integer $L \ge 1$, we call $G_1, \dots, G_L \subseteq G$ \emph{a partition of $G$}
if each $G_i$ is a union of connected components in $G$ and $G$ is a disjoint union of $G_1, \dots, G_L$.
The following lemma allows us to consider each part in the partition separately.

\begin{lemma}
  \label{lemma:farness-test}
  Let $L \ge 1$ be an integer and $G_1,\cdots,G_L$ (resp. $H_1,\cdots,H_L$) be any partition of $G$ (resp. $H$).
  Then, for any $\beta_1, \cdots, \beta_L \ge 0$ summing up to $1$, the following holds:
  For any $\varepsilon > 0$,
  if $d(G, H) \ge \varepsilon n$,
  there exists $i \in [L]$ such that 
  $d(G_i, H_i) \ge \beta_i \varepsilon n$ holds.
\end{lemma}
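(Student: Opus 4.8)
The plan is to argue by contrapositive: suppose that $d(G_i, H_i) < \beta_i \varepsilon n$ for every $i \in [L]$, and show that this forces $d(G, H) < \varepsilon n$. The idea is to build a global bijection $\pi : V(G) \to V(H)$ by gluing together near-optimal bijections on each part. For each $i$, let $\pi_i : V(G_i) \to V(H_i)$ be a bijection realizing $d(G_i, H_i)$ (using the extended definition of distance, so we may assume $|V(G_i)| = |V(H_i)|$ after padding each part with isolated vertices; one should check that the padding can be done consistently so that the $\pi_i$ still assemble into a bijection of the same vertex set, which is routine since the total number of added isolated vertices on the two sides agree because $d(G,H)$ is finite and we may assume $|V(G)| = |V(H)|$). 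Define $\pi$ to be the disjoint union of the $\pi_i$.

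The key step is then the additivity of the edge-discrepancy count across the partition. Since each $G_i$ is a union of connected components of $G$, there are no edges of $G$ between distinct parts $G_i$ and $G_j$; likewise for $H$. Hence for the fixed bijection $\pi = \bigsqcup_i \pi_i$, every edge $(u,v) \in E(G)$ lies entirely within some part $G_i$, and its image $(\pi(u),\pi(v)) = (\pi_i(u),\pi_i(v))$ is compared against edges of $H_i$ only; symmetrically for edges of $H$. Therefore
\[
  d(G,H) \le \sum_{i=1}^{L} \bigl(\#\{(u,v) \in E(G_i) \mid (\pi_i(u),\pi_i(v)) \notin E(H_i)\} + \#\{(u,v) \notin E(G_i) \mid (\pi_i(u),\pi_i(v)) \in E(H_i)\}\bigr) = \sum_{i=1}^{L} d(G_i, H_i).
\]
Combining this with the assumption $d(G_i,H_i) < \beta_i \varepsilon n$ for all $i$ and $\sum_i \beta_i = 1$ gives $d(G,H) < \sum_i \beta_i \varepsilon n = \varepsilon n$, contradicting $d(G,H) \ge \varepsilon n$. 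Hence some $i$ must satisfy $d(G_i,H_i) \ge \beta_i \varepsilon n$.

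The only real subtlety — and the step I expect to need the most care — is the bookkeeping around the isolated-vertex padding in the definition of $d(\cdot,\cdot)$ for parts of unequal size: one must make sure that after padding, the union $\bigsqcup_i \pi_i$ is genuinely a bijection between $V(G)$ (suitably padded) and $V(H)$ (suitably padded) and that the cross-part edge counts of the padded global bijection still equal $\sum_i d(G_i, H_i)$. This is fine because isolated vertices contribute no edges on either side, so adding or matching them never changes any of the discrepancy counts; it just requires stating the padding carefully. Everything else is the elementary "no edges between parts" observation above.
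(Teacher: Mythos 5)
Your proof is correct and follows essentially the same route as the paper: both reduce the lemma to the subadditivity claim $d(G,H) \le \sum_{i=1}^{L} d(G_i,H_i)$ (the paper proves it by chaining per-part modification sequences and recycling surplus isolated vertices across parts, which is the same bookkeeping you describe for gluing the per-part bijections) and then apply the pigeonhole argument with $\sum_i \beta_i = 1$.
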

\begin{proof}
  We can obtain the lemma immediately from the following claim.

  \begin{claim}
    \label{claim:d-ubound-for-general-case}
    $d(G, H) \le \sum_{i=1}^{L} d(G_i, H_i)$.
  \end{claim}

  We prove the claim.
  Construct a sequence of modifications to transform $G$ to $H$.
  For each subgraph $G_i$ with $|V(G_i)| \ge |V(H_i)|$,
  we transform $G_i$ into $H_i$ and $|V(G_i)| - |V(H_i)|$ isolated vertices.
  After this modification, for each subgraph $G_i$ with $|V(G_i)| < |V(H_i)|$,
  we use $G_i$ and $|V(H_i)| - |V(G_i)|$ isolated vertices to construct $H_i$.
  The total number of modifications is $\sum_{i} d(G_i, H_i)$.
\end{proof}

To construct a tester for the isomorphism of $s$-forests,
we first give a partition of an $s$-forest and apply Lemma~\ref{lemma:farness-test}.
Then we test the isomorphism of each corresponding partition of $G$ and $H$.
That is, we check $d(G_i, H_i) = 0$ or $d(G_i, H_i) \ge \beta_i \varepsilon n$ for each $i$.
Here, if $d(G, H)=0$, all parts of the partition in $G$ and $H$ are isomorphic, so all the tests must output YES (with high probability).
If $d(G, H) \ge \varepsilon n$, there must be an index $i$ where the test outputs NO.
To provide oracle accesses to $G_i$ and $H_i$, we estimate the size of $V(G_i)$ and $V(H_i)$ by random sampling.
If they are sufficiently far, we immediately return NO.
If they are sufficiently small, we simply ignore $G_i$ and $H_i$.
Otherwise, we can provide the oracle accesses to $G_i$ and $H_i$ that costs for each query at most $\poly(L)$ queries to $G$ and $H$.
Using this access, we test whether $d(G_i, H_i) \ge \beta_i \varepsilon n$.

To provide a partition of an $s$-forest, we introduce a new notion.
For $\alpha,\gamma \ge 1$, $\mu > 0$, and a tree $T$, we say that $T$ is \textit{on the $(\alpha, \gamma, \mu)$-boundary},
if there exists an integer $i \ge 1$ with $1-\mu \le \deg(\rt(T)) / (\alpha \gamma^i) \le 1+\mu$.
We denote by $B_{\alpha, \gamma, \mu}(G)$
the number of vertices in the trees of $G$ that are on the $(\alpha, \gamma, \mu)$-boundary.
For $\lambda > 0$, we call $\alpha$ \emph{$(\gamma,\mu, \lambda)$-good with respect to $G$} if $B_{\alpha,\gamma,\mu}(G) < \lambda n$.
We can show that, if we choose $\alpha$ from $[1, \gamma]$ at random,
$\alpha$ is $(\gamma,\mu, \lambda)$-good with high probability.

\begin{lemma}
  \label{lemma:boundary-bound}
  Suppose that $\alpha$ is chosen from $[1, \gamma]$ uniformly at random.
  Then, for $\gamma \ge 2$, $\mu \in (0,1/3)$, and $\lambda \in (0,1)$, $\alpha$ is 
  $(\gamma,\mu, \lambda)$-good with respect to $G$ with probability at least $1-\frac{4\gamma \mu}{\lambda}$.
\end{lemma}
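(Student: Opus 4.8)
The plan is to bound the expected value of $B_{\alpha,\gamma,\mu}(G)$ over the random choice of $\alpha \in [1,\gamma]$ and then apply Markov's inequality. Fix a tree $T$ in $G$ with root degree $d = \deg(\rt(T))$, and let me estimate the probability (over $\alpha$) that $T$ lands on the $(\alpha,\gamma,\mu)$-boundary, i.e.\ that there is an integer $i \ge 1$ with $1-\mu \le d/(\alpha\gamma^i) \le 1+\mu$. Equivalently, this asks whether $\alpha$ falls in one of the intervals $\bigl[\frac{d}{(1+\mu)\gamma^i}, \frac{d}{(1-\mu)\gamma^i}\bigr]$ for some $i \ge 1$. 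These intervals, as $i$ ranges, are geometrically shrinking copies of each other scaled by factors of $\gamma$, so at most a bounded number of them (in fact at most one or two, since $\gamma \ge 2$ and $\mu < 1/3$ makes each interval have multiplicative width $\frac{1+\mu}{1-\mu} < 2 \le \gamma$) can intersect the fixed range $[1,\gamma]$ of length $\gamma - 1$.

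First I would make the change of variables $\beta = \log_\gamma \alpha$, so $\beta$ is uniform on $[0,1]$. The condition becomes: there exists an integer $i \ge 1$ with $\log_\gamma\frac{d}{1+\mu} - i \le \beta \le \log_\gamma\frac{d}{1-\mu} - i$, i.e.\ the fractional-part-type condition that $\beta$ lies within a window of length $\log_\gamma\frac{1+\mu}{1-\mu}$ around a fixed point mod $1$. Hence
\[
  \Pr_\alpha[T \text{ on the } (\alpha,\gamma,\mu)\text{-boundary}] \le \log_\gamma\frac{1+\mu}{1-\mu} = \frac{\ln\frac{1+\mu}{1-\mu}}{\ln\gamma}.
\]
Using $\ln\frac{1+\mu}{1-\mu} \le \frac{2\mu}{1-\mu} \le 3\mu$ for $\mu \in (0,1/3)$ and $\ln\gamma \ge \ln 2 > 1/2$ for $\gamma \ge 2$, this probability is at most $\frac{3\mu}{\ln\gamma} \le \frac{6\mu}{1} $; to get the clean constant $4\gamma\mu$ in the statement I would instead keep the bound in the form $\le \frac{4\mu}{\ln\gamma}$ (slightly tightening the elementary inequality, or simply noting $\frac{2\mu}{(1-\mu)\ln\gamma} \le \frac{4\mu}{\ln\gamma}$ for $\mu < 1/3$) and then observe $\frac{1}{\ln\gamma} \le \frac{\gamma}{\gamma-1} \cdot \frac{\gamma-1}{\gamma\ln\gamma} \le \frac{\gamma}{\gamma-1} \le 2 \le \gamma$ is too lossy — rather, since we only need \emph{some} bound matching the lemma, I would verify directly that $\frac{4\mu}{\ln\gamma} \le \gamma\mu$, equivalently $4 \le \gamma\ln\gamma$, which holds for $\gamma \ge 2.3$; for $\gamma \in [2, 2.3)$ one checks the constant by hand. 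The cleanest route is probably to prove $\Pr_\alpha[\cdots] \le \frac{4\gamma\mu}{n}\cdot\frac{n}{|V(G)|}$-free bound $\le 4\gamma\mu$ directly via $\log_\gamma\frac{1+\mu}{1-\mu} \le 4\gamma\mu$, i.e.\ $\frac{1+\mu}{1-\mu} \le \gamma^{4\gamma\mu}$, which for $\gamma \ge 2$ follows since the right side is at least $1 + 4\gamma\mu\ln 2 \ge 1 + 2\gamma\mu \ge 1 + 4\mu \ge \frac{1+\mu}{1-\mu}$ (last step as $\mu < 1/3$).

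Next I would sum over trees: writing $n = |V(G)|$ and letting $v_T = |V(T)|$, linearity of expectation gives
\[
  \E_\alpha\bigl[B_{\alpha,\gamma,\mu}(G)\bigr] = \sum_{T} v_T \cdot \Pr_\alpha[T \text{ on the boundary}] \le 4\gamma\mu \sum_T v_T = 4\gamma\mu\, n.
\]
Finally, by Markov's inequality, $\Pr_\alpha[B_{\alpha,\gamma,\mu}(G) \ge \lambda n] \le \frac{4\gamma\mu n}{\lambda n} = \frac{4\gamma\mu}{\lambda}$, so $\alpha$ is $(\gamma,\mu,\lambda)$-good with respect to $G$ with probability at least $1 - \frac{4\gamma\mu}{\lambda}$, as claimed. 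The main obstacle is the elementary-inequality bookkeeping in the per-tree probability bound — getting the constant to come out exactly as $4\gamma\mu$ requires choosing the right chain of estimates among $\log_\gamma\frac{1+\mu}{1-\mu}$, $\frac{2\mu}{(1-\mu)\ln\gamma}$, and $\gamma^{4\gamma\mu} - 1$; everything else (the $\log_\gamma$ substitution, the mod-$1$ window interpretation, linearity of expectation, and Markov) is routine.
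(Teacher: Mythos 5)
Your overall strategy --- bound $\E_\alpha[B_{\alpha,\gamma,\mu}(G)]$ tree by tree via linearity of expectation and finish with Markov's inequality --- is exactly the paper's. The genuine gap is in the per-tree probability bound. When $\alpha$ is uniform on $[1,\gamma]$, the variable $\beta=\log_\gamma\alpha$ is \emph{not} uniform on $[0,1]$: its density is $\gamma^{\beta}\ln\gamma/(\gamma-1)$, which is larger near $\beta=1$. Consequently your claimed inequality $\Pr_\alpha[T\ \text{on the boundary}]\le\log_\gamma\frac{1+\mu}{1-\mu}$ is false. Concretely, take $\gamma=2$ and a root degree for which the bad interval has its right endpoint at $\alpha=2$: the interval is $\bigl[\tfrac{2(1-\mu)}{1-\mu^2}\cdot(1-\mu),2\bigr]=\bigl[\tfrac{2(1-\mu)}{1+\mu},2\bigr]$, of length $\tfrac{4\mu}{1+\mu}\approx 4\mu$, and since $\gamma-1=1$ this is the probability; but $\log_2\tfrac{1+\mu}{1-\mu}\approx 2\mu/\ln 2\approx 2.89\mu<4\mu$. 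The discrepancy is exactly the maximal density $\tfrac{\gamma\ln\gamma}{\gamma-1}$ you dropped, so the argument is repairable --- $p_T\le\tfrac{\gamma}{\gamma-1}\ln\tfrac{1+\mu}{1-\mu}\le\tfrac{\gamma}{\gamma-1}\cdot 3\mu\le 4\gamma\mu$ for $\gamma\ge 2$ and $\mu<1/3$ --- but as written the key step is wrong, and the long constant-chasing passage that follows (including the garbled fragment about a ``$\tfrac{4\gamma\mu}{n}\cdot\tfrac{n}{|V(G)|}$-free bound'') is built on the false inequality.

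The paper avoids the logarithmic substitution entirely and works directly in $\alpha$-space: the intervals $\{[\tfrac{d}{(1+\mu)\gamma^{i}},\tfrac{d}{(1-\mu)\gamma^{i}}]\}_{i\ge1}$ are disjoint because $\tfrac{1+\mu}{1-\mu}<\gamma$, at most two of them can meet $[1,\gamma]$ (if one does, the one two steps away cannot), each intersection with $[1,\gamma]$ has length at most $\gamma\bigl(1-\tfrac{1-\mu}{1+\mu}\bigr)\le 2\gamma\mu$ (maximized when the right endpoint sits at $\gamma$), and dividing the total length by $\gamma-1\ge 1$ gives $p_T\le 4\gamma\mu$ with no asymptotic fudging. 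I recommend replacing your substitution step with this direct computation; everything after the per-tree bound in your write-up (linearity of expectation over trees weighted by $|V(T)|$, then Markov) then goes through verbatim and matches the paper.
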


We consider a partition of an $s$-forest $G$.
Let $\alpha$, $\gamma$, $\mu$, and $R$ be values chosen later.
Let $\GP{0}_{s, \alpha, \gamma, \mu, R}$ be the maximal $s$-bounded-degree forest in $G$
and $\GP{1}_{s, \alpha, \gamma, \mu, R}$ be the union of $R$-good $s$-rooted trees with root degree in $(s, \alpha \gamma]$ that are not on the $(\alpha, \gamma, \mu)$-boundary in $G$.
Similarly, for $2 \le i \le L$, where $L=\lceil \log{n} / \log{\gamma} \rceil$,
let $\GP{i}_{s, \alpha, \gamma, \mu, R}$ be the union of $R$-good $s$-rooted trees with root degree in $(\alpha \gamma^{i-1}, \alpha \gamma^{i}]$ that are not on the $(\alpha, \gamma, \mu)$-boundary in $G$.
Finally, let $\GP{L+1}_{s, \alpha, \gamma, \mu, R}$ be the remaining trees that are not assigned to any partition so far.
That is, $\GP{L+1}$ is the union of trees that are $R$-bad or on the $(\alpha, \gamma, \mu)$-boundary in $G$.
We omit the subscript of $\GP{i}_{s, \alpha, \gamma, \mu, R}$ if it is clear from the context.
Note that we can write $G = \GP{0} \cup \GP{1} \cup \cdots \cup \GP{L+1}$.
We use the same notion for the other graph $H$.

We define a procedure that, given a vertex $v \in V(G)$, returns $i$ with $v \in \GP{i}$ as follows.
Our procedure first determines if $v$ is in an $s$-bounded-degree tree by performing a BFS from $v$ until we visit $O(s)$ vertices.
If we cannot find a high-degree vertex, $v \in \GP{0}$.
Otherwise, for a parameter $q \ge 1$, we invoke $\widetilde{\deg}_q(\rt(v))$ and return an appropriate output.
We call this procedure $\Which_q(v)$.

Here, the technical issue is that the procedure $\Which$ may output a wrong value.
We show that $\Which$ outputs the correct value with high probability for any partition of $G$ except for $\GP{L+1}$
and that the size of $\GP{L+1}$ is sufficiently small.

\newcommand{\qW}{   {q_{\ref{lemma:which-component}}}}
\begin{lemma}
  \label{lemma:which-component}
  For any $\tau \in (0,1)$ and $R \ge 1$,
  there exists a polynomial $q = \qW(\gamma, \mu, R, \tau)$
  such that the procedure $\Which_q(v)$ outputs a correct value with probability $1 - \tau$ for $v \in V(\GP{0}_{s, \alpha, \gamma, \mu, R}) \cup \cdots \cup V(\GP{L}_{s, \alpha, \gamma, \mu, R})$.
  
\end{lemma}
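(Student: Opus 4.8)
The plan is to analyze the procedure $\Which_q(v)$ case by case according to which part $\GP{i}$ the vertex $v$ lies in, and to use the approximate-degree guarantee of Lemma~\ref{lemma:approx-degree} to bound the failure probability. First I would handle $\GP{0}$: if $v$ lies in an $s$-bounded-degree tree, the BFS from $v$ visits at most $s$ vertices before it either exhausts the component or would visit more than $O(s)$ vertices; since every vertex has degree at most $s$, no high-degree vertex is ever encountered, so $\Which$ correctly returns $0$ deterministically. Conversely, if $v$ lies in some $s$-rooted tree $\GP{i}$ with $i \ge 1$, the BFS is guaranteed to reach the (unique) high-degree root $\rt(v)$ within $O(s)$ steps by the third property of Lemma~\ref{lmm:partition-to-s-forest} together with Lemma~\ref{lemma:find-root}, so $\Which$ correctly proceeds to the degree-estimation branch.

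Next I would bound the error in the degree-estimation branch. For $v \in V(\GP{i})$ with $1 \le i \le L$, the tree containing $v$ is $R$-good and not on the $(\alpha,\gamma,\mu)$-boundary, so $\deg_{G'}(\rt(v))$ is separated from each threshold $\alpha\gamma^j$ by a multiplicative gap of at least $1+\mu$ (below) or outside $1-\mu$ (above), i.e. the true degree lies strictly inside the open interval $(\alpha\gamma^{i-1}, \alpha\gamma^i]$ with a safety margin of relative width $\mu$. Since the tree is $R$-good, $\deg_{G}(\rt(v)) \le R\cdot\max(\deg_{G'}(\rt(v)),1)$, so a relative error $\delta$ in $\widetilde{\deg}$ measured against $\deg_G$ translates into a relative error $R\delta$ against $\deg_{G'}$. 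Applying Lemma~\ref{lemma:approx-degree} with $\delta = \Theta(\mu/R)$ and confidence $\tau$ yields $|\widetilde{\deg}_q(\rt(v)) - \deg_{G'}(\rt(v))| \le (\mu/\mathrm{const})\deg_{G'}(\rt(v))$ with probability $1-\tau$, which is small enough that the estimated degree falls in the same interval $(\alpha\gamma^{i-1},\alpha\gamma^i]$ as the true degree; hence $\Which_q(v)$ outputs the correct index $i$. The required sample size is $q = \qW(\gamma,\mu,R,\tau) = \qD(\Theta(\mu/R), \tau)$, which is a polynomial in $\gamma, \mu, R, \tau$ as claimed (noting $\mu$ enters only through $1/\mu$ and $R$ through $R$, both polynomially via Lemma~\ref{lemma:approx-degree}).

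A minor subtlety to treat carefully is the boundary between $\GP{1}$ and $\GP{0}$, namely root degrees just above $s$: here $\GP{1}$ uses root degrees in $(s, \alpha\gamma]$, and since $R > s$ and the tree is $R$-good, $\deg_G(\rt(v))$ is bounded by $R$ times a quantity that is at least $s+1$, so again a relative error $\Theta(\mu/R)$ suffices to distinguish "high-degree root" from the $s$-bounded-degree case — the BFS step has already made this distinction, so no additional argument is needed. The main obstacle is bookkeeping: making sure the single parameter $\delta$ fed to $\widetilde{\deg}$ is simultaneously small enough (after multiplying by $R$) to resolve the $(1\pm\mu)$-wide safety margins around every threshold $\alpha\gamma^j$, and confirming that the resulting $q$ has no hidden dependence on $B$ or $n$ — which it does not, because Lemma~\ref{lemma:approx-degree} gives a relative guarantee. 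I do not expect any genuine difficulty beyond this; the statement follows by combining the two cases above with a union bound is unnecessary since the lemma is stated for a single fixed $v$.
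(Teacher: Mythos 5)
Your overall strategy is the same as the paper's: split on whether the BFS locates a high-degree vertex, and in the estimation branch convert the guarantee of Lemma~\ref{lemma:approx-degree} (which is relative to $\deg_G$) into one relative to $\deg_{G'}$ via $R$-goodness, using the $(1\pm\mu)$ margins provided by the not-on-the-boundary condition. Your handling of $\GP{i}$ for $i\ge 2$ is essentially the paper's proof.

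There is, however, a genuine gap in your treatment of $\GP{0}$ and of the $\GP{0}$/$\GP{1}$ threshold. ``High-degree'' is defined with respect to $G$, not $G'$: a tree of $\GP{0}$ (maximum $G'$-degree at most $s$) may still contain a vertex $u$ with $\deg_G(u)>s$ most of whose incident edges were removed by the partitioning oracle. The BFS in $\Which$ detects exactly such vertices --- it cannot compute $\deg_{G'}$ exactly, which is the whole reason $\widetilde{\deg}$ exists --- so for such a $v\in\GP{0}$ the procedure \emph{does} enter the estimation branch, contrary to your claims that ``no high-degree vertex is ever encountered'' and that ``the BFS step has already made this distinction.'' The $\GP{0}$-versus-$\GP{1}$ decision is therefore probabilistic, and it requires distinguishing $\deg_{G'}(u)\le s$ from $\deg_{G'}(u)\ge s+1$: an integer gap of $1$ with no $(1\pm\mu)$ safety margin, since the boundary condition only concerns the thresholds $\alpha\gamma^j$ for $j\ge 1$. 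A relative error of $\Theta(\mu)$ in $\deg_{G'}$, which is what your choice $\delta=\Theta(\mu/R)$ yields, translates to an additive error of order $\mu s$ (up to $\mu\alpha\gamma$ at the top of $\GP{1}$'s range) and does not resolve this gap for general $\mu$. The paper instead sets $\delta = O(\mu/(\gamma^2 R))$ precisely so that the additive error $\delta R\cdot\alpha\gamma$ is below $1/2$ and the estimate rounds to the correct integer for all degrees up to $\alpha\gamma$. With that correction your argument goes through.
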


\begin{lemma}
  \label{lemma:size-of-garbage}
  For any $\gamma \ge 2$ and $\lambda \in (0,1)$,
  there exist $R = O(s/\lambda)$, $\mu = O(\lambda / \gamma)$ such that
  if $\alpha$ is chosen from $[1, \gamma]$ uniformly at random,
  $|V(\GP{L+1}_{s, \alpha, \gamma, \mu, R})| \le \lambda n$ holds with probability $1 - O(1)$.
\end{lemma}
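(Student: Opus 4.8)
The plan is to bound the two sources of vertices that end up in $\GP{L+1}$ separately: the $R$-bad $s$-rooted trees and the trees on the $(\alpha,\gamma,\mu)$-boundary. For the first source, Lemma~\ref{lemma:R-bad} tells us that the number of vertices in $R$-bad $s$-rooted trees of $G'$ is at most $\frac{4sn}{R}$, so choosing $R = \lceil 8s/\lambda \rceil = O(s/\lambda)$ makes this contribution at most $\frac{\lambda n}{2}$. Note that this bound is deterministic and does not depend on the choice of $\alpha$. For the second source, I would invoke Lemma~\ref{lemma:boundary-bound}: if $\alpha$ is chosen uniformly at random from $[1,\gamma]$, then for $\gamma \ge 2$ and $\mu \in (0,1/3)$, $\alpha$ is $(\gamma, \mu, \lambda/2)$-good with respect to $G$ with probability at least $1 - \frac{8\gamma\mu}{\lambda}$, i.e., the number of vertices on the $(\alpha,\gamma,\mu)$-boundary in $G$ is less than $\frac{\lambda n}{2}$ with that probability.

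Next I would set $\mu$ so that the failure probability is bounded away from $1$; taking $\mu = \frac{\lambda}{16\gamma}$, which is $O(\lambda/\gamma)$ and lies in $(0,1/3)$ since $\lambda < 1 \le \gamma$, gives a success probability of at least $1/2$. On this success event, the total number of vertices in $\GP{L+1}_{s,\alpha,\gamma,\mu,R}$ — which is exactly the union of $R$-bad trees and trees on the $(\alpha,\gamma,\mu)$-boundary — is at most $\frac{\lambda n}{2} + \frac{\lambda n}{2} = \lambda n$, as required. One point to check carefully: Lemma~\ref{lemma:boundary-bound} is stated for the graph $G$ rather than the partitioned graph $G'$, but since we are already working through the partitioning oracle in this section and the statement of Lemma~\ref{lemma:size-of-garbage} concerns the same $G$ (now denoting the graph given through the oracle), the notions of degree and boundary used in both lemmas coincide; I would make this explicit.

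The only genuine subtlety — really a bookkeeping issue rather than a mathematical obstacle — is making sure the boundary condition used in defining $\GP{L+1}$ matches the one quantified over in Lemma~\ref{lemma:boundary-bound}, and that the two bad events (being $R$-bad, being on the boundary) are handled with the right split of $\lambda$. Since the $R$-bad bound is deterministic, we only need a union bound over a single random event, so no probabilistic union bound over many boundaries is needed here. The constants $R = O(s/\lambda)$ and $\mu = O(\lambda/\gamma)$ fall out immediately from the two halves of the argument, and the success probability $1 - O(1)$ (bounded away from $1$, e.g. $\ge 1/2$) is exactly what Lemma~\ref{lemma:boundary-bound} delivers after the choice of $\mu$.
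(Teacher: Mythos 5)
Your proposal is correct and takes essentially the same route as the paper, whose proof is the one-line remark that the lemma follows from Lemma~\ref{lemma:R-bad} and Lemma~\ref{lemma:boundary-bound}; you have simply filled in the intended details (splitting $\lambda$ between the two sources of bad vertices, choosing $R$ and $\mu$ accordingly, and noting that the $R$-bad bound is deterministic so only one random event needs controlling).
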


\newcommand{\qwhich}{q_{\mathrm{\mathop{which}}}}
Using the procedure $\WhichComponent$, we can approximate the number of vertices in $\GP{i}$ by random sampling.
For $i \in \bbN_{\le L}$, we denote by $\Size_{\qloop, \qwhich}(G, i)$ the algorithm
that samples $\qloop$ vertices uniformly at random, and applies $\Which_{\qwhich}$
for each sampled vertex, and then approximates $|V(\GP{i})|$.
By Chernoff's bound, we obtain the following lemma.

\newcommand{\qloopC}{  {q_{\mathrm{\mathop{loop}}\ref{lemma:component-size}}}}
\newcommand{\qwhichC}{ {q_{\mathrm{\mathop{which}}\ref{lemma:component-size}}}}
\begin{lemma}
  \label{lemma:component-size}
  For any $\delta,\tau \in (0,1)$ and parameters $\alpha$, $\gamma$, $\mu$, and $R$, 
  there exist polynomials
  $\qloop=\qloopC(\delta, \tau)$ and $\qwhich = \qwhichC(\delta, \tau)$
  such that the following holds:
  For any $\lambda \in (0,1)$ with $|V(\GP{L+1}_{s, \alpha, \gamma, \mu, R})| \le \lambda n$,
  $|\Size_{\qloop, \qwhich}(G, i) - |V(\GP{i}_{s, \alpha, \gamma, \mu, R})| | \le (\lambda + \delta) n$ with probability $1-\tau$.
  \qed
\end{lemma}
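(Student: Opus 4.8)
The plan is to reduce the claim to a routine Chernoff/Hoeffding estimate for a sampled mean, once the procedure $\Which$ is reliable enough. First I would fix the parameters $\alpha$, $\gamma$, $\mu$, $R$ (these are given) and recall from Lemma~\ref{lemma:which-component} that for a target failure probability $\tau' = \tau'(\delta,\tau)$ there is a polynomial $q_{\ref{lemma:which-component}}(\gamma,\mu,R,\tau')$ so that, on a vertex $v$ lying in $V(\GP{0}) \cup \cdots \cup V(\GP{L})$, the call $\Which_{\qwhich}(v)$ returns the correct index with probability $1-\tau'$; I will take $\qwhich = \qwhichC(\delta,\tau) := q_{\ref{lemma:which-component}}(\gamma,\mu,R,\tau')$ for a suitable choice of $\tau'$ (something like $\tau' = \delta/4$ will do). Then $\Size_{\qloop,\qwhich}(G,i)$ samples $\qloop$ vertices $u_1,\dots,u_{\qloop}$ i.i.d. uniformly from $V(G)$, computes $b_j = \mathbf{1}[\Which_{\qwhich}(u_j) = i]$, and outputs $\frac{n}{\qloop}\sum_j b_j$ (an approximation to $n \cdot \Pr_u[\Which_{\qwhich}(u)=i]$). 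Note $n = |V(G)|$ can be read off, or approximated, but for this lemma I will just treat it as known since the statement compares to $|V(\GP{i})|$ directly.

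The core estimate proceeds in two steps: a bias bound and a concentration bound. For the bias, let $p_i = \Pr_u[\Which_{\qwhich}(u)=i]$ and $p_i^\star = |V(\GP{i})|/n$. A sampled vertex $u$ either lies in $V(\GP{0})\cup\cdots\cup V(\GP{L})$, on which $\Which$ errs with probability at most $\tau'$, or lies in $V(\GP{L+1})$, whose total measure is at most $\lambda$ by hypothesis and on which $\Which$ may return anything. Hence $|p_i - p_i^\star| \le \tau' + \lambda$: the $\tau'$ accounts for misclassifications among the "good" vertices (each such event changes the indicator by at most $1$, and there are at most $\tau'$-probability mass worth of them in expectation, summed over both directions of error it is still bounded by $\tau'$), and the $\lambda$ accounts for the garbage vertices. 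Choosing $\tau' \le \delta/2$ makes this at most $\lambda + \delta/2$. For the concentration, the $b_j$ are i.i.d. Bernoulli-type $\{0,1\}$ variables with mean $p_i$, so by Hoeffding/Chernoff, $\Pr\bigl[\,|\frac1{\qloop}\sum_j b_j - p_i| > \delta/2\,\bigr] \le 2\exp(-\delta^2 \qloop / 2)$, which is at most $\tau$ once $\qloop = \qloopC(\delta,\tau) := O(\log(1/\tau)/\delta^2)$. Multiplying by $n$ and combining via the triangle inequality gives $|\Size_{\qloop,\qwhich}(G,i) - |V(\GP{i})|| \le (\lambda+\delta)n$ with probability $1-\tau$, as claimed; both $\qloop$ and $\qwhich$ are polynomials in $1/\delta$, $\log(1/\tau)$ and (for $\qwhich$) in $\gamma,\mu,R$.

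I should be a little careful that a single run of $\Size_{\qloop,\qwhich}$ invokes $\Which$ $\qloop$ times, so naively the failure probability of "all $\Which$ calls on good vertices are correct" is about $\qloop \tau'$; I handle this not by a union bound but by absorbing it into the bias estimate above — I only need the \emph{expected} number of misclassifications to be small, i.e. $\E[\,\#\{j : \Which_{\qwhich}(u_j)\text{ errs and } u_j \text{ good}\}\,] \le \qloop \tau'$, and then the deviation of the empirical count of such errors from its mean is itself controlled by the same Hoeffding bound applied to the error indicators. So the cleanest route is to analyze the single random variable $\frac1{\qloop}\sum_j \mathbf{1}[\Which_{\qwhich}(u_j)=i]$ directly against its true mean $p_i$ (concentration, no union bound needed since the $u_j$ together with the internal randomness of $\Which$ make the summands i.i.d.), and then bound $|p_i - p_i^\star|$ deterministically as above.

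The main obstacle is purely bookkeeping: making sure the parameter $\tau'$ fed into Lemma~\ref{lemma:which-component} is chosen as a function of $\delta$ (not of $\tau$, which only enters through $\qloop$) so that the bias term is truly $O(\lambda + \delta)$ rather than $O(\lambda + \delta + \tau)$, and making sure the polynomial dependencies stated in the lemma ($\qloop$ depending only on $\delta,\tau$; $\qwhich$ depending on $\delta,\tau$ but — via Lemma~\ref{lemma:which-component} — also implicitly on $\gamma,\mu,R$, which is consistent with the "parameters $\alpha,\gamma,\mu,R$" quantifier in the statement) all line up. There is no real mathematical difficulty beyond the standard Chernoff argument, which is why the paper marks this lemma with \qed and defers nothing further.
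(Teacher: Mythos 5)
Your proposal is correct and follows exactly the route the paper intends: the paper gives no written proof beyond ``sample $\qloop$ vertices, apply $\Which_{\qwhich}$ to each, and invoke Chernoff's bound,'' and your two-step argument (a deterministic bias bound $|p_i - p_i^\star| \le \tau' + \lambda$ from the $\Which$ error rate plus the garbage mass, followed by Hoeffding concentration of the i.i.d.\ indicators around $p_i$) is precisely the standard instantiation of that sketch. Your handling of the parameter bookkeeping --- feeding $\tau' = O(\delta)$ into Lemma~\ref{lemma:which-component} so the misclassification mass is absorbed into the $\delta$ term rather than union-bounded over all $\qloop$ calls --- is the right way to make the stated bound $(\lambda+\delta)n$ come out cleanly.
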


Further, using the procedure $\Which$, we can provide oracle accesses to $\GP{i}$ for $i \in \bbN_{i \le L}$.
Let $\Random_q(G, i)$ denote the procedure
that repeats itself to pick up a vertex $v$ in $G$ uniformly at random and invokes the procedure $\Which_q(v)$
and returns $v$ if the returned value of $\Which$ is $i$.

\newcommand{\lambdaR}{ {\lambda_{\ref{lemma:provide-random-access}}}}
\newcommand{\qR}{ {q_{\ref{lemma:provide-random-access}}}}
\begin{lemma}
  \label{lemma:provide-random-access}
  For every $\delta,\tau \in (0,1)$ and parameters $\alpha$, $\gamma$, $\mu$, and $R$, there exist polynomials $q = \qR(\delta, \tau)$ and $\lambda = \lambdaR(\delta, \tau)$ such that the following holds for every $i \in \bbN_{\le L}$:
  If $|V(\GP{i})| \ge \delta n$ and $|V(\GP{L+1})| \le \lambda n$,
  the procedure $\Random_q(G, i)$ outputs a vertex of $\GP{i}$ uniformly at random
  by invoking the procedure $\Which_q$ at most $O(1/(\delta \tau))$ times
  with probability $1-\tau$.
\end{lemma}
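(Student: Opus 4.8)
The plan is to analyze $\Random_q(G,i)$ as a rejection-sampling procedure: each round picks a uniformly random vertex $v\in V(G)$, runs $\Which_q(v)$, and accepts iff the answer is $i$. I would fix parameters by invoking the earlier lemmas in the following order. First, apply Lemma~\ref{lemma:which-component} to get a polynomial $q_0(\gamma,\mu,R,\tau')$ such that for any fixed $v\in V(\GP{0})\cup\cdots\cup V(\GP{L})$, the call $\Which_{q_0}(v)$ returns the correct index with probability $\ge 1-\tau'$; I will take $q = q_0$ with $\tau'$ a suitably small polynomial function of $\delta$ and $\tau$ (something like $\tau' = \delta\tau/c$ for an absolute constant $c$). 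Choose $\lambda = \lambdaR(\delta,\tau)$ to be a small polynomial in $\delta,\tau$ as well; its only role is to guarantee, via the hypothesis $|V(\GP{L+1})|\le\lambda n$, that the "garbage" part is a negligible fraction of the vertices so that (a) a random vertex lands in $\GP{L+1}$ only with tiny probability, and (b) conditioning on not being in $\GP{L+1}$ barely changes the distribution.

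The key steps are then: (1) \emph{Acceptance probability per round.} A uniformly random $v$ lies in $V(\GP{i})$ with probability $|V(\GP{i})|/n \ge \delta$. Conditioned on $v\in V(\GP{i})$, $\Which_q$ outputs $i$ with probability $\ge 1-\tau'$; conditioned on $v\in V(\GP{j})$ for $j\le L$, $j\ne i$, it outputs $i$ with probability $\le\tau'$; conditioned on $v\in V(\GP{L+1})$ it may output anything, but this event has probability $\le\lambda$. Hence each round accepts (and returns a vertex of $\GP{i}$) with probability at least $\delta(1-\tau') \ge \delta/2$ for $\tau'\le 1/2$. (2) \emph{Number of rounds.} Since each round independently succeeds with probability $\ge\delta/2$, after $O(\log(1/\tau)/\delta)$ rounds the procedure has accepted at least once with probability $\ge 1-\tau/2$; I would actually phrase the round bound as $O(1/(\delta\tau))$ to match the statement, which is weaker and follows from Markov's inequality on the geometric waiting time (expected number of rounds $\le 2/\delta$, so the probability of exceeding $O(1/(\delta\tau))$ rounds is $\le\tau$). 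Each round invokes $\Which_q$ once, so the invocation count is $O(1/(\delta\tau))$ as claimed. (3) \emph{Correctness of the returned vertex's distribution.} Condition on the procedure halting and returning some $v$. We must show $v$ is (close to) uniform over $V(\GP{i})$. A vertex $u\in V(\GP{i})$ is returned in a given accepting round with probability $\frac{1}{n}\Pr[\Which_q(u)=i]$, which lies in $[\frac{1}{n}(1-\tau'), \frac{1}{n}]$ — the same up to a $(1\pm\tau')$ factor across all $u\in V(\GP{i})$. A vertex outside $\GP{i}$ is returned only through a $\Which$ error (probability $\le\tau'$ each, total mass $\le\tau'+\lambda$ relative to the $\ge\delta$ mass of correct returns). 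Taking $\tau'$ and $\lambda$ small enough polynomials in $\delta,\tau$, the total variation distance between the returned distribution and the uniform distribution on $V(\GP{i})$ is at most $\tau$; I would fold all these tiny error contributions into the single failure probability $\tau$ in the statement.

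The main obstacle is bookkeeping the error budget cleanly: three separate error sources (the $\Which_{q}$ misclassification probability $\tau'$, the garbage-fraction $\lambda$, and the round-cutoff failure) all need to be simultaneously charged against the single $\tau$ in the conclusion, while also ensuring the returned vertex is genuinely (near-)uniform rather than merely "a vertex of $\GP{i}$." I expect this to require choosing $\tau' = \Theta(\delta\tau)$ and $\lambda = \Theta(\delta\tau)$ (up to constants), so that the conditional distribution of the returned vertex deviates from uniform by $O(\tau)$ in total variation; since the lemma only promises the weaker event "outputs a vertex of $\GP{i}$ uniformly at random \dots with probability $1-\tau$," it suffices to show that with probability $\ge 1-\tau$ the procedure both halts within $O(1/(\delta\tau))$ invocations and returns a vertex drawn exactly uniformly — this is achieved by defining the "good event" as: no $\Which$ call on a vertex of $\GP{0}\cup\cdots\cup\GP{L}$ errs, no sampled vertex lands in $\GP{L+1}$ before acceptance, and acceptance happens within the round cutoff. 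On this good event the returned vertex is exactly uniform on $V(\GP{i})$, and a union bound over the (at most $O(1/(\delta\tau))$) rounds, each contributing $\le\tau'+\lambda$ error, bounds the bad-event probability by $\tau$ once $\tau',\lambda$ are chosen as the stated polynomials. Everything else is routine Chernoff/Markov estimation already packaged in the cited lemmas.
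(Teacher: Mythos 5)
Your proposal matches the paper's proof essentially exactly: the paper defines the good event $A_k$ (your ``good event,'' stratified by the number $k$ of rejected rounds), sets $\tau' = O(\delta\tau)$, $\lambda = O(\delta\tau)$, $q = \qW(\gamma,\mu,R,\tau')$, and sums the resulting geometric series to obtain $\Pr[A_0\vee\cdots\vee A_t] \ge \delta(1-\tau')^2/(\lambda+\delta+\tau') \ge 1-\tau$ with $t = O(\log(1/\tau)/\delta)$. One small caution: your closing sentence's ``union bound over the $O(1/(\delta\tau))$ rounds, each contributing $\tau'+\lambda$'' yields only $O(1)$, not $\tau$ — the correct accounting is the rate comparison you state earlier (error mass $\tau'+\lambda$ per round against acceptance mass $\delta$ per round), which is precisely what the paper's geometric-series computation formalizes.
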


The sketch of the proof of Theorem~\ref{thm:forest-isomorphism} is as follows.
As we mentioned, 
it suffices to create an isomorphism tester between $\GP{i}$ and $\HP{i}$
for each $i \in \bbN_{\le L}$.
First, set $\gamma = 2s$ and choose $\alpha \in [1, \gamma]$ uniformly at random.
From Lemma~\ref{lemma:size-of-garbage}, $|V(\GP{L+1})|$ and $|V(\HP{L+1})|$ are small with high probability.
Thus, we can apply the procedures $\Which$, $\Size$ and $\Random$ to the input graphs.
From Lemmas~\ref{lemma:which-component}, \ref{lemma:component-size}, and \ref{lemma:provide-random-access}, these procedures output the correct value with sufficiently high probability.
Using the procedure $\Size$,
we can test if $|V(\GP{i})|$ and $|V(\HP{i})|$ are large and sufficiently close.
Then, we can test
forest-isomorphism between $\GP{i}$ and $\HP{i}$ (with high probability)
by providing oracle accesses to $\GP{i}$ and $\HP{i}$ through the procedure $\Random$.
For $i=0$, we use a method proposed by \cite{Newman:2013hg}
with a little modification. See Appendix~\ref{ss:s-bounded-degree} for details.
For $1 \le i \le L$, we use the method in Section~\ref{sec:simple-case}.
Here, every parameter depends on $\polylog(n)$ assuming that $s$ is constant.
Thus, the query complexity of our forest-isomorphism tester is $\polylog(n)$ in total.
See Algorithm~\ref{algorithm:test-isomorhism} in Appendix~\ref{ss:proof-of-isomorphism-tester} for the detailed description of the tester for forest-isomorphism.

\newcommand{\etaII}{\eta_{\ref{lemma:s-bounded-degree}}}
\newcommand{\DII}{D_{\ref{lemma:s-bounded-degree}}}
\newcommand{\deltaII}{\delta''_{\ref{lemma:s-bounded-degree}}}
\newcommand{\Dist}{\mathsf{\mathop{Dist}}}

\newcommand{\TS}[1]{\tilde{z}_{#1}}

\bibliographystyle{abbrv}
\bibliography{isomorphism}

\appendix
\section{Proof of Lemma~\ref{lemma:approximate-size}}
\label{sec:missing-po}

\begin{proof}
  First, we evaluate the probability that our procedure returns a good approximation.
  Let $S = |V(T)|-1$ and set $q = \qII = O(\frac{s^2 \log(1/\tau)}{\delta^2})$.
  In Line~\ref{line:loop-begin}--\ref{line:loop-end}, we choose an edge incident to $v$ in $G'$ uniformly at random.
  Therefore, by Chernoff's bound, $|\frac{\tilde{S}}{q} - \frac{S}{\deg_{G'}(v)}| \le O(\delta)$ holds
  with probability $1-O(\tau)$.
  Since we assume that $T$ is $R$-good, at least one of (i) $R\deg_{G'}(v) > \deg_G(v)$ and (ii) $\deg_G(v) < R$ holds.
  To bound $| \widetilde{\deg}_{G', q'}(v) \frac{\tilde{S}}{q} - S |$,
  let us consider these two cases.

  When (i) holds but (ii) does not hold,
  $v$ is not an isolated vertex in $G'$ and $\tilde{d}$ in the procedure is equal to the output of $\widetilde{\deg}_{G', q'}(v)$.
  From Lemma~\ref{lemma:approx-degree},
  $| \widetilde{\deg}_{G', q'}(v) - \deg_{G'}(v) | \le O(\deg_{G}(v)/R)$ holds with probability $1-O(\tau)$.
  The following claim is useful.

\begin{claim}
  \label{claim:four-reals}
  For any positive reals $A,B,C,D$ with $|A-B| \le \alpha$ and $|C-D| \le \beta$,
  $|AC-BD| \le \alpha D + \beta B + \alpha \beta$ holds.
\end{claim}
\begin{proof}
  By the triangle inequality,
  \begin{eqnarray*}
  |AC-BD| &=& \frac{1}{2}|(A-B)(C+D) + (A+B)(C-D)| \\
  &\le& \frac{1}{2}\left( |A-B| |C+D| + |A+B| |C-D| \right) \\
  &\le& \frac{1}{2}\left( \alpha |C+D| + \beta |A+B| \right)
  \le \alpha D + \beta B + \alpha \beta.
  \end{eqnarray*}
\end{proof}

  From the claim and the union bound,
  we have $| \widetilde{\deg}_{G', q'}(v) \frac{\tilde{S}}{q} - S |
  \le O(\delta \deg_{G'}(v)) + O(\frac{\delta \deg_G(v)}{R \deg_{G'}(v)} S) + O(\delta \deg_G(v)/R)
  \le O(\delta S) + O(\delta S) + O(\delta S) = \delta S$
  with probability $1-\tau$.

  When (ii) holds,
  $\tilde{d} = \deg_{G'}(v)$ holds (with probability $1-O(\tau)$) by rounding in Line~\ref{line:rounding}.
  Thus, if $v$ is an isolated vertex in $G'$, our procedure will return $1$ in Line~\ref{line:isolated-vertex}.
  Otherwise, $| \tilde{d}\frac{\tilde{S}}{q} - S | =
  | \frac{\tilde{S}}{q} - \frac{S}{\deg_{G'}(v)} | \cdot \tilde{d}
  \le \delta \tilde{d} \le \delta |V(T)|$ holds with probability $1-\tau$.

  Next, we turn to analyze the expected number of queries issued by the procedure.
  Since $q'$ is $\poly(R, \delta, \tau)$, we make at most $\poly(R, \delta, \tau)$ queries
  to compute $\widetilde{\deg}_{G', q'}(v)$ in Line~\ref{line:compute-tilde-d}.
  Further, since we assume that $T$ is $R$-good,
  Line~\ref{line:loop-begin}--\ref{line:loop-end} takes $O(R+\poly(s))$ time on average.
  Thus, the expected query complexity is $O(\poly(s, R, \delta, \tau))$ in total.
\end{proof}

\section{Proof of Lemma~\ref{lemma:upper-bound-for-simple-case}}
\label{section:approx-alg-for-s-rooted-trees}

In this section, we prove Lemma~\ref{lemma:upper-bound-for-simple-case}.
We use the notions defined in Section~\ref{sec:simple-case}.
Throughout this section, $c(G)$ denotes the number of connected components in $G$.

This section is organized as follows.
First, we analyze the behavior of the algorithm~$\widetilde{\Sketch}$ in Algorithm~\ref{alg:sketch} in Section~\ref{ss:approx-sketch}.
Next, we discuss the formal definition of the minimum matching between sketches in Section~\ref{ss:matching}.
Finally, we show that the minimum matching is a good approximation to $d(G,H)$
and we give a tester for isomorphism in Section~\ref{ss:approx-rooted-forest}.

\subsection{Approximation algorithms for sketches}
\label{ss:approx-sketch}

In this subsection, we analyze the behavior of the algorithm $\widetilde{\Sketch}$.
Upon computing the sketch, it is desired that the procedure~$\widetilde{\Size}_{G,\qsize,R}(T)$ always outputs a good approximation to $|V(T)|$ for an $s$-rooted tree $T$.
For $\delta' \in (0,1)$ and an $s$-rooted tree $T$,
we say that (the output of) $\widetilde{\Size}_{G,\qsize,R}(T)$ \emph{is $\delta'$-safe}
if $|\widetilde\Size_{G,q, R}(\rt(T)) - |V(T)| | \le \delta' |V(T)|$ holds.
Further, we say that (the execution of) $\widetilde{\Sketch}$ is \emph{$\delta'$-safe}
if all the outputs of $\widetilde{\Size}$ in Line~\ref{alg-line:approx-inverse-size} are $\delta'$-safe.
Let $\Sketch^{\delta'}_{\qloop, \qfreq, \qsize ,R, k}(G)(\bfu) = \E[\widetilde{\Sketch}_{\qloop, \qfreq, \qsize, R, k}(G)(\bfu) \mid \widetilde{\Sketch} : \delta'\mbox{-safe}]$.
Note that 
$\widetilde{\Sketch}$ is $\delta'$-safe with high probability
if the parameters are chosen appropriately
by Lemma~\ref{lemma:approximate-size}.
Let $G^{(i)}\ (i \in [c(G)])$ be the $i$-th $s$-rooted tree in $G$
and $v^{(i)} = \rt(G^{(i)})$.
We use the following two lemmas in the next subsection.

\begin{lemma}
  \label{lemma:approximate-count}
  For any $k,s,R,B \ge 1$ and $\delta,\delta',\tau \in (0,1)$, there exist
  $\qloop = \qIVloop(k,s, \delta,\delta', \tau)$
  and
  $\qsize = \qIVsize(k,s, \delta,\delta', \tau)$
  such that
  for any $\qfreq$ and an $R$-good $s$-rooted forest $G$ with root degrees in $(B, \gamma B]$,
  $|\widetilde\Sketch_{\qloop,\qfreq,\qsize,R,k}(G)(\bfu) - \Sketch^{\delta'}_{\qloop,\qfreq,\qsize,R,k}(G)(\bfu)| \le \frac{\delta n}{B}$
  holds for all $\bfu \in \bbN_{<k}^{t(s)}$
  with probability at least $1-\tau$.
  Here $\qIVloop$ and $\qIVsize$ are polynomials in $k^{t(s)}, \delta, \delta', \tau$.
\end{lemma}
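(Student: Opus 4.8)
The plan is to view $\widetilde\Sketch$ as an empirical average of $\qloop$ i.i.d. trials and apply a concentration inequality, conditioned on the event that the execution is $\delta'$-safe. First I would fix a $\delta'$-safe execution and examine a single iteration $j$ of the loop in Algorithm~\ref{alg:sketch}. In iteration $j$ we pick a uniformly random vertex $u \in V(G)$, walk to the root $v$ of its $s$-rooted tree, round $\widetilde\Freq_{\qfreq}(v)$ to a cell $\bfu$, and add $1/\widetilde\Size_{G,\qsize,R}(v)$ to $\Phi(\bfu)$. Define, for a fixed $\bfu$, the random variable $X_j^{\bfu}$ to be this contribution to coordinate $\bfu$ (and $0$ if the rounded frequency is not $\bfu$). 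Then $\widetilde\Sketch(G)(\bfu) = \frac{\tilde n}{\qloop}\sum_{j=1}^{\qloop} X_j^{\bfu}$ and $\Sketch^{\delta'}(G)(\bfu) = \tilde n\,\E[X_1^{\bfu}\mid \delta'\text{-safe}]$, so it suffices to show $\bigl|\frac{1}{\qloop}\sum_j X_j^{\bfu} - \E[X_1^{\bfu}]\bigr| \le \frac{\delta}{B}$ with high probability, uniformly over the $k^{t(s)}$ cells.

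The key observation is that each $X_j^{\bfu}$ is bounded: since the execution is $\delta'$-safe, $\widetilde\Size_{G,\qsize,R}(v) \ge (1-\delta')|V(T)| \ge (1-\delta')$, so $0 \le X_j^{\bfu} \le \frac{1}{1-\delta'} = O(1)$. Hoeffding's (or Chernoff's) inequality then gives $\Pr\bigl[|\frac{1}{\qloop}\sum_j X_j^{\bfu} - \E[X_1^{\bfu}]| > \frac{\delta}{\tilde n}\bigr] \le 2\exp(-\Omega(\qloop \delta^2 /\tilde n^2 \cdot (1-\delta')^2))$ — but this is the wrong scaling, because we need error $\delta/B$ in the final (un-normalized) quantity while $B$ can be as large as $\Theta(n)$. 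The right move is to note that the $s$-rooted trees all have root degree in $(B,\gamma B]$, hence each has at least $B$ vertices, so there are at most $n/B$ of them; consequently $\E[X_1^{\bfu}] = \sum_{i : \Round(\cdot)=\bfu} \frac{1}{|V(G)|}\cdot|V(G^{(i)})| \cdot \E[1/\widetilde\Size(v^{(i)})\mid\delta'\text{-safe}]$ is itself $O(1/B)$ in the aggregate, and more importantly the variance is controlled by the number of relevant components. I would instead apply the multiplicative Chernoff bound to the count version: let $Y_j^{\bfu}$ be the indicator that iteration $j$ lands (after rounding) in cell $\bfu$; then we are estimating, up to the bounded weight $1/\widetilde\Size$, a quantity of order $n/B$ out of $n$ samples-worth. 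Choosing $\qloop = \qIVloop(k,s,\delta,\delta',\tau) = O\bigl(\frac{k^{t(s)}}{\delta^2}\log\frac{k^{t(s)}}{\tau}\bigr)$ and $\qsize = \qIVsize(k,s,\delta,\delta',\tau)$ large enough that the conditional expectation $\E[1/\widetilde\Size\mid\delta'\text{-safe}]$ is within a $(1\pm O(\delta))$ factor of $1/|V(T)|$ (this is where Lemma~\ref{lemma:approximate-size} and a Jensen-type bound on the reciprocal enter), the deviation of each coordinate is at most $\frac{\delta n}{B}$ except with probability $\tau/k^{t(s)}$; a union bound over all $k^{t(s)}$ cells finishes it.

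The main obstacle I expect is the reciprocal $1/\widetilde\Size$: concentration bounds apply cleanly to $\widetilde\Size$ itself (Lemma~\ref{lemma:approximate-size}), but the algorithm accumulates $1/\widetilde\Size$, and the expectation of a reciprocal is not the reciprocal of the expectation. The $\delta'$-safety conditioning is precisely designed to handle this — it forces $\widetilde\Size \in [(1-\delta')|V(T)|, (1+\delta')|V(T)|]$, so $1/\widetilde\Size \in [\frac{1}{(1+\delta')|V(T)|}, \frac{1}{(1-\delta')|V(T)|}]$, giving a multiplicative $(1\pm O(\delta'))$ control of the reciprocal around $1/|V(T)|$, and the difference $\widetilde\Sketch - \Sketch^{\delta'}$ is then a genuine empirical-mean deviation of a bounded random variable. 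One must be careful that the conditioning on $\delta'$-safety is a global event over the whole execution, not independent across iterations; but since $\qsize$ can be taken large enough that a single $\widetilde\Size$ call is $\delta'$-safe with probability $1-\tau/\qloop$, conditioning on the intersection changes each iteration's law only negligibly, and a standard argument (or working with the $X_j$ restricted to the safe event directly, as the statement does via $\Sketch^{\delta'}$) makes this rigorous. The quantitative polynomial dependence of $\qIVloop,\qIVsize$ on $k^{t(s)},\delta,\delta',\tau$ falls out of the Chernoff exponent and the union bound.
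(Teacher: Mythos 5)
Your proposal is correct and follows essentially the same route as the paper: both view $\widetilde\Sketch(G)(\bfu)$ as an empirical average of i.i.d.\ per-iteration contributions, use $\delta'$-safety together with the fact that every tree has more than $B$ vertices (so each contribution $1/\widetilde\Size$ is at most $O(1/((1-\delta')B))$ and $c(G)\le n/B$) to control fluctuations at scale $\delta n/B$, take $\qsize=\qII(s,\delta',\tau/(2\qloop))$ so the whole run is $\delta'$-safe with probability $1-O(\tau)$, and finish with a union bound over the $k^{t(s)}$ cells. The only difference is the concentration tool: the paper bounds the conditional second moment and applies Chebyshev (yielding $\qloop$ polynomial in $k^{t(s)}/\tau$), while your rescaled Hoeffding bound would give a logarithmic dependence; both meet the lemma's requirements, and your side remark about $\E[1/\widetilde\Size\mid\delta'\text{-safe}]$ being close to $1/|V(T)|$ is not actually needed here (it is used in Lemmas~\ref{lemma:norm-of-count} and \ref{lemma:g-count-diff}, not in this one).
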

\begin{proof}
  For simplicity, we omit the subscript of procedures.
  Let $p_{i, \bfu}$ be the probability that a vertex of $G^{(i)}$ is chosen in Line~\ref{alg-line:chose-u} and $\bfu$ is obtained in Line~\ref{alg-line:rounding} of $\widetilde{\Sketch}$.
  Denote by $\Phi$ the mapping in the algorithm~$\widetilde{\Sketch}$.

  For $\delta' \in (0,1)$, it holds that
  \begin{eqnarray*}
    \Var[\Phi(\bfu) \mid \widetilde{\Sketch} : \delta'\mbox{-safe}] &\le&
    \sum_{i \in [c(G)]} p_{i,\bfu}
      \cdot \E\left[ \frac{1}{\widetilde{\Size}_{G,\qsize,R}(v^{(i)})^2}  \middle|  \widetilde{\Size} : \delta'\mbox{-safe} \right] \\
    &\le& \sum_{i \in [c(G)]} \frac{|V(G^{(i)})|}{n}
      \cdot \left( \frac{1}{(1-\delta')|V(G^{(i)})|} \right)^2.
  \end{eqnarray*}

  Further, since we assume that the root degree of each $s$-tree in $G$ is in $(B, \gamma B]$,
  $c(G) \le n / B$ holds.
  By Chebyshev's inequality,
  \begin{eqnarray*}
    & &
    \Pr\left[ |\widetilde{\Sketch}(\bfu) - \Sketch^{\delta'}(\bfu)| \ge \frac{\delta n}{B}
      \middle| \widetilde{\Sketch}:\delta'\mbox{-safe} \right] \\
    &\le&
      \left(\frac{B}{\delta n}\right)^2
      \cdot \left(\frac{\tilde{n}}{\qloop}\right)^2
      \cdot \Var[\Phi(\bfu) \mid \widetilde{\Sketch} : \delta'\mbox{-safe}] \\
    &\le&
      \left(\frac{B}{\delta n}\right)^2
      \cdot \left(\frac{\tilde{n}}{\qloop}\right)^2
      \cdot \qloop
      \sum_{i \in [c(G)]} \frac{|V(G^{(i)})|}{n}
      \cdot \left( \frac{1}{(1-\delta')|V(G^{(i)})|} \right)^2 \\
    &\le&
      \frac{1}{\qloop\delta^2 (1-\delta')^2}
      \sum_{i \in [c(G)]} \frac{B^2}{n|V(G^{(i)})|} \\
    &\le&
      \frac{1}{\qloop\delta^2 (1-\delta')^2}.
  \end{eqnarray*}

  Here, in the last inequality, note that
  \[
    \sum_{i \in [c(G)]} \frac{B^2}{n|V(G^{(i)})|} \le \sum_{i \in [c(G)]} \frac{B}{n} = \frac{c(G)B}{n} \le 1.
  \]

  Set $\qIVloop(s,k,\delta,\delta',\tau) = O(k^{t(s)}/(\delta^2 (1-\delta')^2 \tau))$
  and $\qIVsize(s,k,\delta,\delta',\tau) = \qII(s, \delta', \tau / (2\qloop))$.
  Then, the execution of $\widetilde{\Sketch}$ is $\delta'$-safe with probability $1-\tau/(2k^{t(s)})$.
  Therefore, concerning the conditional probability,
  $|\widetilde{\Sketch}(\bfu) - \Sketch^{\delta'}(\bfu)| < \frac{\delta n}{B}$
  holds with probability $1-\tau/k^{t(s)}$.
  Applying the union bound for all $\bfu \in \bbN_{<k}^{t(s)}$, the lemma follows.
\end{proof}

\begin{lemma}
  \label{lemma:norm-of-count}
  For any $s \ge 1$ and $\delta \in (0,1)$,
  there exist linear functions $\delta'=\deltaV(\delta)$ and $\eta = \etaV(\delta)$ such that
  if $\frac{\widetilde{n}}{n} \in [1-\eta,1]$,
  then for any $R,k,B,\gamma,\qloop,\qfreq,\qsize \ge 1$ and $R$-good $s$-rooted forest $G$ with root degrees in $(B, \gamma B]$,
  $|\|\Sketch^{\delta'}_{\qloop,\qfreq,\qsize,R,k}(G)\|_1 - c(G)| \le \delta c(G)$ holds.
\end{lemma}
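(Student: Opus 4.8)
The plan is to show that the expected sketch (conditioned on $\delta'$-safety), when summed over all cells, approximates the number of connected components $c(G)$, and that the error is controlled by a linear function of $\delta$. First I would observe that $\widetilde{\Sketch}$ samples a uniformly random vertex $u \in V(G)$, walks to the root $v$ of the $s$-rooted tree $T$ containing $u$, and adds $\tilde{n}/(\qloop \cdot \widetilde{\Size}(v))$ to the cell indexed by $\Round(\widetilde{\Freq}(v))$. Summing $\Phi$ over all cells just drops the $\Round$ step: the total mass added in one iteration is $1/\widetilde{\Size}_{G,\qsize,R}(v)$ where $v$ is the root of the tree containing the sampled vertex. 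Hence $\|\Sketch^{\delta'}(G)\|_1 = \tilde{n} \cdot \E\bigl[ 1/\widetilde{\Size}_{G,\qsize,R}(v) \mid \widetilde{\Sketch}:\delta'\text{-safe}\bigr]$, where the outer expectation is over the uniform choice of $u$.

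Next I would compute this expectation by grouping the sampled vertices $u$ by which tree $G^{(i)}$ they land in. A uniformly random vertex lies in $G^{(i)}$ with probability $|V(G^{(i)})|/n$, and conditioned on that, the root is $v^{(i)}$. So
\[
\E\left[ \frac{1}{\widetilde{\Size}_{G,\qsize,R}(v)} \,\middle|\, \delta'\text{-safe} \right]
= \sum_{i \in [c(G)]} \frac{|V(G^{(i)})|}{n} \cdot \E\left[ \frac{1}{\widetilde{\Size}_{G,\qsize,R}(v^{(i)})} \,\middle|\, \widetilde{\Size}:\delta'\text{-safe} \right].
\]
By definition of $\delta'$-safety, each term $\widetilde{\Size}_{G,\qsize,R}(v^{(i)})$ lies in $[(1-\delta')|V(G^{(i)})|, (1+\delta')|V(G^{(i)})|]$, so its reciprocal lies in $[\frac{1}{(1+\delta')|V(G^{(i)})|}, \frac{1}{(1-\delta')|V(G^{(i)})|}]$. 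Substituting, the sum over $i$ telescopes: $\sum_i \frac{|V(G^{(i)})|}{n} \cdot \frac{1}{|V(G^{(i)})|} = \frac{c(G)}{n}$ up to the $(1\pm\delta')$ factors. This gives
\[
\frac{c(G)}{n(1+\delta')} \le \E\left[ \frac{1}{\widetilde{\Size}(v)} \,\middle|\, \delta'\text{-safe} \right] \le \frac{c(G)}{n(1-\delta')}.
\]
Multiplying by $\tilde{n}$ and using $\tilde{n}/n \in [1-\eta, 1]$, I get $\|\Sketch^{\delta'}(G)\|_1 \in \bigl[\frac{(1-\eta)c(G)}{1+\delta'}, \frac{c(G)}{1-\delta'}\bigr]$. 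Choosing $\delta' = \deltaV(\delta)$ and $\eta = \etaV(\delta)$ to be suitable linear functions of $\delta$ (for instance $\delta' = \delta/4$ and $\eta = \delta/4$, so that both $\frac{1-\eta}{1+\delta'} \ge 1-\delta$ and $\frac{1}{1-\delta'} \le 1+\delta$ hold) yields $|\|\Sketch^{\delta'}(G)\|_1 - c(G)| \le \delta c(G)$, as required.

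The argument is essentially a routine expectation computation once the right grouping is set up; there is no hard combinatorial obstacle. The one point requiring a little care is that the conditioning on $\widetilde{\Sketch}$ being $\delta'$-safe must be compatible with taking the per-tree expectations — i.e., that conditioning on global $\delta'$-safety does not distort the individual distributions of $\widetilde{\Size}(v^{(i)})$ beyond the stated $(1\pm\delta')$ envelope. This is immediate because $\delta'$-safety is, by definition, exactly the event that \emph{every} invoked $\widetilde{\Size}$ output satisfies the $\delta'$-bound, so conditioning on it forces each reciprocal into the claimed interval deterministically, regardless of any correlation between the samples. Thus the bounds above hold term-by-term and the linear dependence of $\delta'$ and $\eta$ on $\delta$ follows directly from the choice above.
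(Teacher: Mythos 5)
Your proof is correct and follows essentially the same route as the paper's: express $\|\Sketch^{\delta'}(G)\|_1$ as $\tilde{n}$ times the conditional expectation of $1/\widetilde{\Size}(v(u))$ over a uniform vertex $u$, decompose by which tree $G^{(i)}$ contains $u$, bound each reciprocal via $\delta'$-safety, and absorb the $\tilde{n}/n$ factor via $\eta$. The paper makes the identical computation and simply sets $\deltaV(\delta)=\etaV(\delta)=\delta/2$ where you use $\delta/4$; both choices satisfy the required inequalities.
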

\begin{proof}
  Again, for simplicity, we omit the subscript of procedures.
  For $u\in V(G)$, let $v(u)$ be a root vertex of an $s$-rooted tree that $u$ belongs to.
  \begin{eqnarray*}
    \|\Sketch^{\delta'}(G)\|_1
    &=& \frac{\tilde{n}}{\qloop} \cdot \qloop
      \cdot \E_{u \in V(G)} \left[\frac{1}{\widetilde{\Size}_{G, \qsize, R}(v(u))}
      \middle| \widetilde{\Size}:\delta'\mbox{-safe} \right] \\
    &=& \tilde{n} \sum_{i\in[c(G)]}
      \Pr[\mbox{vertex of } G^{(i)} \mbox{ is chosen}] \cdot
      \E\left[ \frac{1}{\widetilde{\Size}_{G, \qsize, R}(v^{(i)})}
      \middle| \widetilde{\Size}:\delta'\mbox{-safe} \right] \\
    &=& \tilde{n} \sum_{i\in[c(G)]}
      \frac{|V(G^{(i)})|}{n} \cdot
      \E\left[ \frac{1}{\widetilde{\Size}_{G, \qsize, R}(v^{(i)})}
      \middle| \widetilde{\Size}:\delta'\mbox{-safe} \right].
  \end{eqnarray*}
  By the condition of the lemma, $\tilde{n}/n \in [1-\eta,1]$.
  Further, the expectation in the last equation is
  between $1/( (1+\delta')|V(G^{(i)})|)$ and $1/( (1-\delta')|V(G^{(i)})|)$.
  Therefore,
  $\|\Sketch^{\delta'}(G)\|_1 \in [(1-\eta)c(G)/(1+\delta'), c(G)/(1-\delta')]$.
  Setting $\etaV(\delta) = \deltaV(\delta) = \delta/2$,
  the lemma follows.
\end{proof}

Hereafter,
for $\delta,\tau \in (0,1)$ and $\qfreq, k, R \ge 1$,
we denote by $\widetilde{\Sketch}_{\delta, \tau, \qfreq, R, k}$
the procedure $\widetilde{\Sketch}_{\qloop, \qfreq, \qsize, R, k}$
for $\delta'=\deltaV(\delta)$, $\qloop=\qIVloop(k,s, \delta,\delta', \tau)$, and $\qsize=\qIVsize(k,s, \delta,\delta', \tau)$
in order to simplify the notion.
Further, we denote by $\Sketch_{\delta, \tau, \qfreq, R, k}$
the conditional expectation $\Sketch^{\delta'}_{\qloop, \qfreq, \qsize, R, k}$.

Finally, we consider the (expected) query complexity of $\widetilde{\Sketch}$.
Since $\qIVloop, \qIVsize$ are polynomials in $k$, $\delta$, and $\tau$, the following holds.

\begin{lemma}
  \label{lemma:query-complexity-of-sketch}
  For $\delta,\tau \in (0,1)$ and $R,k \ge 1$,
  the expected query complexity of the algorithm~$\widetilde{\Sketch}_{\delta, \tau, \qfreq, R, k}$
  is a polynomial in $\delta, \tau, \qfreq, R, k^{t(s)}$.
  \qed
\end{lemma}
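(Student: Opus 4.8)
The plan is to bound the expected cost of one iteration of the main loop in Algorithm~\ref{alg:sketch} and multiply by the number of iterations $\qloop$; throughout, recall that $s$ is treated as a constant and that by Lemma~\ref{lmm:partition-to-s-forest} each alive-edge query costs $O(\poly(s))$ queries to $\calO_G$. A single iteration does four things: it draws a uniformly random vertex $u$ ($O(1)$ queries); it runs a BFS from $u$ to the root $v$ of the $s$-rooted tree containing $u$, which costs $O(\poly(s))$ queries by Lemma~\ref{lemma:find-root} (here every component of $G$ is an $s$-rooted tree, so this is well defined); it calls $\widetilde{\Freq}_{\qfreq}(v)$; and it calls $\widetilde{\Size}_{G,\qsize,R}(v)$. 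So it remains to bound the cost of the last two calls.

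For $\widetilde{\Freq}_{\qfreq}(v)$, I would read off from Algorithm~\ref{algorithm:estimate-freq} that each of its $\qfreq$ samples performs one alive-edge query and, when the sampled edge is alive, a BFS over the subtree it leads into. That subtree has at most $s$ vertices, and each of those vertices has degree at most $s$ in $G$, since $v$ is the unique high-degree vertex of its tree by the third property of Lemma~\ref{lmm:partition-to-s-forest}; hence each such BFS costs $O(\poly(s))$ queries, and $\widetilde{\Freq}_{\qfreq}(v)$ costs $O(\qfreq\cdot\poly(s))$ queries in total. For $\widetilde{\Size}_{G,\qsize,R}(v)$, since $G$ is an $R$-good $s$-rooted forest the root $v$ lies in an $R$-good tree, so Lemma~\ref{lemma:approximate-size} applies with $\qsize=\qIVsize(k,s,\delta,\delta',\tau)$ and gives an expected cost $O(\poly(s,R,\delta',\tau/(2\qloop)))$; using $\delta'=\deltaV(\delta)$ and the fact from Lemma~\ref{lemma:approximate-count} that $\qloop=\qIVloop(k,s,\delta,\delta',\tau)$ and $\qsize$ are polynomials in $k^{t(s)},\delta,\tau$, this expected cost is a polynomial in $R,\delta,\tau,k^{t(s)}$.

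Putting these together, the expected cost of one iteration is a polynomial in $\qfreq,R,\delta,\tau,k^{t(s)}$, and as there are $\qloop$ iterations — with $\qloop$ itself a polynomial in $\delta,\tau,k^{t(s)}$ — linearity of expectation gives the stated bound. I do not expect a genuine obstacle here; the only points needing care are that the bound on $\widetilde{\Size}$ is an expected bound valid only under the $R$-goodness hypothesis (which is precisely why the surrounding development restricts to $R$-good $s$-rooted forests and states an expected rather than worst-case query bound), and the superficial dependence of $\qsize$ on $\qloop$ through the failure-probability rescaling $\tau/(2\qloop)$, which introduces no circularity because $\qloop$ does not depend on $\qsize$.
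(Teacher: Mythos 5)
Your proof is correct and follows the same route the paper takes (the paper simply asserts the lemma from the facts that $\qIVloop$ and $\qIVsize$ are polynomials and that each subroutine has polynomial expected cost); your version just spells out the per-iteration accounting. The two points you flag — that the bound on $\widetilde{\Size}$ is an expected bound requiring $R$-goodness, and that the $\tau/(2\qloop)$ rescaling in $\qsize$ is not circular — are exactly the right things to check, and both are handled correctly.
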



\subsection{Matching sketches}
\label{ss:matching}
In this subsection, we define the distance between two sketches so that it is a good approximation to $\dist(G,H)$.
Let denote by $\PReal$ the set of non-negative reals.

A \emph{weighted point set} is a tuple $X = (w,S)$,
where $w:S \to \PReal$ is a \emph{weight function} and $S$ is a set of vectors.
To define the distance between sketches, we consider the following problem.
\begin{definition}[Minimum matching between weighted point sets]
  Let $X_1 = (w_1, S_1)$ and $X_2 = (w_2, S_2)$ be two weighted points sets with $\|w_1\|_1 = \|w_2\|_1$.
  We call a function $f:S_1 \times S_2 \rightarrow \PReal$
  a flow function from $X_1$ to $X_2$ if
  $\sum_{\bfv \in S_2} f(\bfu, \bfv) = w_1(\bfu)$ for all $\bfu \in S_1$ and
  $\sum_{\bfu \in S_1} f(\bfu, \bfv) = w_2(\bfv)$ for all $\bfv \in S_2$.
  The \emph{value} of a flow function $f$ is defined as
  \[
    \sum_{(\bfu,\bfv) \in S_1 \times S_2} f(\bfu, \bfv) \cdot \|\bfu - \bfv\|_1.
  \]
  The minimum value of a flow function is denoted by $\MM(X_1, X_2)$,
  and the flow function that achieves the minimum value is called the \emph{optimal flow function}.
\end{definition}

Note that the optimal flow function can be calculated by a min-cost flow algorithm on a bipartite graph.
Therefore, the following lemma holds.
\begin{lemma}  \label{cor:solution-integral}
  Let $X_1 = (w_1,S_1)$ and $X_2 = (w_2,S_2)$ be weighted point sets.
  If $w_1$ and $w_2$ are integral,
  there exists an optimal flow function $f^*$ that is integral.
  In particular, if all values of $w_1$ and $w_2$ are $1$,
  the set of pairs $\{(\bfu, \bfv) \in S_1 \times S_2 \mid f^\ast(\bfu,\bfv)=1\}$ forms a matching.
  \qed
\end{lemma}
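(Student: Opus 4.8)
The plan is to recast the minimum matching problem between the two weighted point sets as a min-cost flow problem (a transportation linear program) on a bipartite network, and then read off integrality from total unimodularity.

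First I would set up the network. Introduce a source $\sigma$ and a sink $\theta$; for each $\bfu \in S_1$ add an arc $(\sigma,\bfu)$ of capacity $w_1(\bfu)$ and cost $0$, for each $\bfv \in S_2$ add an arc $(\bfv,\theta)$ of capacity $w_2(\bfv)$ and cost $0$, and for each pair $(\bfu,\bfv) \in S_1 \times S_2$ add an arc of infinite capacity and cost $\|\bfu-\bfv\|_1$. A flow function $f$ from $X_1$ to $X_2$ in the sense of the definition is exactly a feasible $\sigma$--$\theta$ flow of value $\|w_1\|_1 = \|w_2\|_1$ in this network: conservation at the vertices of $S_1$ and $S_2$ is precisely the two families of equalities defining $f$, saturating the source arcs forces the flow value to equal $\|w_1\|_1$, and the automatic bound $0 \le f(\bfu,\bfv) \le \min(w_1(\bfu),w_2(\bfv))$ makes the feasible region a bounded polytope; its cost equals $\sum_{(\bfu,\bfv)} f(\bfu,\bfv)\,\|\bfu-\bfv\|_1$, the value appearing in the definition. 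The polytope is nonempty, since the ``product'' flow $f(\bfu,\bfv) = w_1(\bfu)\,w_2(\bfv)/\|w_1\|_1$ satisfies all constraints, so $\MM(X_1,X_2)$ is attained at a vertex of this polytope.

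Next I would invoke the standard fact that the constraint matrix of this program — essentially the arc--vertex incidence matrix of a bipartite network — is totally unimodular, so that every vertex of the feasible polytope is integral whenever the right-hand-side data (here the capacities $w_1(\bfu)$ and $w_2(\bfv)$, integral by hypothesis) are integral. Choosing an integral optimal vertex yields the desired integral optimal flow function $f^\ast$. (Equivalently, one can argue directly by cycle-canceling: start from any integral feasible flow and repeatedly cancel negative-cost cycles in the residual graph; this keeps the flow integral and terminates at an optimum.) For the ``in particular'' clause, assume $w_1 \equiv w_2 \equiv 1$, so $|S_1| = \|w_1\|_1 = \|w_2\|_1 = |S_2|$; then $f^\ast(\bfu,\bfv) \in \bbN$ with $\sum_{\bfv} f^\ast(\bfu,\bfv) = 1$ and $\sum_{\bfu} f^\ast(\bfu,\bfv) = 1$, which forces every $f^\ast(\bfu,\bfv) \in \{0,1\}$ with exactly one $1$ in each row and column; hence $\{(\bfu,\bfv) : f^\ast(\bfu,\bfv) = 1\}$ is a perfect matching of the complete bipartite graph on $S_1 \times S_2$.

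I do not expect a serious obstacle here: the only points requiring a little care are checking that the feasible polytope is bounded and nonempty so the optimum is attained at a vertex, and correctly citing total unimodularity of network incidence matrices; both are routine.
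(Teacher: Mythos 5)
Your proposal is correct and follows essentially the same route as the paper, which simply observes that the optimal flow function is computed by a min-cost flow on a bipartite graph and lets integrality follow from the standard theory; you have merely filled in the total-unimodularity details that the paper leaves implicit. The handling of the ``in particular'' clause (rows and columns summing to $1$ forcing a $0$--$1$ permutation structure) is also the intended argument.
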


For an $s$-rooted forest $G$, let $\Freq(G)$ denote the multiset $\{\Freq(G^{(1)}), \cdots, \Freq(G^{(c(G))})\}$.
To define the distance between sketches, we first associate weighted point sets $F_G$, $S_{G}$, and $\tilde{S}_{G}$ with $\Freq(G)$, $\Sketch(G)$, and $\widetilde\Sketch(G)$, respectively.
Then, we show that $\MM(F_G,F_H)$ can be well approximated by $\MM(\tilde{S}_{G},\tilde{S}_{H})$.
Next, we show that $\dist(G, H)$ can be approximated by $\MM(F_G,F_H)$.
Since we can efficiently compute $\tilde{S}_{G}$ and $\tilde{S}_{H}$,
it follows that we can well approximate $\dist(G,H)$.
Hence, we can test isomorphism between $G$ and $H$.

We first introduce auxiliary weighted point sets.
Let $F'_G = (\bfone, \Freq(G))$, where $\bfone$ is the constant-one function.
For parameters $\delta, \tau, \qfreq, R, k$, we define $S'_{G,\delta,\tau,\qfreq,R,k}$ as follows.
First for a cell $C$, we define $\vtx(C)$ as the unique point in $C$ that is minimal with respect to every axis.
Then, for each $\bfu \in \bbN_{<k}^{t(s)}$, 
we add a point $\vtx(\Cell(\bfu))$ with weight $\Sketch_{\delta,\tau,\qfreq,R,k}(G)(\bfu)$.
Similarly, we define $\tilde{S}'_{G,\delta,\tau,\qfreq,R,k}$.
If the parameters are clear from the context, we occasionally drop the subscripts of $\tilde{S}'$ and $S'$.
A technical issue here is that the sums of weights of $F'_G$, $S'_{G}$, and $\tilde{S}'_{G}$ might be different since $\widetilde\Sketch$ is a random variable,
and it means that we cannot define matchings among them.
To avoid this issue, 
for a large integer value $M$, 
we define $\ext((w',S'), M)=(w,S)$ as the extension of $(w',S')$
so that $S = S' \cup \{\bot\}$ and $w(\bot)=M-\|w'\|_1$.
We regard $\bot$ as the all-zero vector when measuring distances to other vectors.
For a sufficiently large $M$,
we define $F_G = \ext(F'_G, M)$, $S_{G} = \ext(S'_{G}, M)$, and $\tilde{S}_{G} = \ext(\tilde{S'}_{G}, M)$.

This section is devoted to prove the following lemma.

\begin{lemma}
  \label{lemma:mmgh-mmtcgtch}
  For any $s, R, \gamma \ge 1$, and $\delta'', \tau' \in (0,1)$,
  there exist parameters $\delta$, $\tau$, $\qfreq$, and $k$
  such that
  $|\MM(F_{G}, F_{H}) - \MM(\tilde{S}_{G}, \tilde{S}_{H})| \le \delta'' n$
  holds with probability at least $1-\tau'$.
  The parameters $\delta, k$ are polynomials in $\gamma, \delta''$,
  $\tau$ is $O(\tau')$,
  and $\qfreq$ is a polynomial in $\gamma, \delta'', R$.
\end{lemma}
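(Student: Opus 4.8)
The plan is to prove Lemma~\ref{lemma:mmgh-mmtcgtch} by bounding $|\MM(F_G,F_H) - \MM(\tilde S_G,\tilde S_H)|$ through a chain of intermediate quantities, inserting $\MM(S_G,S_H)$ as a bridge:
\[
  |\MM(F_G,F_H) - \MM(\tilde S_G,\tilde S_H)|
  \le |\MM(F_G,F_H) - \MM(S_G,S_H)| + |\MM(S_G,S_H) - \MM(\tilde S_G,\tilde S_H)|.
\]
Each piece is controlled by a different mechanism: the first by the rounding error introduced when we replace the exact frequency vectors $\Freq(G^{(i)})$ by the representative point $\vtx(\Cell(\Round(\cdot)))$ of the cell containing them (plus the bias of conditioning on $\delta'$-safeness of $\widetilde\Size$), and the second by the sampling error of $\widetilde\Sketch$ around its conditional expectation $\Sketch$, which is exactly what Lemma~\ref{lemma:approximate-count} gives.

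First I would establish a general Lipschitz-type stability property of $\MM$: if $X_1=(w_1,S_1)$ and $X_1'=(w_1',S_1')$ are two weighted point sets that differ by ``moving mass around by a small amount and over a small distance,'' then $|\MM(X_1,X_2) - \MM(X_1',X_2)|$ is small. Concretely, (a) if $\|w_1-w_1'\|_1 \le \rho$ while the point sets and total masses agree, one can reroute at most $\rho$ units of flow and incur extra cost at most $\rho \cdot \mathrm{diam}$, where $\mathrm{diam}$ is an $\ell_1$-diameter bound on all points involved; and (b) if every point of $S_1$ is within $\ell_1$-distance $\sigma$ of a corresponding point of $S_1'$ with equal weight, then $|\MM(X_1,X_2)-\MM(X_1',X_2)| \le \sigma \|w_1\|_1$ by composing the optimal flow with the pointwise translation. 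Here the crucial numerology is that all relevant coordinates (entries of $\Freq$, hence of the cell representatives) are bounded by the root degrees, i.e. by $\gamma B$, so each cell has $\ell_1$-width $O(t(s)\gamma B/k)$; and the number of $s$-rooted trees satisfies $c(G), c(H) \le n/B$. Multiplying the per-point displacement $O(t(s)\gamma B/k)$ by the mass $\le n/B$ gives $O(t(s)\gamma n/k)$, so choosing $k$ a suitable polynomial in $\gamma$ and $\delta''$ (with the constant depending on the constant $s$) makes the discretization error at most $\frac{\delta''}{3}n$; this is the term that forces $k$ to be polynomial in $\gamma,\delta''$. The bias from conditioning on $\delta'$-safeness can be folded into a small multiplicative perturbation of the weights since, by Lemma~\ref{lemma:approximate-size}, each $\widetilde\Size$ output is within a $(1\pm\delta')$ factor, and by Lemma~\ref{lemma:norm-of-count} the total mass of $\Sketch^{\delta'}(G)$ is within $(1\pm\delta)$ of $c(G)$; applying stability (a) with $\rho = O(\delta n/B)\cdot(n/B)^{-1}$... more precisely $\rho = O(\delta c(G))$ and $\mathrm{diam} = O(\gamma B)$, and noting $c(G)=O(n/B)$, again yields $O(\delta\gamma n)$, absorbed by taking $\delta$ polynomially small in $\delta'',\gamma$.

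Next I would handle the sampling term $|\MM(S_G,S_H) - \MM(\tilde S_G,\tilde S_H)|$. By Lemma~\ref{lemma:approximate-count}, for the chosen $\qloop,\qsize$ (packaged into $\widetilde\Sketch_{\delta,\tau,\qfreq,R,k}$) we have, with probability $\ge 1-\tau$ over the randomness of $\widetilde\Sketch$ run on $G$, that $|\widetilde\Sketch(G)(\bfu) - \Sketch(G)(\bfu)| \le \frac{\delta n}{B}$ simultaneously for all $\bfu\in\bbN_{<k}^{t(s)}$; the same holds for $H$. On this good event the weighted point sets $S_G$ and $\tilde S_G$ live on the same point set (the cell representatives $\vtx(\Cell(\bfu))$ together with $\bot$) and their weights differ coordinatewise by at most $\frac{\delta n}{B}$; summed over $k^{t(s)}$ cells this is a total mass discrepancy of at most $\frac{\delta k^{t(s)} n}{B}$, and after the $\ext$ operation to common total mass $M$ the $\bot$-weights absorb the slack so the two differ in $\ell_1$ by at most $O(\frac{\delta k^{t(s)} n}{B})$. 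Applying stability (a) on both coordinates (first replace $S_G$ by $\tilde S_G$, then $S_H$ by $\tilde S_H$) with $\mathrm{diam}=O(\gamma B)$ gives total error $O(\delta k^{t(s)} \gamma n)$; since $k$ is already a fixed polynomial in $\gamma,\delta''$ and $s$ is constant, choosing $\delta$ a sufficiently small polynomial in $\delta'',\gamma$ makes this at most $\frac{\delta''}{3}n$. Taking a union bound over the two invocations of Lemma~\ref{lemma:approximate-count} gives the overall failure probability $O(\tau)$, so setting $\tau = O(\tau')$ suffices; $\qfreq$ is left essentially free (it only needs to be large enough for later lemmas that use $\widetilde\Freq$ inside $\widetilde\Sketch$, and is tracked as a polynomial in $\gamma,\delta'',R$), which matches the claimed dependence. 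Collecting the three $\frac{\delta''}{3}n$ contributions yields the bound $\delta'' n$.

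I expect the main obstacle to be the stability/Lipschitz analysis of $\MM$ under simultaneous perturbation of \emph{both} weights and the $\bot$-extension bookkeeping: one must be careful that $\ext$ is applied with the \emph{same} target mass $M$ on the corresponding pairs so that matchings remain comparable, and that rerouting flow through $\bot$ (treated as the zero vector) does not create spurious large distances — this is why the coordinate bound $\le\gamma B$ on genuine points, and hence the $\ell_1$-diameter bound $O(\gamma B)$ including $\bot$, is doing real work. A secondary subtlety is that the error bounds from Lemmas~\ref{lemma:approximate-count} and \ref{lemma:approximate-size} are stated \emph{per tree} or \emph{per cell} with the factor $n/B$ (the number of components) hidden; one must verify that multiplying the per-cell additive error $\delta n/B$ by the diameter $O(\gamma B)$ cancels the troublesome $B$ (which can be as large as $\Theta(n)$) and leaves a bound depending only on $\gamma,\delta'',k^{t(s)}$ and $n$, so that the final query complexity is independent of $B$ — this cancellation is the conceptual heart of why the whole approach is uniform in the degree scale $B$.
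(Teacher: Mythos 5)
Your overall architecture is the same as the paper's: insert the conditional-expectation sketch $S$ between $F$ and $\tilde S$, bound the two legs $\MM(F_G,S_G)$ and $\MM(S_G,\tilde S_G)$ by explicit flow constructions, and combine via the triangle inequality for $\MM$ (your stability properties (a) and (b) are exactly the paper's Lemma~\ref{lemma:mm-triangle} together with the rerouting arguments in its Lemmas~\ref{lemma:count-tildecount-diff} and~\ref{lemma:g-count-diff}). Your analysis of the sampling leg --- per-cell error $\delta n/B$ from Lemma~\ref{lemma:approximate-count} times diameter $O(t(s)\gamma B)$, with the $B$'s cancelling --- matches the paper's Lemma~\ref{lemma:count-tildecount-diff} and is, if anything, slightly more careful about the $k^{t(s)}$ factor. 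Your remarks on the $\ext$/$\bot$ bookkeeping are also on target.

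The genuine gap is in the leg $\MM(F_G,S_G)$. You treat $S_G$ as if it were obtained from $F_G$ by deterministically snapping each $\Freq(G^{(i)})$ to the cell representative $\vtx(\Cell(\Round(\Freq(G^{(i)}))))$, plus a small bias from conditioning on $\delta'$-safeness of $\widetilde\Size$. But $\Sketch(G)(\bfu)$ is the conditional expectation of $\widetilde\Sketch(G)(\bfu)$, and in Line~\ref{alg-line:rounding} of $\widetilde\Sketch$ the cell is determined by $\Round(\widetilde{\Freq}_{\qfreq}(v^{(i)}))$, not by $\Round(\Freq(G^{(i)}))$. Hence $S_G$ spreads the mass of tree $i$ over cells according to $q_{i,\bfv}=\Pr[\Round(\widetilde{\Freq}_{\qfreq}(v^{(i)}))=\bfv]$, and if $\qfreq$ is too small this distribution can place constant mass on cells at $\ell_1$-distance $\Theta(t(s)\gamma B)$ from $\Freq(G^{(i)})$, making $\MM(F_G,S_G)=\Omega(n)$ no matter how fine the grid $k$ is. The missing step is to show that, per tree, the mass escaping a constant-size neighbourhood of the correct cell is tiny; this requires invoking Lemma~\ref{lemma:approximate-freq} with accuracy about $1/(2kR)$ --- the $R$ appears because $\widetilde\Freq$'s additive guarantee is relative to $\deg_G(v)$, which for an $R$-good root can exceed $\deg_{G'}(v)$ by a factor of $R$ --- and this is precisely where the claimed polynomial dependence of $\qfreq$ on $R$ originates. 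It belongs to this lemma, not to ``later lemmas'' as you suggest; the paper's Lemma~\ref{lemma:g-count-diff} does exactly this bookkeeping via the sets $B_i$ of at most $2^{t(s)}$ nearby cells and the probabilities $r_i\ge 1-\tau'$.
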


To prove Lemma~\ref{lemma:mmgh-mmtcgtch}, we prove several lemmas first.
\begin{lemma}
  \label{lemma:count-tildecount-diff}
  For any $s$,$\gamma$,$\qfreq \ge 1$, and $\delta'', \tau \in (0,1)$,
  there exists $\delta = \deltaVI(s, \gamma, \delta'')$ such that
  \[
  \MM(S_{G, \delta, \tau, \qfreq, R, k}, \tilde{S}_{G, \delta, \tau, \qfreq, R, k}) \le \delta'' n
  \]
  with probability at least $1-\tau$.
  Here $\deltaVI$ is a polynomial in $\gamma, \delta''$.
\end{lemma}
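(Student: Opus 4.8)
The plan is to bound the transportation cost $\MM(S_G, \tilde S_G)$ directly by exhibiting an explicit (not necessarily optimal) flow function from $S_G$ to $\tilde S_G$ whose value is small with high probability. Recall that both weighted point sets share the same underlying set of representative points $\{\vtx(\Cell(\bfu)) : \bfu \in \bbN_{<k}^{t(s)}\}$ together with the extension point $\bot$; they differ only in the weights, which are $\Sketch_{\delta,\tau,\qfreq,R,k}(G)(\bfu)$ versus $\widetilde\Sketch_{\delta,\tau,\qfreq,R,k}(G)(\bfu)$ (and correspondingly the leftover masses on $\bot$, which by construction of $\ext(\cdot,M)$ automatically compensate so that both total masses equal $M$). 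The natural ``do nothing'' flow keeps as much mass as possible on each point $\vtx(\Cell(\bfu))$ and routes the signed discrepancy between the two weights through $\bot$. Since $\bot$ is treated as the all-zero vector, a unit of mass sent between $\vtx(\Cell(\bfu))$ and $\bot$ costs $\|\vtx(\Cell(\bfu))\|_1$, which is at most $t(s)\cdot\gamma B$ because every coordinate of a $\Freq$-vector of an $s$-rooted tree with root degree at most $\gamma B$ is at most $\gamma B$ and there are $t(s)$ coordinates.

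Carrying this out, the value of this flow is at most
\[
  t(s)\,\gamma B \sum_{\bfu \in \bbN_{<k}^{t(s)}} \bigl|\Sketch_{\delta,\tau,\qfreq,R,k}(G)(\bfu) - \widetilde\Sketch_{\delta,\tau,\qfreq,R,k}(G)(\bfu)\bigr|.
\]
Now I invoke Lemma~\ref{lemma:approximate-count}: for an appropriate choice $\qloop = \qIVloop(k,s,\delta_0,\delta',\tau)$ and $\qsize = \qIVsize(k,s,\delta_0,\delta',\tau)$ (these are exactly the parameters baked into the notation $\widetilde\Sketch_{\delta,\tau,\qfreq,R,k}$), each of the $k^{t(s)}$ coordinate discrepancies is at most $\delta_0 n / B$ with probability at least $1-\tau$, where $\delta_0$ is a free accuracy parameter we will tune. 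Summing over all $k^{t(s)}$ cells, the displayed value is at most $t(s)\,\gamma B \cdot k^{t(s)} \cdot \delta_0 n / B = t(s)\,\gamma\,k^{t(s)}\,\delta_0\, n$. To make this at most $\delta'' n$ it suffices to take $\delta_0 = \delta''/(t(s)\,\gamma\,k^{t(s)})$; since $k$ is itself chosen (in the eventual application) as a polynomial in $\gamma,\delta''$ and $t(s)$ depends only on $s$ (treated as constant), $\delta_0$ — and hence the required internal accuracy parameter $\delta = \deltaVI(s,\gamma,\delta'')$ — is a polynomial in $\gamma,\delta''$, as claimed. The probability bound $1-\tau$ is inherited verbatim from Lemma~\ref{lemma:approximate-count}.

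The main obstacle, and the only genuinely delicate point, is bookkeeping the interplay between the two different accuracy parameters floating around: the external target $\delta''$ of this lemma versus the internal sampling accuracy $\delta_0$ fed into $\widetilde\Sketch$. One has to be careful that the factor $B$ appearing in Lemma~\ref{lemma:approximate-count} (where the per-cell error is $\delta_0 n/B$, reflecting that there are only $\le n/B$ components) exactly cancels the factor $B$ coming from the diameter bound $t(s)\gamma B$ on $\|\vtx(\Cell(\bfu))\|_1$ — this cancellation is what makes the bound independent of $B$, which is essential since $B$ can be as large as $\Theta(n)$. A secondary subtlety is confirming that the ``route discrepancies through $\bot$'' flow is actually a valid flow function in the sense of the Minimum matching definition: one must check that after keeping $\min\{\Sketch(G)(\bfu), \widetilde\Sketch(G)(\bfu)\}$ on each point and pushing the remainder to or from $\bot$, the marginals match on both sides, which follows because $\|w(F_G)\|_1 = \|w(\tilde S_G)\|_1 = M$ by the definition of $\ext$, so the total signed discrepancy over the non-$\bot$ points is exactly absorbed by the $\bot$ mass. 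No query to $G$ or $H$ is needed beyond those already made inside $\widetilde\Sketch$, so there is nothing further to verify about query complexity here.
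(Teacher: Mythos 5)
Your proposal is correct and follows essentially the same route as the paper's proof: construct an explicit flow that keeps $\min(\Sketch(G)(\bfu),\widetilde\Sketch(G)(\bfu))$ on each cell, route the residual mass through $\bot$, bound the moved mass via Lemma~\ref{lemma:approximate-count}, and bound the per-unit cost by the diameter $t(s)\gamma B$ so that the factors of $B$ cancel. If anything, your bookkeeping is slightly more careful than the paper's, which sums the per-cell error $\delta n/B$ over cells without writing the $k^{t(s)}$ factor that you correctly absorb into the choice of $\delta$ (harmless, since $k$ is ultimately a polynomial in $\gamma,\delta''$).
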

\begin{proof}
  We construct a flow function from $S_{G}$ to $\tilde{S}_{G}$
  so that
  $\sum_{(\bfu, \bfv)} f(\bfu, \bfv)  \|\bfu - \bfv\|_1 \le \delta'' n$ holds with high probability.
  We assign $f(\bfu, \bfu) = \min(\Sketch(G)(\bfu),\widetilde{\Sketch}(G)(\bfu))$ for each $\bfu \in \bbN_{<k}^{t(s)}$
  and assign an arbitrary value to other parts of $f$
  so that $f$ satisfies the condition of a flow function.
  By Lemma~\ref{lemma:approximate-count},
  $\sum_{\bfu,\bfv \in \bbN_{<k}^{t(s)}, \bfu \neq \bfv} f(\bfu, \bfv) \le \frac{\delta n}{B}$
  with probability at least $1-\tau$.
  Set $\delta = \deltaVI(s, \gamma, \delta'') = \delta'' / (t(s) \gamma)$.
  Since $\|\bfu - \bfv\|_1 \le t(s) \gamma B$, we have
  \[
    \sum_{(\bfu, \bfv)} f(\bfu, \bfv)\|\bfu-\bfv\|_1
    \le \frac{\delta n}{B} \cdot t(s)  \gamma B
    \le \delta'' n.
  \]
\end{proof}

\begin{lemma}
  \label{lemma:g-count-diff}
  For any $s, R, \gamma \ge 1$ and $\delta'', \tau \in (0,1)$,
  there exist $k = \kI(s, \gamma, \delta'')$,
  $\qfreq=\qVIIfreq(s, \gamma, \delta'', R)$,
  $\delta = \deltaVII(s, \gamma, \delta'')$
  such that $\MM(F_{G}, S_{G,\delta,\tau,\qfreq,R,k}) \le \delta'' n$ holds.
  The parameters $\kI,\deltaVII$ are polynomials in $\gamma, \delta''$
  and $\qVIIfreq$ is a polynomial in $\gamma, \delta'', R$.
\end{lemma}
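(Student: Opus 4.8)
The plan is to build an explicit flow function from $F_{G}$ to $S_{G,\delta,\tau,\qfreq,R,k}$ of value at most $\delta'' n$, and then to choose $k$, $\qfreq$, and $\delta$ so that the bound closes. Write $\bfw_i=\Freq(G^{(i)})$, so that $F_G$ carries a unit mass at each $\bfw_i$ ($i\in[c(G)]$) together with the $\bot$-mass produced by $\ext$. The starting point is that, conditioned on the $\deltaV(\delta)$-safe event, the expected contribution of $G^{(i)}$ to $\widetilde{\Sketch}(G)(\bfu)$ equals $m_i(\bfu):=\tilde n\,\tfrac{|V(G^{(i)})|}{n}\,\E\!\left[\bfone[\Round(\widetilde{\Freq}_{\qfreq}(v^{(i)}))=\bfu]/\widetilde{\Size}_{G,\qsize,R}(v^{(i)})\mid \text{safe}\right]$, and $\sum_i m_i(\bfu)=\Sketch_{\delta,\tau,\qfreq,R,k}(G)(\bfu)$. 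Conditioned on safety, $\widetilde{\Size}(v^{(i)})\in(1\pm\deltaV(\delta))|V(G^{(i)})|$ and $\tilde n/n\in[1-\etaV(\delta),1]$, so $\sum_\bfu m_i(\bfu)=1\pm O(\delta)$. I would therefore route tree $i$'s unit mass as follows: a renormalization of $m_i(\bfu)$ goes to $\vtx(\Cell(\bfu))$ for each $\bfu$, and the small leftover $|1-\sum_\bfu m_i(\bfu)|$ is exchanged with $\bot$ (the $\bot$-masses of $F_G$ and $S_G$ being matched to each other up to this correction).

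The bulk of the work is to bound the value of this flow, which I would do by splitting each tree's routed mass according to how far the target cell sits from $\Round(\bfw_i)$. \emph{(a) Nearby cells.} Since $\|\bfw_i\|_1=\deg(\rt(G^{(i)}))\le\gamma B$, the move $\|\bfw_i-\vtx(\Cell(\Round(\bfw_i)))\|_1$ is at most the $\ell_1$-diameter $D$ of a cell; and whenever $\widetilde{\Freq}_{\qfreq}(v^{(i)})$ lands inside the $\delta_{\mathrm F}\deg_G(v^{(i)})$-window guaranteed by Lemma~\ref{lemma:approximate-freq} --- a window of width $<\delta_{\mathrm F}R\gamma B$ because $R$-goodness forces $\deg_G(v^{(i)})<R\gamma B$ --- the rounded cell differs from $\Round(\bfw_i)$ by only $O(1)$ cells per axis, once $\qfreq$ is large enough that $\delta_{\mathrm F}R\gamma B$ is below the cell width, so the move is still $O(D)$. \emph{(b) Far cells.} With probability at most the failure probability $\tau_{\mathrm F}$ of Lemma~\ref{lemma:approximate-freq}, which I can drive below any target by enlarging $\qfreq$, $\widetilde{\Freq}_{\qfreq}(v^{(i)})$ lands outside that window; even then $\|\widetilde{\Freq}_{\qfreq}(v^{(i)})\|_1\le\deg_G(v^{(i)})<R\gamma B$, so the move is $O(R\gamma B)$. \emph{(c) Imbalance mass.} The $O(\delta\,c(G))$ total mass exchanged with $\bot$ moves at most $\max_i\|\bfw_i\|_1\le\gamma B$. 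Because every root degree exceeds $B$, each tree has at least $B$ vertices, so $c(G)\le n/B$; hence parts (a), (b), (c) cost $O(D\,n/B)$, $O(\tau_{\mathrm F}R\gamma n)$, and $O(\delta\gamma n)$, respectively.

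To finish I would set the parameters. Take $k=\kI(s,\gamma,\delta'')$, a polynomial in $\gamma$ and $\delta''$, large enough that a cell which can contain a $\Freq$ vector of $G$ (necessarily lying in $[0,\gamma B]^{t(s)}$, where only $O(\gamma k)$ intervals per axis are relevant, each of width $O(\gamma B/k)$) has $D\le\delta'' B/(9t(s))$; set $\delta=\deltaVII(s,\gamma,\delta'')=\Theta(\delta''/\gamma)$, a polynomial in $\gamma$ and $\delta''$; and set $\qfreq=\qVIIfreq(s,\gamma,\delta'',R)=\qI(s,\delta_{\mathrm F},\tau_{\mathrm F})$ with $\delta_{\mathrm F}=\Theta(1/(kR))$ and $\tau_{\mathrm F}=\Theta(\delta''/(R\gamma))$, which is a polynomial in $\gamma$, $\delta''$, and $R$. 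With these choices each of parts (a), (b), (c) is at most $\delta'' n/3$, so the flow has value at most $\delta'' n$, proving the lemma.

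I expect the main obstacle to be the fact that $\Sketch$ is a \emph{conditional expectation}: one has to show both that conditioning on the $\deltaV(\delta)$-safe event does not distort $1/\widetilde{\Size}(v^{(i)})$ or the per-tree total mass beyond the claimed $1\pm O(\delta)$ factor, and that the $\widetilde{\Freq}$-estimate --- whose additive error scales with $\deg_G(v^{(i)})$, which can be as large as $R\gamma B$ --- only rarely, and only by a controlled amount, throws a tree into a far cell. It is this dependence on $\deg_G$ that forces $\qfreq$ to be polynomial in $R$ rather than in $\gamma$ and $\delta''$ alone.
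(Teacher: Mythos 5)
Your proposal is correct and follows essentially the same route as the paper's proof: the same flow function (each tree's unit mass routed to cells in proportion to the probability that $\widetilde{\Freq}$ rounds there, weighted by the conditional expectation of $1/\widetilde{\Size}$, with the small imbalance exchanged with $\bot$), the same three-way cost split into nearby cells, far cells, and $\bot$-mass, the same use of $c(G)\le n/B$, and the same style of parameter choices ($\delta_{\mathrm F}=\Theta(1/(kR))$ matching the paper's $1/(2kR)$). Your accounting of the far-cell distance as $O(R\gamma B)$ rather than $O(\gamma B)$ is, if anything, slightly more careful than the paper's.
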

\begin{proof}
  Let $k, \delta, \qfreq$ be parameters chosen later.
  Again, we construct a flow function $f$ from $F_{G}$ to $S_{G}$.
  We define a hypercube $C_i$ in $\Real^{t(s)}$ as $C_i = \{(x_1, x_2, \ldots, x_{t(s)}) \mid |x_j - \Freq(G^{(i)})[j]| <  \gamma B / (2k), j \in [t(s)] \}$.
  Further, let $B_i = \{\vtx(\Cell(\bfv)) \mid \bfv \in \bbN_{<k}^{t(s)}, \Cell(\bfv) \cap C_i \neq \emptyset\}$.
  Note that $|B_i| \le 2^{t(s)}$.

  Let $\delta' = \deltaV(\delta)$.
  Let $p^{(i)}$ be the probability that a vertex of $G^{(i)}$ is chosen in Line~\ref{alg-line:chose-u} of the algorithm $\widetilde{\Sketch}$,
  $q_{i, \bfv}$ be the probability that $\Round(\widetilde{\Freq}_{\qfreq}(v^{(i)})) = \bfv$ holds,
  and $e^{(i)} = \E\left[ 1/\widetilde{\Size}(v^{(i)}) \middle| \widetilde{\Size}:\delta'\mbox{-good} \right]$.
  For $i \in [c(G)]$ and $\bfv \in \bbN_{<k}^{t(s)}$,
  define a flow function as follows.
  \[
    f(i, \bfv) = (1-\delta')\tilde{n} p^{(i)} q_{i, \bfv} e^{(i)}
  \]

  We show that $\sum_{i} f(i, \bfv) \le \Sketch(G)(\bfv)$ for all $\bfv$ and $\sum_{\bfv} f(i, \bfv) \le 1$ for all $i$.
  The conditional expectation $\Sketch$ can be expressed as $\Sketch(G)(\bfv) = \sum_{i} \tilde{n} p^{(i)} q_{i, \bfv}  e^{(i)}$.
  Thus, $\sum_{i} f(i, \bfv) = (1-\delta')\Sketch(G)(\bfv) < \Sketch(G)(\bfv)$ holds.
  Further, since $p^{(i)} e^{(i)} \le (|V(G^{(i)})| / n) \cdot 1/((1-\delta')|V(G^{(i)})|) = 1/(n(1-\delta'))$
  and $\sum_{\bfv} q_{i, \bfv} = 1$, $\sum_{\bfv} f(i, \bfv) \le 1$ holds.
  Similarly, we can show that 
  $\sum_{\bfv} f(i, \bfv) \ge (1-\delta')/(1+\delta') \ge 1 - 2\delta'$ for all $i$.

  We assign values to the remaining part of $f$ so that the condition of the flow function is satisfied.
  Here it holds that
  $\sum_{\bfv} f(\bot, \bfv)
  = \sum_{\bfv} (\Sketch(G)(\bfv) - \sum_{i} f(i, \bfv))
  = \delta' \| \Sketch(G) \|_1$
  and 
  $\sum_{i} f(i, \bot)
  = \sum_{i} (1 - \sum_{\bfv} f(i, \bfv))
  \le 2\delta'c(G)$.
  From Lemma~\ref{lemma:norm-of-count},
  we have 
  $\sum_{\bfv} f(\bot, \bfv) + \sum_{i} f(i, \bot) \le 4\delta' c(G)$.

  Let $r_i = \sum_{\bfv \in B_i} q_{i,\bfv}$.
  Set $k = \kI(s, \gamma, \delta'') = O(2^{t(s)}t(s)\gamma / \delta'')$
  and $\qfreq = \qVIIfreq(s, \gamma, \delta'', R) = \qI(s, 1/(2kR), \tau')$
  for $\tau' = O(1 / (k^{t(s)} \cdot t(s) \gamma))$. Then from Lemma~\ref{lemma:approximate-freq}, we have
  $r_i \ge 1-\tau'$.

  Now, we calculate the value of the flow function. For fixed $i \in [c(G)]$,
  \begin{eqnarray*}
    \sum_{\bfv} f(i, \bfv) \cdot \|\Freq(G^{(i)}) - \bfv\|_1
    &=&
      \sum_{\bfv \in B_i} f(i, \bfv) \cdot \|\Freq(G^{(i)}) - \bfv\|_1
      + \sum_{\bfv \not\in B_i} f(i, \bfv) \cdot \|\Freq(G^{(i)}) - \bfv\|_1 \\
    &\le&
      \sum_{\bfv \in B_i} 1 \cdot \frac{t(s)\gamma B}{k}
      + \sum_{\bfv \not\in B_i} (1 - r_i) \cdot t(s)\gamma B\\
    &\le&
      2^{t(s)} \cdot 1 \cdot \frac{t(s)\gamma B}{k}
      + k^{t(s)} \tau' \cdot t(s)\gamma B\\
    &\le&
      \frac{\delta''  B}{4} + \frac{\delta''  B}{4} = \frac{\delta'' B}{2}.
  \end{eqnarray*}

  Set $\delta = \deltaVII(s, \gamma, \delta'') = O(\delta'' / (k^{t(s)}\gamma))$ so that $\delta' = O(1/(k^{t(s)} \gamma))$. Then we have
  \begin{eqnarray*}
    \sum_{i, \bfv} f(i, \bfv) \cdot \|\Freq(G^{(i)}) - \bfv\|_1 
    &\le& c(G) \cdot \frac{\delta'' B}{2}
     = \frac{\delta'' c(G) B}{2} \\
    \sum_{\bfv} f(\bot, \bfv) \| \bfv \|_1 + \sum_{i} f(\Freq(G^{(i)}), \bot) \| \Freq(G^{(i)}) \|_1
    &\le& 4\delta' c(G) \cdot k^{t(s)} \gamma B
     = \frac{\delta'' c(G) B}{2}.
  \end{eqnarray*}
  Since $c(G)B \le n$, the cost of the flow function is at most $\delta'' n$.
\end{proof}

We can show that the triangle inequality holds for the minimum value of a flow function $\MM$.
Combining Lemmas~\ref{lemma:count-tildecount-diff},~\ref{lemma:g-count-diff}, and the triangle inequality,
we can prove Lemma~\ref{lemma:mmgh-mmtcgtch}.

\begin{lemma}
  \label{lemma:mm-triangle}
  Let $X_i = (w_i, S_i)$ $(i=1,2,3)$ be weighted point sets with $\|w_1\|_1 = \|w_2\|_1 = \|w_3\|_1$.
  Then, the triangle inequality holds for $\MM(\cdot,\cdot)$ among them, that is
  \[
    \MM(X_1, X_3) \le \MM(X_1, X_2) + \MM(X_2, X_3).
  \]
\end{lemma}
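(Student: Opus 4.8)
The plan is to adapt the classical ``gluing of flows'' argument that shows the earth mover's (Wasserstein) distance is a metric. Since $\|w_1\|_1 = \|w_2\|_1 = \|w_3\|_1$, both $\MM(X_1,X_2)$ and $\MM(X_2,X_3)$ are defined, and for each pair the set of flow functions is a nonempty compact polytope on which the value is a linear function, so optimal flows exist. Let $f : S_1 \times S_2 \to \PReal$ and $g : S_2 \times S_3 \to \PReal$ be optimal flow functions from $X_1$ to $X_2$ and from $X_2$ to $X_3$, respectively. I would then define a candidate flow function $h : S_1 \times S_3 \to \PReal$ from $X_1$ to $X_3$ by
\[
  h(\bfu, \bfr) \;=\; \sum_{\bfv \in S_2 \,:\, w_2(\bfv) > 0} \frac{f(\bfu, \bfv)\, g(\bfv, \bfr)}{w_2(\bfv)},
\]
and prove $\MM(X_1, X_3) \le \mathrm{value}(h) \le \MM(X_1, X_2) + \MM(X_2, X_3)$.

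First I would verify that $h$ is a legitimate flow function. Nonnegativity is clear. For the marginal at a fixed $\bfu \in S_1$, summing $h(\bfu,\bfr)$ over $\bfr \in S_3$ and using $\sum_{\bfr} g(\bfv,\bfr) = w_2(\bfv)$ cancels the factor $1/w_2(\bfv)$ and leaves $\sum_{\bfv} f(\bfu,\bfv) = w_1(\bfu)$; the marginal at a fixed $\bfr \in S_3$ is symmetric, via $\sum_{\bfu} f(\bfu,\bfv) = w_2(\bfv)$. The one point to watch is a middle vector $\bfv$ with $w_2(\bfv) = 0$: then $f(\bfu,\bfv) = 0$ for all $\bfu$ (from nonnegativity and $\sum_\bfu f(\bfu,\bfv)=0$), so excluding such $\bfv$ from the sum is harmless and $h$ is well defined.

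Next I would bound $\mathrm{value}(h)$. Substituting the definition of $h$ into $\sum_{\bfu,\bfr} h(\bfu,\bfr)\,\|\bfu - \bfr\|_1$, applying the $\ell_1$ triangle inequality $\|\bfu - \bfr\|_1 \le \|\bfu - \bfv\|_1 + \|\bfv - \bfr\|_1$ inside the resulting triple sum over $(\bfu,\bfv,\bfr)$, and splitting into two pieces: in the first, summing out $\bfr$ with $\sum_{\bfr} g(\bfv,\bfr) = w_2(\bfv)$ cancels the denominator and gives exactly $\mathrm{value}(f) = \MM(X_1,X_2)$, and symmetrically the second piece (summing out $\bfu$ with $\sum_{\bfu} f(\bfu,\bfv) = w_2(\bfv)$) gives $\mathrm{value}(g) = \MM(X_2,X_3)$. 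Since $h$ is a flow function from $X_1$ to $X_3$, $\MM(X_1,X_3) \le \mathrm{value}(h)$, which closes the argument.

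I expect no genuine obstacle here — this is a textbook gluing argument — so the only care needed is bookkeeping with zero-weight middle vectors and with the convention that $\bot$ denotes the all-zero vector, which is harmless since the $\ell_1$ triangle inequality remains valid for all the representative vectors, $\bot$ included.
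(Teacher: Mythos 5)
Your proof is correct and follows essentially the same route as the paper: both construct the glued flow $h(\bfu,\bfr)=\sum_{\bfv} f(\bfu,\bfv)g(\bfv,\bfr)/w_2(\bfv)$, check its marginals, and split the cost via the $\ell_1$ triangle inequality. Your explicit handling of middle points with $w_2(\bfv)=0$ is a small point of extra care that the paper's version glosses over.
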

\begin{proof}
  We construct a flow function $f_{13}$ from $X_1$ to $X_3$ as follows.
  Let $f^\ast_{12}$ be the optimal flow function from $X_1$ to $X_2$,
  and let $f^\ast_{23}$ be the one from $X_2$ to $X_3$.
  Let $f_{13}(i_1, i_3) = \sum_{i_2} \frac{f^\ast_{12}(i_1, i_2) f^\ast_{23}(i_2, i_3)}{w_2(i_2)}$.
  The function $f_{13}$ satisfies the conditions of a flow function:
  \begin{eqnarray*}
    \sum_{i_3} f_{13}(i_1, i_3) 
    = \sum_{i_3} \sum_{i_2}  \frac{f^\ast_{12}(i_1, i_2) f^\ast_{23}(i_2, i_3) }{w_2(i_2)} = \sum_{i_2} f^\ast_{12}(i_1, i_2) = w_1(i_1),\\
    \sum_{i_1} f_{13}(i_1, i_3) 
    = \sum_{i_1} \sum_{i_2}  \frac{f^\ast_{12}(i_1, i_2) f^\ast_{23}(i_2, i_3)}{w_2(i_2)}
    = \sum_{i_2} f^\ast_{23}(i_2, i_3) = w_3(i_3).
  \end{eqnarray*}

  We observe that
  \begin{eqnarray*}
    & & \sum_{i_1,i_3} f_{13}(i_1, i_3)\|S_1(i_1) - S_3(i_3) \|_1 \\
    &=&   \sum_{i_1, i_3} \sum_{i_2} \frac{f^\ast_{12}(i_1, i_2) f^\ast_{23}(i_2, i_3) \|S_1(i_1) - S_3(i_3)\|_1}{w_2(i_2)} \\
    &\le& \sum_{i_1, i_3} \sum_{i_2} \frac{f^\ast_{12}(i_1, i_2) f^\ast_{23}(i_2, i_3) (\|S_1(i_1) - S_2(i_2)\|_1 + \|S_2(i_2) - S_3(i_3)\|_1)}{w_2(i_2)} \\
    &=&  \sum_{i_1, i_2} f^\ast_{12}(i_1, i_2) \|S_1(i_1) - S_2(i_2)\|_1 + \sum_{i_2, i_3} f^\ast_{23}(i_2, i_3)\|S_2(i_2) - S_3(i_3) \|_1. \\
  \end{eqnarray*}
  Thus, $\MM(X_1, X_3) \le \MM(X_1, X_2) + \MM(X_2, X_3)$.
\end{proof}

\begin{proof}[Proof of Lemma~\ref{lemma:mmgh-mmtcgtch}]
  Set
  $\qfreq=\qVIIfreq(s, \gamma, O(\delta''), R)$, \\
  $\delta = \min(\deltaVI(s, \gamma, O(\delta'')), \deltaVII(s, \gamma, O(\delta'')))$,
  $\tau = \tau'/2$,
  and
  $k = \kI(s, \gamma, O(\delta''))$.
  Then with probability at least $1-\tau'/2$,
  \[
    \MM(F_{G}, \tilde{S}_{G})
    \le
    \MM(F_{G}, S_{G}) + \MM(S_{G}, \tilde{S}_{G})
    \le
    O(\delta'' n) + O(\delta'' n) = O(\delta'' n).
  \]
  The same inequality holds for the other graph $H$.
  Thus, with probability $1-\tau'$,
  \begin{eqnarray*}
    & &|\MM(F_{G}, F_{H}) - \MM(\tilde{S}_{G}, \tilde{S}_{H})| \\
    &\le&
      |\MM(F_{G}, F_{H}) - \MM(F_{H}, \tilde{S}_{G})| +
      |\MM(F_{H}, \tilde{S}_{G}) - \MM(\tilde{S}_{G}, \tilde{S}_{H})| \\
    &\le& \MM(F_{G}, \tilde{S}_{G}) + \MM(F_{H}, \tilde{S}_{H}) \le O(\delta'' n) + O(\delta'' n') = \delta'' \tilde{n}.
  \end{eqnarray*}
\end{proof}


\subsection{Approximation algorithm}
\label{ss:approx-rooted-forest}
Finally, we show that the distance between two graphs can be well approximated by the minimum value of a matching between corresponding sketches.
First, we need to show the following.
\begin{lemma}
  \label{lemma:mmgh-dist}
  Let $G$ and $H$ be $s$-rooted forests.
  Then, $\dist(G, H) \le 2s \cdot \MM(F_{G}, F_{H})$.
\end{lemma}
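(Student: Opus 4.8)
The plan is to start from an optimal flow function $f^*$ realising $\MM(F_G,F_H)$ and read off from it an explicit bijection witnessing the bound. Since the weights of $F'_G$ and $F'_H$ are all $1$ and the weight of $\bot$ may be regarded as a bundle of unit weights sitting at a single point, Lemma~\ref{cor:solution-integral} lets us take $f^*$ integral, so it induces a partial matching: each tree $G^{(i)}$ is either matched to a tree $H^{(j)}$ (a \emph{matched pair}) or to $\bot$ (then $G^{(i)}$ is a \emph{$G$-orphan}); symmetrically one gets \emph{$H$-orphans}; and the remaining flow runs between the two copies of $\bot$ and costs nothing. Recalling that $\|\Freq(T)\|_1 = \deg(\rt(T))$ and that every subtree of the root of an $s$-rooted tree has at most $s$ vertices, so $|V(T)|-1 \le s\deg(\rt(T)) = s\|\Freq(T)\|_1$, we may write
\[
  \MM(F_G,F_H) = \sum_{\text{matched }(i,j)} \|\Freq(G^{(i)})-\Freq(H^{(j)})\|_1 \;+\; \sum_{G\text{-orph.}}\deg(\rt(G^{(i)})) \;+\; \sum_{H\text{-orph.}}\deg(\rt(H^{(j)})),
\]
using the convention that $\bot$ is the all-zero vector.

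Next I would build the bijection $\pi$ (on the padded vertex sets) from this matching. For a matched pair $(G^{(i)},H^{(j)})$, fix a maximum matching between the subtrees of $\rt(G^{(i)})$ and those of $\rt(H^{(j)})$ that pairs isomorphic subtrees; set $\pi(\rt(G^{(i)}))=\rt(H^{(j)})$ and extend $\pi$ on each paired subtree through its isomorphism. Let $P_{ij}$ (resp. $N_{ij}$) be the number of unpaired subtrees on the $G$-side (resp. $H$-side); a direct count over isomorphism types gives $P_{ij}+N_{ij}=\|\Freq(G^{(i)})-\Freq(H^{(j)})\|_1$. Now collect into $R_G$ all vertices lying in an unpaired $G$-subtree of some matched pair, in a $G$-orphan tree, or among the isolated padding vertices of $G$, and define $R_H$ analogously. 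Since paired subtrees are isomorphic and roots are mapped to roots, the matched portion of $G^{(i)}$ and that of $H^{(j)}$ have the same number of vertices, so $|R_G|=|R_H|$; complete $\pi$ by an arbitrary bijection $R_G\to R_H$.

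Finally I would count mismatched edges under $\pi$. Edges internal to paired subtrees and edges from a root to a paired subtree are carried by $\pi$ onto edges of the other graph, hence contribute nothing. The edges that must be deleted are exactly the edges of $G$ incident to $R_G$: within a matched pair these are the internal edges of the (at most $s$-vertex) unpaired $G$-subtrees together with the edges joining them to $\rt(G^{(i)})$, which total $r_A^{(ij)}:=\sum_{S\text{ unpaired}}|V(S)|\le sP_{ij}$; plus all $|V(G^{(i)})|-1\le s\deg(\rt(G^{(i)}))$ edges of each $G$-orphan. By the symmetric argument the edges to be added number $\sum_{(i,j)} r_B^{(ij)} + \sum_{H\text{-orph.}}(|V(H^{(j)})|-1)$ with $r_B^{(ij)}\le sN_{ij}$. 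Summing and using $P_{ij}+N_{ij}=\|\Freq(G^{(i)})-\Freq(H^{(j)})\|_1$ yields $\dist(G,H)\le s\cdot\MM(F_G,F_H)\le 2s\cdot\MM(F_G,F_H)$.

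I expect the only genuine subtlety — the main obstacle — to be the bookkeeping that makes $\pi$ a legitimate global bijection when $|V(G)|\neq|V(H)|$: one must check that the matched portions of a matched pair have equal vertex count (because the paired subtrees are isomorphic), so that the residue-plus-orphan-plus-padding sets on the two sides agree in size and can be matched arbitrarily. Everything else is an elementary edge count, and in fact this argument gives the slightly stronger bound $\dist(G,H)\le s\cdot\MM(F_G,F_H)$.
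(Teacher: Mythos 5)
Your proposal is correct and follows essentially the same route as the paper: take an integral optimal flow via Lemma~\ref{cor:solution-integral}, read it as a partial matching between components, delete the orphaned trees entirely, and for each matched pair swap out the $\|\Freq(G^{(i)})-\Freq(H^{(j)})\|_1$ unpaired subtrees, with each unit of flow cost accounting for at most $s$ edge modifications. The paper states the same edge count more tersely (and settles for the constant $2s$, which your more careful bookkeeping improves to $s$), so the two arguments coincide in substance.
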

\begin{proof}
  Let $f^\ast$ be the optimal flow function achieving $\MM(F_G, F_H)$.
  By Lemma~\ref{cor:solution-integral},
  we assume that every value of $f^\ast$ is 0 or 1.
  Therefore, we regard the flow function as a matching:
  Let $F_G = (\bfone, \{\bot, \bfu_1,\ldots,\bfu_{c(G)}\})$ and $F_H = (\bfone, \{\bot, \bfv_1,\ldots,\bfv_{c(H)}\})$.
  Then, consider a bipartite graph such that the left part consists of $\{G^{(i)} \}$, the right part consists of $\{H^{(j)}\}$, and there is an edge between $G^{(i)}$ and $H^{(j)}$ iff $f^\ast(\bfu_i, \bfv_j) = 1$.
  Then, this graph forms a (partial) matching.

  Using $f^\ast$, we construct a sequence of modifications to transform $G$ to $H$.
  For each $G^{(i)}$ with $f(\bfu_i, \bot) = 1$, we remove all the edges in $G^{(i)}$.
  For each $H^{(j)}$ with $f(\bot, \bfv_j) = 1$, we remove all the edges in $H^{(j)}$.

  Consider a pair $G^{(i)}$ and $H^{(j)}$ for which $f(\bfu_i, \bfv_j) = 1$.
  Let $\calT_{G^{(i)}}$ and $\calT_{H^{(j)}}$ be the set of subtrees in $G^{(i)}$ and $H^{(j)}$, respectively.
  From the definition, we can choose $\|\Freq(G^{(i)}) - \Freq(H^{(j)})\|_1$ sets of subtrees $\calT'_{G^{(i)}} \subseteq \calT_{G^{(i)}}$ and $\calT'_{H^{(j)}} \subseteq \calT_{H^{(j)}}$ in total so that $\calT_{G^{(i)}} \setminus \calT'_{G^{(i)}}$ and $\calT_{H^{(j)}} \setminus \calT'_{H^{(j)}}$ are isomorphic.

  The total number of edge modifications is bounded by
  \begin{align*}
    & \sum_{\bfu_i:f(\bfu_i,\bot)=1} s\deg(\rt(G^{(i)})) 
    + 
    \sum_{\bfv_j:f(\bot,\bfv_j)=1} s\deg(\rt(H^{(j)})) \\
    & \qquad + \sum_{(\bfu_i,\bfv_j):f(\bfu_i,\bfv_j)=1} s\|\Freq(G^{(i)}) - \Freq(H^{(j)})\|_1 \\
    &\le 2s \cdot \MM(F_G, F_H).
  \end{align*}
\end{proof}

Now, we prove Lemma~\ref{lemma:upper-bound-for-simple-case}.
We show that the following algorithm is a tester for forest-isomorphism.

\begin{algorithm}[!h]
  \caption{tests whether $d(G,H)=0$ or $d(G, H) \ge \varepsilon' \tilde{n}$,
    with probability at least $1-\tau'$,
    given $\tilde{n},s,R,B,\gamma \ge 1$, $\varepsilon',\tau',\eta \in (0,1)$
    and $R$-good $s$-rooted forest with root degree in $(B, \gamma B]$
    with $\frac{\tilde{n}}{n}, \frac{\tilde{n}}{n'} \in [1-\eta, 1]$
    for $n = |V(G)|$ and $n' = |V(H)|$.
  }
\begin{algorithmic}[1]
  \Procedure{$\TestSimple_{\varepsilon',s,\tau',\eta,\gamma,R}(G, H, \tilde{n}, B)$}{}
    \State{Set $\delta'' = O(\varepsilon' / s)$.}
    \State{Choose parameters $\delta, \tau, \qfreq, k$ in Lemma~\ref{lemma:mmgh-mmtcgtch} according to parameters $\gamma, \delta'', \tau'/2, R$. }
    \State{Compute $\widetilde{S}_{G} = \widetilde\Sketch_{\delta, \tau, \qfreq, R, k}(G)$ and $\widetilde{S}_{H} = \widetilde\Sketch_{\delta, \tau, \qfreq, R, k}(H)$.}
    \State{Compute $\widetilde{M} = \MM(\widetilde{S}_{G}, \widetilde{S}_{H})$ by a min-cost flow algorithm.}
    \If{$\widetilde{M} < \delta'' \tilde{n}$}
      \State{\ret YES}
    \Else
      \State{\ret NO}
    \EndIf
  \EndProcedure
\end{algorithmic}
\end{algorithm}

\begin{lemma}[Restatement of Lemma~\ref{lemma:upper-bound-for-simple-case}]
  \label{lemma:test-simple}
  There exists $\eta = \etaI(s, \varepsilon', \gamma)$
  such that for any input with $\frac{\tilde{n}}{n}, \frac{\tilde{n}}{n'} \in [1-\eta, 1]$,
  the procedure~$\TestSimple$ correctly decides $d(G,H) = 0$ or $d(G,H) \geq \varepsilon' \tilde{n}$ with probability at least $1-\tau'$.
  Here $\eta$ is a polynomial in $\varepsilon', \gamma$.
  Regarding that $s$ is constant, the query complexity is a polynomial in $\gamma, \varepsilon', \tau', R$.
  Denote by $\qrandom(s, \gamma, \varepsilon', \tau')$ the number of random vertex queries the procedure invokes.
  Then $\qrandom$ is a polynomial in $\gamma, \varepsilon', \tau'$.
\end{lemma}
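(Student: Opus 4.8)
The plan is to show that the value $\widetilde M$ computed by $\TestSimple$ is, with probability at least $1-\tau'$, within an additive $O(\varepsilon'\tilde{n}/s)$ of $\MM(F_G,F_H)$, and that $\MM(F_G,F_H)$ in turn separates the two cases: it equals $0$ exactly when $G\cong H$, and it is always at least $d(G,H)/(2s)$ by Lemma~\ref{lemma:mmgh-dist}. First I would invoke Lemma~\ref{lemma:mmgh-mmtcgtch} with its accuracy parameter set to a suitably small constant multiple of the algorithm's $\delta''=O(\varepsilon'/s)$ and its confidence parameter set to $\tau'/2$. Since $G$ and $H$ are $R$-good $s$-rooted forests with root degrees in $(B,\gamma B]$ and $\tilde{n}/n,\tilde{n}/n'\in[1-\eta,1]$, the hypotheses hold and we obtain $|\widetilde M-\MM(F_G,F_H)|\le\frac{\delta''}{4}\tilde{n}$ with probability at least $1-\tau'/2$. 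The resulting parameters $\delta,k$ are polynomials in $\gamma,\delta''$, $\qfreq$ is a polynomial in $\gamma,\delta'',R$, and $\tau=O(\tau')$; the requirement that $\eta$ be small enough is inherited from Lemma~\ref{lemma:norm-of-count} (where $\eta=\etaV(\delta)=\delta/2$). Treating $s$ (hence $t(s)$) as constant, $\eta$ and $\delta,k,\qfreq$ are therefore polynomials in $\gamma,\varepsilon'$ (with $\qfreq$ also polynomial in $R$), which already gives the claimed $\eta=\etaI(s,\varepsilon',\gamma)$. Condition on the above event for the rest of the argument.

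For correctness, if $d(G,H)=0$ then $G\cong H$, and since a graph isomorphism matches the connected components of $G$ and $H$ in isomorphic pairs and $\Freq$ depends only on the isomorphism type of an $s$-rooted tree (its root being the unique vertex of degree $\ge s+1$), we get $\Freq(G)=\Freq(H)$ as multisets and $c(G)=c(H)$, hence $F_G=F_H$ after the common extension $\ext(\cdot,M)$ and $\MM(F_G,F_H)=0$. Then $\widetilde M\le\frac{\delta''}{4}\tilde{n}<\delta''\tilde{n}$ and $\TestSimple$ outputs YES. If instead $d(G,H)\ge\varepsilon'\tilde{n}$, then Lemma~\ref{lemma:mmgh-dist} yields $\MM(F_G,F_H)\ge d(G,H)/(2s)\ge\varepsilon'\tilde{n}/(2s)$, so $\widetilde M\ge\varepsilon'\tilde{n}/(2s)-\frac{\delta''}{4}\tilde{n}$; fixing the constant in $\delta''=O(\varepsilon'/s)$ so that $\delta''\le\varepsilon'/(4s)$ makes this at least $\delta''\tilde{n}$, and $\TestSimple$ outputs NO. On the conditioned event the answer is thus correct, so the overall failure probability is at most $\tau'/2\le\tau'$.

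For the complexity, observe that computing $\MM(\widetilde{S}_G,\widetilde{S}_H)$ by a min-cost flow algorithm makes no oracle queries, so every query is issued inside the two calls to $\widetilde{\Sketch}_{\delta,\tau,\qfreq,R,k}$. By Lemma~\ref{lemma:query-complexity-of-sketch} the expected query complexity of each call is a polynomial in $\delta,\tau,\qfreq,R,k^{t(s)}$; substituting the bounds from the first paragraph and using that $s$ is constant, this is a polynomial in $\gamma,\varepsilon',\tau',R$ that does not depend on $B$ or $\tilde{n}$. A worst-case bound follows by aborting and answering NO once $2/\tau'$ times this expectation is exceeded, which by Markov's inequality adds at most $\tau'/2$ to the failure probability, exactly the remaining budget. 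For the count of random vertex queries, note that within $\widetilde{\Sketch}$ such a query occurs only in the sampling loop of Line~\ref{alg-line:chose-u}, exactly $\qloop=\qIVloop(k,s,\delta,\delta',\tau)$ times per call, whereas the subroutines $\widetilde{\deg}$, $\widetilde{\Freq}$, and $\widetilde{\Size}$ only sample edges incident to a fixed vertex; since $\qIVloop$ is a polynomial in $k^{t(s)},\delta,\delta',\tau$, the quantity $\qrandom(s,\gamma,\varepsilon',\tau')$ is a polynomial in $\gamma,\varepsilon',\tau'$ with no dependence on $R$, $B$, or $\tilde{n}$.

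The conceptual work — that $\widetilde{\Sketch}$ yields a faithful sketch, that the minimum matching between sketches tracks $\MM(F_G,F_H)$, and that $\MM(F_G,F_H)$ tracks $d(G,H)$ — is already delivered by Lemmas~\ref{lemma:mmgh-mmtcgtch} and~\ref{lemma:mmgh-dist}, so what remains is mostly assembling these inequalities and threading the parameter dependencies. The one genuinely delicate point I anticipate is that the rejection-sampling loop inside $\widetilde{\Size}$ (Lines~\ref{line:loop-begin}--\ref{line:loop-end}), and hence $\widetilde{\Sketch}$, has a query complexity that is bounded only \emph{in expectation} (Lemmas~\ref{lemma:approximate-size} and~\ref{lemma:query-complexity-of-sketch}); converting this into the worst-case bound the lemma asserts requires the truncation step above, with its failure probability carefully fit inside the $\tau'$ budget. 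Beyond that, the only care needed is in pinning down the absolute constant hidden in $\delta''=O(\varepsilon'/s)$ (and the $O(\cdot)$ constant in Lemma~\ref{lemma:mmgh-mmtcgtch}'s additive bound) so that the test's threshold $\delta''\tilde{n}$ strictly separates the two cases.
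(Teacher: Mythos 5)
Your proposal is correct and follows essentially the same route as the paper's proof: set $\eta=\etaV(\delta)$, combine Lemma~\ref{lemma:mmgh-mmtcgtch} (sketch matching approximates $\MM(F_G,F_H)$) with Lemma~\ref{lemma:mmgh-dist} ($d(G,H)\le 2s\cdot\MM(F_G,F_H)$), and threshold $\widetilde M$ at $\delta''\tilde{n}$ with $\delta''=O(\varepsilon'/s)$. Your added Markov-truncation step to turn the expected query bound into a worst-case one is a legitimate refinement of a point the paper leaves implicit, but it does not change the argument.
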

\begin{proof}
  Set $\eta = \etaV(\delta)$.
  Combining Lemmas~\ref{lemma:mmgh-mmtcgtch} and \ref{lemma:mmgh-dist},
  the correctness of $\TestSimple$ can be proven as follows:
  If $d(G, H) = 0$, $\widetilde{M} \le \MM(F_{G}, F_{H}) + \delta''\tilde{n} = \delta''\tilde{n}$ (with probability $1-\tau'$).
  On the other hand, 
  if $d(G, H) \ge \varepsilon' \tilde{n}$,
  $\widetilde{M} \ge \MM(F_{G}, F_{H}) - \delta''\tilde{n} \ge (\varepsilon'/(2s) - \delta'') \tilde{n} > \delta'' \tilde{n}$. 

  The query complexity of $\TestSimple$ is polynomial in $\delta, \tau, \qfreq, R, k^{t(s)}$.
  Since parameters $\delta, \tau, k$ are polynomials in $\gamma, \delta'' = O(\varepsilon' / s), \tau'$,
  and $\qfreq$ is a polynomial in $\gamma, \delta'', \tau', R$,
  the query complexity is a polynomial in $\gamma, \varepsilon', \tau', R$.
  We invoke random vertex queries $O(\qloop)$ times for $\qloop = \qIVloop(k,s, \delta, \deltaV(\delta), \tau)$,
  and therefore $\qrandom$ is a polynomial in $\gamma, \varepsilon', \tau'$.
\end{proof}

\section{Missing Parts of Section~\ref{sec:general-case}}
\label{appendix:missing-general-case}

\subsection{Missing proofs from Section~\ref{sec:general-case}}
\label{ss:missing-general-case}

\begin{proof}[of Lemma~\ref{lemma:boundary-bound}]
  Note that $\frac{1+\mu}{1-\mu} < \gamma$.
  For a tree $T$ in $G$, let $p_T$ be the probability that $T$ is on the $(\alpha, \gamma, \mu)$-boundary and $d'_T = \deg(\rt(T))$.
  Note that $T$ is on the $(\alpha, \gamma, \mu)$-boundary if and only if
  $\alpha \in [\frac{\gamma^i}{d'_T(1+\mu)}, \frac{\gamma^i}{d'_T(1-\mu)}]$ for some $i$.
  Let $f_T(x) := |[1,\gamma] \cap [\frac{x}{d'_T(1+\mu)}, \frac{x}{d'_T(1-\mu)}]|$.
  Then, $p_T = \sum_{i \ge 1} f_T(\gamma^i) / (\gamma-1)$ since the intervals $\{[\frac{\gamma^i}{d'_T(1+\mu)},
  \frac{\gamma^i}{d'_T(1-\mu)}]\}_{i \ge 1}$ are disjoint as $\frac{1+\mu}{1-\mu} < \gamma$.

  From the definition, if $\frac{x}{d'_T(1-\mu)} \le 1$ or $\frac{x}{d'_T(1+\mu)}\ge \gamma$, then $f_T(x)=0$ and otherwise $f_T(x)>0$.
  Futher, if $f_T(x) > 0$, then $f_T(\gamma^2 x)=0$ since this implies $x > d'_T(1-\mu)$ and $\gamma^2 x > d'_T(1-\mu)\gamma \cdot \gamma > d'_T(1+\mu)\gamma$.
  It follows that $\#\{i \in \bbN_{< L+1} \mid f_T(\gamma^i) > 0\} \le 2$.
  The value of $f_T(x)$ is maximized when $\frac{x}{d'_T(1-\mu)} = \gamma$.
  Thus, $f_T(x) \le \gamma - \frac{1-\mu}{1+\mu}\gamma \le 2\gamma\mu$, and we have $p_T \le 4\gamma\mu$.

  Therefore, we have $\E_\alpha[B_{\alpha,\gamma,\mu}(G)] = \sum_{T} p_T |V(T)| \le 4\gamma\mu n$.
  By Markov's inequality, the lemma holds.
\end{proof}

\begin{proof}[of Lemma~\ref{lemma:which-component}]
  For a vertex $v \in V(\GP{0}) \cup \cdots \cup V(\GP{L})$, let $T$ be a tree with $v \in T$.
  If $T$ contains no high-degree vertex, the procedure $\Which$ decides that $v \in \GP{0}$. This output is correct.

  Suppose that $T$ contains a high-degree vertex $u$.
  Set $\delta = O(\mu / (\gamma^2 R))$ and $q = \qW(\gamma, \mu, R, \tau) = \qD(\delta, \tau)$.
  From Lemma~\ref{lemma:approx-degree} and the definition of an $R$-good tree,
  $|\widetilde{\deg}_q(u) - \deg(u)| \le \delta R \deg(u)$ holds with probability $1 - \tau$.
  If $v \in \GP{0}$ or $v \in \GP{1}$, the output is correct since $\delta R \cdot \alpha \gamma < 1/2$.
  Suppose that $v \in \GP{i}$ for $i \in [2, L]$.
  Since $\GP{i}$ is the union of trees that are not on the $(\alpha, \gamma, \mu)$-boundary,
  $(1+\mu)\alpha\gamma^i < \deg(u) < (1-\mu)\alpha\gamma^{i+1}$ holds.
  Thus, we obtain $\alpha\gamma^i < \widetilde{\deg}(u) < \alpha\gamma^{i+1}$.
\end{proof}

\begin{proof}[of Lemma~\ref{lemma:size-of-garbage}]
  The lemma follows from Lemmas~\ref{lemma:R-bad}~and~\ref{lemma:boundary-bound}.
\end{proof}

\begin{proof}[of Lemma~\ref{lemma:provide-random-access}]
  Let $A_k$ be the event that
  when running the procedure $\Random_q(G, i)$,
  we pick up vertices of $(V(\GP{0}) \cup \cdots \cup V(\GP{L})) \setminus V(\GP{i})$ $k$ times
  and pick up a vertex of $V(\GP{i})$
  and then, the procedure $\Which_q$ outputs always the correct value.
  Set $t = O(\log(1/\tau)/\delta)$.
  It is sufficient to show that $\Pr[A_0 \vee \cdots \vee A_t] \ge 1-\tau$ holds by appropriately choosing parameters.
  Let $\tau' = O(\delta \tau)$, $\lambda = \lambdaR(\delta, \tau) = O(\delta\tau)$, $q = \qW(\gamma, \mu, R, \tau')$. Then,
  \begin{eqnarray*}
    \Pr\left[ A_0 \vee \cdots \vee A_t \right]
    &=& \sum_{k=0}^{t} ( (1-\lambda-\delta)(1-\tau') )^k \cdot \delta(1-\tau') \\
    &\ge& \delta(1-\tau') \sum_{k=0}^{t} (1-\lambda-\delta-\tau')^k  \\
    &\ge& \delta(1-\tau') \cdot (1-\tau')/(\lambda+\delta+\tau') \ge 1-\tau.
  \end{eqnarray*}
\end{proof}

\subsection{Tester for isomorphism of $s$-bounded-degree forests}
\label{ss:s-bounded-degree}

We consider a forest-isomorphism tester for $s$-bounded-degree forests.
As mentioned in Section~\ref{sec:simple-case},
we need to make a tester for two forests containing different number of vertices.
Using the result in \cite{Newman:2013hg}, we can construct such a tester.
\begin{lemma}
  \label{lemma:s-bounded-degree}
  There exists a procedure such that the following holds:
  For any $\varepsilon', \tau \in (0,1)$ and $s \ge 1$,
  there exists $\eta = \etaII(\varepsilon')$
  such that for any $s$-bounded-degree forests $G$ and $H$ with $\frac{\tilde{n}}{n}, \frac{\tilde{n}}{n'} \in [1-\eta, 1]$,
  where $n = |V(G)|$ and $n' = |V(H)|$,
  the procedure correctly decides $d(G, H) = 0$ or $d(G,H) \geq \varepsilon' \tilde{n}$ with probability at least $1-\tau$.
  The query complexity depends only on $\varepsilon'$ and $\tau$.
\end{lemma}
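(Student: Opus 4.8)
The plan is to reduce the problem to the constant-time isomorphism tester for bounded-degree hyperfinite graphs due to Newman and Sohler~\cite{Newman:2013hg}, and to handle the mismatch in the number of vertices by padding with isolated vertices and estimating the number of vertices by random sampling. Recall that an $s$-bounded-degree forest is hyperfinite (indeed $(\varepsilon, f(\varepsilon))$-hyperfinite for a suitable function $f$, since a forest of maximum degree $s$ can be broken into components of size $O(1/\varepsilon)$ by removing $\varepsilon n$ edges), so the result of~\cite{Newman:2013hg} applies to each of $G$ and $H$ individually. The only new issues are that (i) we are comparing two unknown oracles rather than an unknown graph against a fixed one, and (ii) $|V(G)|$ and $|V(H)|$ may differ and are not known exactly.

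First I would estimate $n = |V(G)|$ and $n' = |V(H)|$ up to a multiplicative factor $1 \pm \eta'$ for a small $\eta' = \eta'(\varepsilon')$ by a standard sampling argument (sample vertices, or use the oracle to estimate component-size distributions); this costs a number of queries depending only on $\varepsilon'$ and $\tau$. If the estimates reveal $|n - n'| \ge \frac{\varepsilon'}{4}\tilde n$, then since adding or deleting that many isolated vertices already forces $d(G,H) \ge \frac{\varepsilon'}{4}\tilde n - (\text{small}) $, we can safely output NO. Otherwise $|n - n'| < \frac{\varepsilon'}{2}\tilde n$, and we proceed by (conceptually) padding the smaller forest with the appropriate number of isolated vertices so that both have the same vertex count $N$; the extension of $d(\cdot,\cdot)$ defined in Section~\ref{sec:preliminaries} makes this legitimate, and an isolated vertex is a degenerate $s$-bounded-degree tree, so the padded graphs are still $s$-bounded-degree forests. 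By the triangle inequality for $d(\cdot,\cdot)$, the distance changes by at most $|n-n'| \le \frac{\varepsilon'}{2}\tilde n$ under this padding, so it suffices to distinguish $d = 0$ from $d \ge \frac{\varepsilon'}{2}\tilde n \ge \frac{\varepsilon'}{4} N$ for the padded forests.

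Next I would apply the property-testing-to-graph-isomorphism machinery of~\cite{Newman:2013hg} in its two-oracle form: for bounded-degree hyperfinite graphs one can locally approximate, from a vertex oracle, the ``frequency vector'' of connected-component isomorphism types up to any desired additive error $\xi N$ using $O_{\xi,\tau}(1)$ queries, and two such forests are isomorphic iff their component-type frequency vectors agree, while being $\varepsilon'$-far forces some type to differ in frequency by $\Omega(\varepsilon' N / s)$ (each single edge modification changes at most two component types and by a bounded amount). Concretely: compute approximate frequency vectors $\widetilde{F}_G, \widetilde{F}_H$ over all component isomorphism types of size at most $f(\varepsilon'/8)$ with additive error below, say, $\frac{\varepsilon'}{100 s}N$; accept iff $\|\widetilde F_G - \widetilde F_H\|_1$ is below an appropriate threshold. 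Choosing $\eta = \etaII(\varepsilon')$ small enough that the $\tilde n/n, \tilde n/n'$ slack is absorbed into these additive errors completes the argument, and all query counts depend only on $\varepsilon'$ and $\tau$ since $s$ is constant.

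The main obstacle I anticipate is purely bookkeeping rather than conceptual: carefully chaining the three sources of slack — the $\eta$-approximation of $\tilde n$ to $n$ and $n'$, the $|n - n'|$ gap handled by isolated-vertex padding, and the sampling error in the frequency-vector estimates — so that the completeness threshold ($d = 0 \Rightarrow$ accept) and the soundness threshold ($d \ge \varepsilon' \tilde n \Rightarrow$ reject) remain separated, and verifying that the two-oracle variant of the component-frequency estimator of~\cite{Newman:2013hg} is genuinely available (it is, since that estimator only inspects local neighborhoods of randomly sampled vertices and never uses knowledge of the target graph). I would also double-check that isolated vertices, being size-$1$ components, are correctly accounted for as a legitimate component type so that padding does not distort the frequency vectors in an unexpected way.
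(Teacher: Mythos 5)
Your proposal follows essentially the same route as the paper's proof: pad the smaller forest with isolated vertices, invoke the Newman--Sohler result that $\varepsilon'$-far bounded-degree forests have local-isomorphism-type frequency vectors differing by a constant in $\ell_1$, estimate those vectors by sampling vertices and performing BFS, and absorb the $\eta$-slack from $\tilde n$ via the triangle inequality (the paper's Claim bounding $\|\Freq_D(H')-\Freq_D(H)\|_1$ by $4\eta$). One caution: your parenthetical justification that ``each single edge modification changes at most two component types'' only establishes that close graphs have close frequency vectors, not the soundness direction (far $\Rightarrow$ frequency vectors far); that direction genuinely requires the Newman--Sohler theorem you cite, which is exactly how the paper argues.
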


\begin{proof}
  We use the similar notion in \cite{Newman:2013hg}.
  For $s, k \ge 1$, let $N(s, k)$ be the number of rooted graphs whose degree is at most $s$ and radius is at most $k$.
  Suppose that the $N(s, k)$ rooted graphs are numbered from 1. Let $N_i(s, k)$ be the $i$-th graph.
  In addition, for a vertex $v$ in $G$,
  let $B_G(v, k)$ be the subgraph rooted at $v$ that is induced by all vertices of $G$ that are at distance at most $k$ from $v$.
  Let $\Dist_k(G)$ be the $N(s, k)$-dimensional vector
  whose $i$-th element is the number of vertices $v$ in $G$ such that
  $B_G(v, k)$ is isomorphic to $N_i(s, k)$.
  Let $\Freq_k(G) = \Dist_k(G) / n$.
  The main result of \cite{Newman:2013hg} is as follows.

  \begin{theorem}[\cite{Newman:2013hg}]
    \label{theorem:newman2011}
    For any $\varepsilon' \in (0, 1)$ and $s \ge 1$,
    there exists $D = \DII(\varepsilon', s)$ and $\delta = \deltaII(\varepsilon', s)$ such that
    if two forests $G$ and $H$ containing equal vertices are $\varepsilon'$-far from isomorphic, then
    $\|\Freq_{D}(G) - \Freq_D(H)\|_1 > \delta''$ holds.
    \qed
  \end{theorem}

  Without loss of generality, assume that that $n \ge n'$.
  Let $H'$ be a graph consisting of $H$ and $(n-n')$ isolated vertices.
  We will prove the following claim.

  \begin{claim}
    \label{claim:hp-h}
    If $\frac{\tilde{n}}{n}, \frac{\tilde{n}}{n'} \in [1-\eta, 1]$,
    $\|\Freq_D(H') - \Freq_D(H)\|_1 \le 4\eta$.
  \end{claim}

  Using the claim, we can prove the lemma as follows.
  Set $\eta = O(\lambda)$.
  From the triangle inequality,
  $\|\Freq_D(G) - \Freq_D(H)\|_1 \ge \|\Freq_D(G) - \Freq_D(H')\|_1 - \|\Freq_D(H') - \Freq_D(H)\|_1$.
  If $G$ and $H$ are $\varepsilon'$-far from isomorphic, then
  $\|\Freq_D(G) - \Freq_D(H)\|_1 \ge \lambda - O(\lambda) = \lambda/2$ holds
  by Theorem~\ref{theorem:newman2011} and the above claim.
  Since we can approximate $\Freq$ by randomly sampling vertices and performing the BFS,
  we can test whether $\Freq_D(G) = \Freq_D(H)$ or $\|\Freq_D(G) - \Freq_D(H)\|_1 \ge \lambda / 2$ with high probability.

  We prove the claim.
  By the condition, $|1 - n/n'| \le 2\eta$.
  Suppose that an isolated vertex is indexed one in the $N(s, D)$-dimensional vector.
  Let $z$ be the number of isolated vertices in $H$. Then,
  \begin{eqnarray*}
    & & \|\Freq_D(H') - \Freq_D(H)\|_1 \\
    &=& \|\Dist_D(H')/n - \Dist_D(H)/n' \|_1 \\
    &=& |(z+n-n')/n - z/n'| + \sum_{2 \le i \le N(s, k)} |\Dist_D(H)[i]/n - \Dist_D(H)[i]/n'| \\
    &\le& |1-n'/n| + z|1/n - z/n'| + (\|\Dist_D(H)\|_1 - z)|1/n - 1/n'| \\
    &\le& |1-n'/n| + |1-n'/n| \le 4\eta.
  \end{eqnarray*}
\end{proof}

We denote the procedure in Lemma~\ref{lemma:s-bounded-degree} by
$\TestBounded_{\varepsilon', s, \tau, \eta}(G, H)$.

\subsection{Proof of Theorem~\ref{thm:forest-isomorphism}}
\label{ss:proof-of-isomorphism-tester}

Now, we prove Theorem~\ref{thm:forest-isomorphism}.
We show a tester for forest-isomorphism for $s$-forests in Algorithm~\ref{algorithm:test-isomorhism}.
Let $\TestEach_{\varepsilon', s, \tau, \eta, \gamma, R}(i, G, H, \tilde{n})$ be the algorithm
that runs $\TestBounded_{\varepsilon', s, \tau, \eta}(G, H)$ if $i=0$,
and runs $\TestSimple_{\varepsilon',s,\tau,\eta,\gamma,R}(G, H, \tilde{n}, \alpha \gamma^{i})$ otherwise.

We use the procedure $\TestIsomorphism$ in Algorithm~\ref{algorithm:test-isomorhism} as a tester.
It is sufficient to show the following theorem.

\begin{algorithm}[!h]
  \caption{returns YES if $d(G,H)=0$ and NO if $d(G,H) \ge \varepsilon n$ with high probability,
  given two $s$-forests $G,H$ and $\varepsilon,\tau \in (0,1)$.}
  \label{algorithm:test-isomorhism}
\begin{algorithmic}[1]
  \Procedure{${\TestIsomorphism}_{\varepsilon, \tau}(G, H)$}{}
    \State{Let $\gamma = 2s$, $L = O(\log{n} / \log{\gamma})$.} \label{line:paramter-begin}
    \State{Let
      $\varepsilon' = O(\varepsilon/L)$,
      $\tau'        = O(\tau/L)$,
      $\eta         = O(\min(\etaV(s, \varepsilon', \gamma, \tau'), \etaII(\varepsilon)))$,
    } 
    \State{
      $\delta       = O(\min(\eta, \varepsilon'))$,
      $\tau''       = O(1/(L\qrandom(s, \gamma, \varepsilon', \tau')))$,
      $q            = \qR(\delta, \tau'')$,
    }
    \State{
      $\lambda      = \lambdaR(\delta, \tau'')$,
      $R            = O(s / \lambda)$,
      $\mu          = O(\lambda / \gamma)$,
    }
    \State{
      $\beta_0 = 1 / 2, \beta_{L+1} = 2\lambda$, $\beta_1, \cdots, \beta_L = (1-\beta_0 - \beta_{L+1}) / L$.
    } \label{line:paramter-end}
    \State{Choose $\alpha \in [1, \gamma]$ uniformly at random.} \label{line:choose-alpha}

    \For{$i=0,\dots,L$}
    \State{Let $\qloop = \qloopC(\delta, \tau')$ and $\qwhich = \qwhichC(\delta, \tau')$.} \label{line:parameter-q}
      \State{Compute $\TS{G,i} = \Size_{\qloop, \qwhich}(G, i)$ and $\TS{H, i} = \Size_{\qloop, \qwhich}(H, i)$.} \label{line:compute-size}

      \If{$|\TS{G,i} - \TS{H,i}| > 2 \delta n$} \label{line:gh-compare-begin}
        \State{\ret NO} \label{line:difference-size-is-significant}
      \EndIf

      \If{$\TS{G,i} + \TS{H,i} < \varepsilon' n$}
        \State{{\bf continue}}
      \EndIf

      \State{Let $\tilde{n} = \max(\TS{G,i}, \TS{H,i}) + \delta n$.}
      \State{Invoke the procedure $\TestEach_{(\beta_i \varepsilon), s, (\beta_i \tau), \eta, \gamma, R}(i, \GP{i}, \HP{i}, \tilde{n})$
        with providing the random vertex query by $\Random_q(\cdot, i)$.
      } \label{line:invoke-tester}
      \If{the returned value is NO}
        \State{\ret NO} \label{line:test-failed}
      \EndIf \label{line:gh-compare-end}
    \EndFor
    \State{\ret YES}
  \EndProcedure
\end{algorithmic}
\end{algorithm}

\begin{theorem}
  \label{theorem:forest-isomorphism-tester}
  The procedure $\TestIsomorphism_{\varepsilon, \tau}(G, H)$ outputs the correct value with probability $1-\tau$ with query complexity $\polylog(n)$.
\end{theorem}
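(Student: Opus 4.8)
The plan is to show that, with the parameter settings of Lines~\ref{line:paramter-begin}--\ref{line:paramter-end} and the random $\alpha$ of Line~\ref{line:choose-alpha}, the tester is correct in both directions and issues only $\polylog(n)$ queries. The first thing I would do is condition on the ``good event'' $\mathcal E$ that the leftover parts satisfy $|V(\GP{L+1}_{s,\alpha,\gamma,\mu,R})|\le\lambda n$ and $|V(\HP{L+1}_{s,\alpha,\gamma,\mu,R})|\le\lambda n$; by Lemma~\ref{lemma:size-of-garbage} (itself built on Lemmas~\ref{lemma:R-bad} and~\ref{lemma:boundary-bound}) the choices $R=O(s/\lambda)$ and $\mu=O(\lambda/\gamma)$ make $\Pr[\mathcal E]$ as close to $1$ as needed. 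On $\mathcal E$, Lemmas~\ref{lemma:which-component},~\ref{lemma:component-size}, and~\ref{lemma:provide-random-access} become applicable, so $\Which_q$ classifies correctly every vertex outside the leftover parts, $\Size_{\qloop,\qwhich}(\cdot,i)$ estimates $|V(\GP{i})|$ and $|V(\HP{i})|$ to within $(\lambda+\delta)n$, and $\Random_q(\cdot,i)$ faithfully produces a uniformly random vertex of $\GP{i}$ (resp.\ $\HP{i}$) using $O(1/(\delta\tau''))$ filtering calls to $\Which_q$, each event failing only with the prescribed small probability; here I use that $\qrandom$ is polynomial in $\gamma,\varepsilon',\tau'$ (Lemma~\ref{lemma:test-simple}), so $\tau''=O(1/(L\qrandom))$ remains $1/\polylog(n)$.

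\textbf{Completeness.}
Suppose $d(G,H)=0$. The partitioning oracle then furnishes an isomorphism $\Psi:V(G)\to V(H)$ along which $R$-goodness and $(\alpha,\gamma,\mu)$-boundary membership are preserved (both are determined by vertex degrees in $G$ and in the original forest, which $\Psi$ preserves; cf.\ the last clause of Lemma~\ref{lmm:partition-to-s-forest}), so $\Psi(\GP{i})=\HP{i}$ and in particular $|V(\GP{i})|=|V(\HP{i})|$ for every $i$. Hence every size test in Line~\ref{line:gh-compare-begin} passes with high probability, the value $\tilde n$ lies within a $(1\pm\eta)$ factor of both $|V(\GP{i})|$ and $|V(\HP{i})|$ (this is exactly where $\eta$ must be taken as small as the algorithm does), and so each call $\TestEach_{(\beta_i\varepsilon),s,(\beta_i\tau),\eta,\gamma,R}(i,\GP{i},\HP{i},\tilde n)$ returns YES with probability at least $1-\beta_i\tau$ by Lemmas~\ref{lemma:test-simple} and~\ref{lemma:s-bounded-degree}. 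A union bound over $\mathcal E$, the $O(\log n)$ size estimates, the at most $L\qrandom$ calls to $\Random$, and the $L+1$ calls to $\TestEach$ --- with the hidden constants tuned so the total is at most $\tau$ --- shows the tester outputs YES with probability at least $1-\tau$.

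\textbf{Soundness and main obstacle.}
Suppose $d(G,H)\ge\varepsilon n$. I would first discard the leftover parts: deleting every edge of $\GP{L+1}$ and of $\HP{L+1}$ costs at most $\lambda n$ edges each, so on $\mathcal E$ the triangle inequality for $d(\cdot,\cdot)$ gives $d\bigl(\GP{0}\cup\cdots\cup\GP{L},\ \HP{0}\cup\cdots\cup\HP{L}\bigr)\ge(\varepsilon-2\lambda)n\ge\varepsilon n/2$ once $\lambda$ is small enough. Applying Lemma~\ref{lemma:farness-test} to this $(L{+}1)$-part partition with weights proportional to $\beta_{0},\dots,\beta_{L}$ yields an index $i\le L$ with $d(\GP{i},\HP{i})\ge c\,\beta_i\varepsilon n$ for an absolute constant $c$. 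At iteration $i$, the inequality $d(\GP{i},\HP{i})\le|V(\GP{i})|+|V(\HP{i})|$ forces both sizes to be $\Omega(\beta_i\varepsilon n)$, which (since $\beta_i=\Omega(1/L)$ and $\lambda,\delta\ll\varepsilon'=O(\varepsilon/L)$) keeps the iteration from being skipped; either the two size estimates already differ by more than $2\delta n$, in which case the tester outputs NO at Line~\ref{line:difference-size-is-significant}, or $\tilde n$ is within a $(1\pm\eta)$ factor of $|V(\GP{i})|$ and $|V(\HP{i})|$ and, since $d(\GP{i},\HP{i})\ge\beta_i\varepsilon\tilde n$, the call to $\TestEach$ (fed correct random vertices by $\Random$ on $\mathcal E$) returns NO with probability at least $1-\beta_i\tau$ by Lemmas~\ref{lemma:test-simple} and~\ref{lemma:s-bounded-degree}. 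The same union bound gives a NO output with probability at least $1-\tau$. I expect this soundness step to be the main obstacle: one has to argue that the leftover parts contribute negligibly to $d(G,H)$ so that a detectable discrepancy survives in some $\GP{i}$ with $i\le L$, and one has to keep the three accuracy scales --- $\eta$ for the sub-testers, $\delta$ for size estimation, $\lambda$ for the leftover bound --- mutually consistent while folding everything into a single $O(\log n)$-way union bound.

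\textbf{Query complexity.}
Since $L=O(\log n/\log\gamma)=O(\log n)$, each of $\varepsilon',\tau',\eta,\delta,\tau''$ is at least $1/\polylog(n)$, hence so is $\lambda=\lambdaR(\delta,\tau'')$, and therefore $R=O(s/\lambda)$, $q=\qR(\delta,\tau'')$, $\qloop=\qloopC(\delta,\tau')$, $\qwhich=\qwhichC(\delta,\tau')$, $\qW(\gamma,\mu,R,\cdot)$, and the internal parameters of $\TestSimple$ chosen via Lemma~\ref{lemma:mmgh-mmtcgtch} are all $\polylog(n)$ when $s=O(1/\varepsilon)$ is regarded as constant. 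Each iteration then performs the two $\Size$ computations with $O(\qloop)$ calls to $\Which$, the call to $\TestEach$ with $O(\qrandom)$ random-vertex queries (each costing $O(1/(\delta\tau''))$ calls to $\Which$) plus $\polylog(n)$ further queries inside $\TestSimple$ (the min-cost flow step makes none), and every call to $\Which$ and every query routed to $\GP{i}$ costs $\polylog(n)$ queries to the underlying oracle. Thus each iteration costs $\polylog(n)$, and the $L+1=O(\log n)$ iterations total $\polylog(n)$, which completes the plan.
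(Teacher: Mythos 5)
Your proposal is correct and follows essentially the same route as the paper's proof: condition on the smallness of $\GP{L+1}$ and $\HP{L+1}$ via Lemma~\ref{lemma:size-of-garbage}, invoke Lemmas~\ref{lemma:which-component}, \ref{lemma:component-size}, and \ref{lemma:provide-random-access} on that event, locate a far pair $(\GP{i},\HP{i})$ with $i\le L$ via Lemma~\ref{lemma:farness-test}, and union-bound over the $O(\log n)$ iterations. The only (immaterial) difference is that you strip off the leftover parts before applying Lemma~\ref{lemma:farness-test} to an $(L{+}1)$-part partition, whereas the paper applies it to all $L+2$ parts and uses $\beta_{L+1}=2\lambda$ together with $|V(\GP{L+1})|\le\lambda n$ to exclude the index $L+1$.
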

\begin{proof}
  We first calculate the probability that the procedure $\TestIsomorphism$ returns the correct value.
  In Line~\ref{line:choose-alpha},
  $|V(\GP{L+1}_{s,\alpha,\gamma,\mu,R})| \le \lambda n$ and $|V(\HP{L+1}_{s,\alpha,\gamma,\mu,R})| \le \lambda n$ hold
  with probability $1-O(\tau)$ by Lemma~\ref{lemma:size-of-garbage}
  and this is assumed in the following.
  In Line~\ref{line:compute-size},
  both $|\TS{G,i} - |V(\GP{i})| | \le \delta n$ and $|\TS{H,i} - |V(\HP{i})| | \le \delta n$
  hold for all $i$ with probability $1-O(\tau)$
  by Lemma~\ref{lemma:component-size} and the union bound.
  Therefore, if $\TS{G,i} \ge 2\delta n$, then $|V(\GP{i})| \ge \delta n$ holds. (The same thing holds for $\HP{i}$ as well.)
  Thus, from Lemma \ref{lemma:provide-random-access}, we can provide the random vertex query to $\GP{i}$ and $\HP{i}$ correctly
  in Line~\ref{line:invoke-tester} for every possible $i$
  with probability $1-O(\tau)$.
  Here, the procedure $\Random$ invokes the procedure $\Which$ $O(1/(\delta \tau''))$ times.
  Again, in what follows, we assume these happen.

  Suppose that $d(G, H) = 0$.
  In this case,
  the procedure returns NO in Line~\ref{line:difference-size-is-significant} for some $i$
  with probability at most $O(\tau)$.
  Further,
  the procedure returns NO in Line~\ref{line:test-failed} for some $i$
  with probability at most $O(\tau)$.
  Therefore, the procedure returns YES with probability $1-O(\tau)$.

  Next, suppose that $d(G, H) \ge \varepsilon n$.
  From Lemma~\ref{lemma:farness-test},
  there exists $i \in \bbN_{\le L+1}$ such that $d(\GP{i}, \HP{i}) \ge \beta_i \varepsilon n$ holds.
  However, since we assumed that $|V(\GP{L+1})|, |V(\HP{L+1})| \le \lambda n$ and we set $\beta_i = 2 \lambda$,
  $d(\GP{L+1}, \HP{L+1}) \ge \beta_i \varepsilon n$ will never hold.
  Thus, we can say that there exists $i \in \bbN_{\le L}$ such that one of the following holds:
  (i) $| |V(\GP{i})| - |V(\HP{i})| | > 4\delta n$, or
  (ii) $| |V(\GP{i})| - |V(\HP{i})| | \le 4\delta n$ and $d(\GP{i}, \HP{i}) \ge \beta_i \varepsilon n$.
  If (i) holds, the procedure returns NO in Line~\ref{line:difference-size-is-significant}.
  If (ii) holds,
  $|V(\GP{i})| + |V(\HP{i})| \ge \beta_i \varepsilon n$ holds
  (otherwise there exists a sequence of modifications from $V(\GP{i})$ into $V(\HP{i})$),
  and thus,
  $\TS{G,i} + \TS{H,i} \ge \varepsilon' n$ with the appropriate choice of constant factor of the parameters.
  Thus, the procedure does not pass Line~\ref{line:difference-size-is-significant}.
  By the assumption, the condition of Lemma~\ref{lemma:test-simple}
  (i.e., $\frac{\tilde{n}}{|V(\GP{i})|}, \frac{\tilde{n}}{|V(\HP{i})|} \in [1-\eta, 1]$)
  is satisfied.
  The procedure $\TestEach$ returns NO in Line~\ref{line:test-failed} with probability $1-O(\tau)$.

  Applying the union bound for all assumptions stated so far, the procedure $\TestEach$ outputs the correct value
  with probability $1-\tau$.

  Every parameter here is a polynomial in $\varepsilon$ and $L = O(\log{n} / \varepsilon)$,
  whose exponent is up to $\poly(t(s))$.
  Since $s = \sP(\varepsilon)$ is a constant,
  the query complexity in Line~\ref{line:compute-size} is polynomial in $O(\log{n})$.
  Consider the query complexity in Line~\ref{line:invoke-tester}.
  When $i=0$, we invoke the procedure $\TestBounded$ with constant parameters.
  Thus, the query complexity is constant.
  When $1 \le i \le L$, the query complexity of the procedure $\TestSimple$ is polynomial in the parameters.
  Therefore, the query complexity of $\TestIsomorphism$ is $\polylog(n)$ in total.
\end{proof}

\section{Proof of Theorem~\ref{thm:every-property-is-testable}}
\label{appendix:every-property-is-testable}
In this section, we prove that every property is testable with query complexity $\polylog(n)$.

\begin{proof}[of Theorem~\ref{thm:every-property-is-testable}]
  Suppose that we are given an oracle access to the input graph $G$.
  Let $\calF$ be the family of graphs that satisfy a property $P$.
  Consider the following procedure:
  \begin{enumerate}
    \item Use the parameters defined in Line~\ref{line:paramter-begin}--\ref{line:paramter-end},\ref{line:parameter-q} of Algorithm~\ref{algorithm:test-isomorhism}
      and choose $\alpha \in [1, \gamma]$ uniformly at random.
    \item For each $i \in \bbN_{\le L}$, compute $\TS{G,i} = \Size_{\qloop, \qwhich}(G, i)$.
      If $\TS{G, i} \ge O(\varepsilon' n)$ for $1 \le i \le L$,
      compute $\widetilde{\Sketch}(\GP{i}_{s, \alpha, \gamma, \mu, R})$.
      Let $\tilde{S}^{[i]}_G = \ext(\widetilde{\Sketch}(\GP{i}), M)$,
      where $\ext(\cdot)$ is an extension of a weighted point set and $M$ is a sufficiently large value.
      Similarly, 
      if $\TS{G, 0} \ge O(\varepsilon' n)$,
      compute an approximation to $\Freq_{\DII}(\GP{0})$.
    \item For each $H \in \calF$ and $i \in \bbN_{\le L}$,
      compute $z_{H,i} = |V(H^{[i]})|$ and $F_H^{[i]} = \ext(\Freq(H^{[i]}), M)$.
      Note that we know the full information of $\calF$, and therefore, we do not need to make any query to $H \in \calF$.
      Then, test isomorphism between $G$ and $H$ in the similar manner as in Line~\ref{line:gh-compare-begin}--\ref{line:gh-compare-end}.
      If $|\TS{G, i} - z_{H, i}| > 2 \delta n$, then regard that $G$ and $H$ are far from isomorphic.
      Otherwise, if $|\TS{G, i} + z_{H, i}| \ge \varepsilon' n$, we test isomorphism by the sketch of $\GP{i}$ and the weighted point set of $\HP{i}$ as follows:
      If $1 \le i \le L$, 
      compute $\MM(\tilde{S}^{[i]}_G, F_H^{[i]})$.
      If it is sufficiently large, then regard that $G$ and $H$ are far from isomorphic.
      We perform the same thing if $i=0$.
    \item If there exists $H \in \calF$ such that,
      for every $i \in \bbN_{\le L}$,
      we do not regard that $G$ and $H$ are far from isomorphic, then return YES.
      Otherwise, return NO.
  \end{enumerate}
  By the almost same argument as the proof of Theorem~\ref{theorem:forest-isomorphism-tester},
  the procedure returns YES with high probability if $G \in \calF$.
  We show that the procedure returns NO with high probability if $G$ is $\varepsilon$-far from the property $P$.
  Let $F_G^{[i]} = \Freq(\GP{i})$.
  From Lemma~\ref{lemma:mmgh-dist} and Lemma~\ref{lemma:farness-test},
  for every $H \in \calF$,
  $\MM(F_G^{[i]}, F_H^{[i]}) = \Omega(\beta_i \varepsilon n)$ holds for some $ 1 \le i \le L$
  (or $\|\Freq_{\DII}(\GP{0}) - \Freq_{\DII}(\HP{0})\|$ is sufficiently large for $i=0$).
  Assume that
  $\MM(F_G^{[i]}, \tilde{S}_G^{[i]})$ is sufficiently small.
  This happens with high probability.
  Then, from the triangle inequality,
  $\MM(\tilde{S}_G^{[i]}, F_H^{[i]})$
  is at least $\Omega(\beta_i \varepsilon n)$.
  The same argument holds for $i=0$.
  Thus, the procedure will return NO with high probability.

  The query complexity of this procedure is $\polylog(n)$.
\end{proof}

\section{Lower Bounds}\label{sec:lower-bound}
In this section, we give an $\Omega(\sqrt{\log{n}})$ lower bound for testing forest-isomorphism and prove Theorem~\ref{thm:lower-bound}.

We first mention one technical issue to show lower bounds.
Since $H$-isomorphism is a property which is closed under relabeling of vertices,
we can assume that a tester for $H$-isomorphism does not exploit labels of vertices (see~\cite{Goldreich:2002bn} for details).
Instead, we assume that a tester obtains vertices by sampling vertices uniformly at random and only asks degrees and neighbors of sampled vertices.

We introduce several definitions for probability distributions.
For two distributions $\calD_1$ and $\calD_2$ over $S$, the \emph{total variation distance} between $\calD_1$ and $\calD_2$ is defined as $\dist(\calD_1,\calD_2) = \frac{1}{2}\sum_{i \in S}|\calD_1(i) - \calD_2(i)|$.
For a set of elements $S$, we define $\calU(S)$ as the uniform distribution over $S$.

We use the following lower bound as our starting point.
\begin{lemma}[Folklore]\label{lem:uniform-is-hard}
  Suppose that a probability distribution $\calD$ over $[s]$ is given as an oracle.
  That is, upon a query, we can sample an element from the distribution $\calD$.
  We need $\Omega(\sqrt{s})$ queries to distinguish the case that $\calD = \calU([s])$ from the case that $\calD = \calU(S)$ for some $S \subseteq [s]$ with $|S| = \frac{s}{2}$.
\end{lemma}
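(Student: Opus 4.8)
The plan is to prove this folklore fact by a short collision (birthday-paradox) argument together with an averaging (Yao-type) reduction. The essential observation is that the oracle for $\calD$ takes no input: each query returns a fresh sample distributed as $\calD$, independently of the past. Hence adaptivity is of no use, and for a worst-case lower bound we may assume the algorithm always exhausts its budget, so that the transcript after $q$ queries is just a tuple of $q$ i.i.d.\ samples from $\calD$. As the hard instance distribution, take $\calD=\calU([s])$ with probability $1/2$ and, with probability $1/2$, draw $S$ uniformly among the size-$(s/2)$ subsets of $[s]$ and put $\calD=\calU(S)$. If a $q$-query tester were correct with probability $2/3$ on every instance, it would be correct with probability $\ge 2/3$ on this mixture; fixing its internal randomness yields a deterministic $q$-query algorithm that distinguishes the two cases with advantage at least a constant, which forces $\dist(P_1,P_2)=\Omega(1)$, where $P_1=\calU([s])^{\otimes q}$ and $P_2=\E_S\big[\calU(S)^{\otimes q}\big]$. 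Thus the lemma reduces to the estimate $\dist(P_1,P_2)=O(q^2/s)$.

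To prove that estimate I would condition on the event $E$ that the $q$ sampled elements are pairwise distinct (assuming $q\le s/2$, which is harmless). On a distinct tuple $a=(a_1,\dots,a_q)$ we have $P_1(a)=s^{-q}$, which does not depend on $a$; and, counting the size-$(s/2)$ sets $S$ that contain $\{a_1,\dots,a_q\}$, we get $P_2(a)=\binom{s-q}{s/2-q}\big/\big(\binom{s}{s/2}(s/2)^q\big)$, which also does not depend on which distinct tuple $a$ is. Therefore, conditioned on $E$, both $P_1$ and $P_2$ are the uniform distribution over the $s!/(s-q)!$ distinct $q$-tuples; in particular $P_1\mid E=P_2\mid E$, and hence $\dist(P_1,P_2)\le \Pr_{P_1}[\bar E]+\Pr_{P_2}[\bar E]$. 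The first term is the collision probability of $q$ uniform samples in $[s]$, at most $\binom{q}{2}/s$; the second is $\E_S\big[\Pr[\text{collision among }q\text{ samples from }S]\big]\le \binom{q}{2}/(s/2)$. Summing gives $\dist(P_1,P_2)=O(q^2/s)$, so forcing this to be $\Omega(1)$ yields $q=\Omega(\sqrt{s})$, as claimed.

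I do not anticipate any genuine obstacle: the argument is elementary once the reduction is set up. The only points needing a little care are the bookkeeping in that reduction — moving from an adaptive randomized tester with a possibly variable number of queries, under the $2/3$-success convention, to the clean statement about $q$ i.i.d.\ samples and a constant distinguishing advantage — and the elementary observation that $P_1\mid E=P_2\mid E$ implies $\dist(P_1,P_2)\le\Pr_{P_1}[\bar E]+\Pr_{P_2}[\bar E]$. If one prefers not to condition, the same bound follows by estimating $\sum_a|P_1(a)-P_2(a)|$ directly, using $P_2(a)/P_1(a)=\prod_{i=0}^{q-1}\frac{s-2i}{s-i}\in[1-q^2/s,\,1]$ on distinct tuples together with the collision bounds on the remaining tuples.
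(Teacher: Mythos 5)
The paper states this lemma as folklore and supplies no proof, so there is nothing to compare against; your argument is the standard one and it is correct. The central computation checks out: on a tuple of $q$ distinct elements, $\calU([s])^{\otimes q}$ assigns probability $s^{-q}$ and $\E_S\bigl[\calU(S)^{\otimes q}\bigr]$ assigns $\binom{s-q}{s/2-q}\binom{s}{s/2}^{-1}(s/2)^{-q}$, both independent of the tuple, so the two conditional laws given no collision coincide and $\dist(P_1,P_2)\le \Pr_{P_1}[\bar E]+\Pr_{P_2}[\bar E]=O(q^2/s)$, giving $q=\Omega(\sqrt{s})$. The reduction bookkeeping is also sound: the oracle is input-less, so the transcript is just $q$ i.i.d.\ samples, and placing the uniform distribution on size-$(s/2)$ subsets $S$ and averaging correctly handles the ``for some $S$'' quantifier via the usual Yao-style argument.
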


Now, we give a way of constructing a graph from a uniform distribution.
To this end, we introduce a gadget.
For an integer $k \geq 2$, let $T_k$ be the \emph{star graph} of $k$ vertices.
That is, the vertex set of $T_k$ consists of a vertex $v$, called the \emph{center vertex}, and vertices $u_1,\ldots,u_{k-1}$ connecting to $v$.
For two integers $N$ and $k$ such that $N$ is a multiple of $k$,
we define $T_k^N$ as the (disconnected) graph consisting of $\frac{N}{k}$ copies of $T_k$.

In what follows, we fix an integer $N$ to be a huge power of two and $s = \log_2 N$ be an integer.
From a uniform distribution $\calU$ over $S \subseteq [s]$,
we construct its associated graph $G_{\calD}$ by adding a copy of $T_{2^i}^N$ to $G_\calD$ for each $i \in S$.
Note that the number of vertices in $G_\calD$ is $|S|N$,
and $T^N_{2^i}$ is well-defined since $2^i \leq 2^s = N$.

\begin{lemma}\label{lem:hard-to-distinguish-1}
  Suppose that a graph $G$ is given as an oracle in the adjacency list model.
  We need $\Omega(\sqrt{s})$ queries to distinguish the case that $G = G_{\calU([s])}$ from the case that $G = G_{\calU(S)}$ for some $S \subseteq [s]$ with $|S| = \frac{s}{2}$.
\end{lemma}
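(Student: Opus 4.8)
The plan is to reduce from Lemma~\ref{lem:uniform-is-hard} by simulating an oracle query to the graph $G_\calD$ using only a small number of samples from the distribution $\calD$. First I would observe that the associated graph $G_\calD$ has a very rigid structure: it is a disjoint union of stars, and the ``star sizes'' present in $G_\calD$ are exactly $\{2^i : i \in S\}$. Since a tester for $G$-isomorphism may be assumed to access the graph only by sampling a uniformly random vertex and then issuing degree and neighbor queries to previously seen vertices (as noted in the paragraph preceding Lemma~\ref{lem:uniform-is-hard}), the only information any such query reveals is, for each sampled vertex, which star $T_{2^i}$ it lies in and whether it is a center or a leaf. Crucially, since all $\frac{N}{2^i}$ copies of $T_{2^i}$ are interchangeable and $N$ is huge, with overwhelming probability all sampled vertices lie in distinct copies, so exploring neighborhoods yields nothing beyond the multiset of indices $i$ (with the center/leaf bit) of the sampled components.

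The key step is the sampling-correspondence argument. When we pick a uniformly random vertex of $G_\calD$, it lies in a copy of $T_{2^i}$ with probability proportional to $|S|^{-1}$ times the number of vertices of $G_\calD$ in $T^N_{2^i}$-copies; but every $T^N_{2^i}$ contributes exactly $N$ vertices, so each index $i \in S$ is equally likely, i.e.\ the index of a random vertex is distributed as $\calU(S)$ — precisely the distribution oracle of Lemma~\ref{lem:uniform-is-hard}. Moreover, given the index $i$, the probability that the sampled vertex is the center is $\frac{1}{2^i}$, which the simulator can compute on its own once it knows $i$. Thus a single vertex-sample of $G_\calD$ can be perfectly simulated by one sample from $\calD$ together with local randomness, and a neighbor/degree query on an already-sampled vertex can be answered from the known structure of $T_{2^i}$. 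To simulate the whole bijection-free view of a $q$-query tester I would use $q$ samples from $\calD$, couple the two experiments, and argue that the simulated view is statistically identical to the real one except on the negligible-probability event that two samples collide in the same star copy (which has probability $O(q^2 \cdot 2^s / N) = o(1)$ for $N$ taken large enough).

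Given this simulation, any algorithm distinguishing $G_{\calU([s])}$ from $G_{\calU(S)}$ with $|S| = \frac{s}{2}$ using $q$ queries yields an algorithm distinguishing $\calU([s])$ from $\calU(S)$ using $q$ samples, so Lemma~\ref{lem:uniform-is-hard} forces $q = \Omega(\sqrt{s})$. I would finish by noting that $G_{\calU([s])}$ is a fixed graph (so it plays the role of the target $H$ in forest-isomorphism testing) and that when $S \subsetneq [s]$ the graph $G_{\calU(S)}$ has $\frac{s}{2}$ fewer distinct star types and only $\frac{s}{2}N$ vertices rather than $sN$; after padding with isolated vertices to equalize vertex counts, $G_{\calU(S)}$ is far from isomorphic to $G_{\calU([s])}$ since matching up the star structure requires changing a constant fraction of the $\Theta(sN)$ edges. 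This last farness computation is where a little care is needed, but it is routine once the combinatorial picture is fixed.

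The main obstacle I anticipate is making the coupling between ``queries to $G_\calD$'' and ``samples from $\calD$'' airtight: one must carefully argue that a tester gains nothing from neighbor queries — in particular that it cannot use vertex labels to link samples (handled by the label-obliviousness reduction cited before Lemma~\ref{lem:uniform-is-hard}) and cannot, by exploring, ever land back in a previously-visited star (handled by the collision bound, which is why $N$ must be astronomically larger than any relevant power of $q$ and $2^s$). Everything else — the uniformity of the component-index distribution, the computability of the center/leaf bit, and the eventual $\varepsilon$-farness — is bookkeeping.
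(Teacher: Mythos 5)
Your proposal is correct and follows essentially the same route as the paper: reduce from Lemma~\ref{lem:uniform-is-hard} by simulating the graph oracle for $G_{\calD}$ with one sample of $\calD$ per random-vertex query (each $T^N_{2^i}$ contributes exactly $N$ vertices, so the component index of a uniform vertex is distributed as $\calD$, and degree/neighbor queries are answered from the known star structure). One small caveat: your collision bound $O(q^2\cdot 2^s/N)=o(1)$ is not actually valid since $N=2^s$ (e.g.\ $T^N_{2^s}$ is a \emph{single} star, so repeated samples of index $s$ always share a copy), but this step is unnecessary — as you also note, the simulator can remember which copy each returned vertex lies in and reproduce collisions exactly, making the simulation perfect rather than merely statistically close, which is exactly what the paper does.
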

\begin{proof}
  Given an oracle access to a probability distribution $\calD$,
  which is guaranteed to be a uniform distribution over some set,
  we construct an oracle access to the graph $G$ as follows.

  \begin{description}
  \setlength{\itemsep}{0pt}
  \item[Random-vertex query:]
    We sample an element from $\calD$ and let $i$ be the output.
    Then, we construct a graph $T^N_{2^i}$ and return a random vertex in it.
    When we sample the same element $i$ again, we reuse the same $T^N_{2^i}$.
  \item[Degree query:]
    Let $v$ be the specified vertex.
    Since $v$ is a vertex returned by a random-vertex query, we know which $T^N_{2^i}$ contains the vertex $v$ and how we have choosen $v$ in $T^N_{2^i}$.
    Thus, we can return its degree.
  \item[Neighbor query:]
    Let $v$ and $i$ be the specified vertex and index, respectively.
    From the same reason as the previous case,
    we can return the $i$-th neighbor of $v$.
  \end{description}
  Note that the graph $G$ behind the oracle we have designed is equal to $G_{\calU(S)}$ when $\calD = \calU(S)$.
  Thus, from Lemma~\ref{lem:uniform-is-hard},
  we have a lower bound of $\Omega(\sqrt{s})$ on the query complexity.
\end{proof}

To obtain a lower bound for testing forest-isomorphism,
we need to show that distinguishing two forests of the same number of vertices is hard.
To address this issue, we use the following auxiliary lemma.
For a graph $G$, we define $G^{\otimes 2}$ as the graph consisting of two copies of $G$.
\begin{lemma}\label{lem:hard-to-distinguish-2}
  Suppose that a graph $G$ is given as an oracle in the adjacency list model.
  For a subset $S \subseteq [s]$ with $|S| = \frac{s}{2}$,
  we need $\Omega(\sqrt{s})$ queries to distinguish the case that $G = G_{\calU(S)}$ from the case that $G = G^{\otimes 2}_{\calU(S)}$.
\end{lemma}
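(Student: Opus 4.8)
The plan is to prove that, for every fixed $S\subseteq[s]$ with $|S|=s/2$ and every algorithm making $q=o(\sqrt{s})$ queries, the distribution of the \emph{transcript} (the sequence of oracle answers) when $G=G_{\calU(S)}$ and when $G=G^{\otimes 2}_{\calU(S)}$ have total variation distance $o(1)$; the lower bound follows immediately. The starting observation is that in \emph{both} graphs every connected component is a star $T_{2^i}$ with $i\in S$, and a uniformly random vertex lies in a star of size $2^i$ with probability exactly $2/s$ for each $i\in S$ (in $G_{\calU(S)}$ because $T^N_{2^i}$ contributes $N$ of the $\tfrac{s}{2}N$ vertices; in $G^{\otimes 2}_{\calU(S)}$ because it contributes $2N$ of the $sN$ vertices), and is the center of that star with conditional probability $2^{-i}$. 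Hence the ``local type'' of a freshly sampled vertex has the same law in the two graphs (recall that in this model the algorithm accesses $G$ only via random-vertex, degree, and neighbor queries, so it never sees $|V(G)|$ directly); the only genuine difference is the \emph{number} of copies of each star, which, as in a birthday argument, can only be detected by a \emph{collision}: two randomly sampled vertices landing in the same component.

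First I would set up a coupling of the two executions that shares the algorithm's internal randomness and reveals the graph lazily. On a degree or neighbor query to an already-discovered vertex the answer is forced by the portion of the star already exposed, hence identical on both sides. On a neighbor query to the center of a not-yet-exhausted star a fresh leaf is exposed; since at most $O(q)$ vertices are ever touched, this can fail only for stars of size at most $q$, and even then only by re-exposing a previously sampled vertex, which has probability $O(q/(sN))$. On a random-vertex query I couple the two fresh samples to receive the same local type (possible by the equality of laws above), keeping the transcripts equal---\emph{unless} the sample falls into a component that some previous random-vertex query already opened, in which case I declare the coupling failed. Thus, conditioned on no failure, the two transcripts are identical.

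Next I would bound the failure probability. Since random-vertex queries return independent uniform vertices regardless of the algorithm's behavior, for any fixed pair of them the probability of landing in the same component is $\sum_{C}(|C|/|V|)^2=\frac{\sum_{i\in S}N\,2^i}{(sN/2)^2}\le \frac{2N^2}{s^2N^2/4}=\frac{8}{s^2}$ for $G_{\calU(S)}$ (and at most $4/s^2$ for $G^{\otimes 2}_{\calU(S)}$), using $\sum_{i\le s}2^i<2^{s+1}=2N$. A union bound over the $\binom{q}{2}$ pairs, plus the $O(q/(sN))$ term, gives failure probability $O(q^2/s^2)$. Therefore the two transcript distributions are $O(q^2/s^2)$-close, so distinguishing the two cases with probability $2/3$ forces $q=\Omega(s)$, and in particular $q=\Omega(\sqrt{s})$.

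The step I expect to require the most care is the coupling: one must check that the discovered region always consists of sub-stars, one per opened component, so that no distinguishing information beyond the collision event can leak---in particular that neighbor queries on high-degree centers (which the algorithm can afford to probe only $o(\sqrt{s})$ times, far fewer than the $\approx N$ leaves of a $2^s$-star) and the possibility of exhausting very small stars are both absorbed into the $O(q^2/s^2)$ error term.
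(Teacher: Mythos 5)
Your proof is correct and follows essentially the same route as the paper's: both argue that the query-answer histories in the two cases are identically distributed unless random-vertex samples collide, and bound the collision probability by a birthday argument. You execute the coupling in more detail and declare failure only on same-\emph{component} collisions (probability $O(q^2/s^2)$ per execution) rather than same-star-size collisions, which in fact yields the stronger bound $q=\Omega(s)$; the paper settles for $\Omega(\sqrt{s})$, which suffices since Lemma~\ref{lem:uniform-is-hard} is the bottleneck for the overall lower bound.
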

\begin{proof}
  The query-answer history of an algorithm is the subgraph obtained through the interaction to the oracle.
  As long as (the distribution of) the query-answer history is the same,
  (the distribution of) the output by the algorithm is the same (See, e.g.,~\cite{Goldreich:2002bn}).
  We can assume that the query-answer history does not have labels on vertices since we are assuming that algorithms do not depend on labels of vertices.

  For each $i \in S$,
  $G^{\otimes 2}_{\calU(S)}$ contains two copies of $T^N_{2^i}$.
  It is easy to see that the distribution of the query-answer history is the same as long as an algorithm does not hit vertices from both copies of $T^N_{2^i}$.
  Suppose that we have obtained a vertex from a copy of $T^N_{2^i}$ for some $i \in S$.
  The only way to obtain a vertex from the other copy of $T^N_{2^i}$ is querying random vertices.
  Thus from the birthday paradox, we need $\Omega(\sqrt{s})$ queries to obtain vertices from both copies of $T^N_{2^i}$ for some $i \in S$.
\end{proof}

Since the number of vertices in $G_{\calU([s])}$ and $G^{\otimes 2}_{\calU(S)}$ is $sN = s2^s$,
the value $s$ is bounded from below by $\log{n} - \log{s} = \Omega(\log{n})$,
where $n = sN$.
Thus, we have the following.
\begin{corollary}\label{cor:hard-to-distinguish-3}
  Suppose that a graph $G$ is given as an oracle in the adjacency list model.
  We need $\Omega(\sqrt{\log{n}})$ queries to distinguish the case that $G = G_{\calU([s])}$ from the case that $G = G^{\otimes 2}_{\calU(S)}$ for some $S \subseteq [s]$ with $|S| = \frac{s}{2}$.
\end{corollary}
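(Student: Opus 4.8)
The plan is to combine Lemmas~\ref{lem:hard-to-distinguish-1} and~\ref{lem:hard-to-distinguish-2} by a hybrid argument whose intermediate instance is $G_{\calU(S)}$, and then to convert the resulting $\Omega(\sqrt{s})$ bound into $\Omega(\sqrt{\log n})$ using $s = \Theta(\log n)$.

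First I would fix a uniformly random $S \subseteq [s]$ with $|S| = s/2$ and consider three instances coupled through this same $S$: the fixed instance $G_{\calU([s])}$, the instance $G_{\calU(S)}$, and the instance $G^{\otimes 2}_{\calU(S)}$. For any algorithm $A$ making $q = o(\sqrt{s})$ queries I would write $a_1, a_3, a_2$ for its acceptance probabilities on these three instances, respectively. By Lemma~\ref{lem:hard-to-distinguish-1} the query-answer histories produced on $G_{\calU([s])}$ and on $G_{\calU(S)}$ are $o(1)$-close in total variation, so $|a_1 - a_3| = o(1)$; and since Lemma~\ref{lem:hard-to-distinguish-2} holds for every fixed $S$, it holds after averaging over the random $S$, giving $|a_3 - a_2| = o(1)$. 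By the triangle inequality $|a_1 - a_2| = o(1)$, so for all sufficiently large $s$ no $o(\sqrt{s})$-query algorithm can accept $G_{\calU([s])}$ with probability at least $2/3$ while accepting $G^{\otimes 2}_{\calU(S)}$ with probability at most $1/3$; that is, distinguishing the two requires $\Omega(\sqrt{s})$ queries. (If one prefers to read the two lemmas in the ``$2/3$-vs-$1/3$'' form rather than the total-variation form, the same conclusion follows after a constant-factor amplification, which leaves the query count $o(\sqrt{s})$.)

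It then remains to express $s$ in terms of $n$. I would recall that both $G_{\calU([s])}$ and $G^{\otimes 2}_{\calU(S)}$ have exactly $n = sN = s\,2^{s}$ vertices, so $\log_2 n = s + \log_2 s$ and hence $s = \log_2 n - \log_2 s \ge \tfrac{1}{2}\log_2 n$ once $n$ is large; therefore $\Omega(\sqrt{s}) = \Omega(\sqrt{\log n})$, which is the claimed bound.

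The step I expect to need the most care is the hybrid composition: one must use for the intermediate instance $G_{\calU(S)}$ exactly the random choice of $S$ under which both Lemmas~\ref{lem:hard-to-distinguish-1} and~\ref{lem:hard-to-distinguish-2} are hard, and check that the ``for every fixed $S$'' form of Lemma~\ref{lem:hard-to-distinguish-2} averages correctly so that the two closeness estimates may be added. Everything else --- the vertex-count bookkeeping and the passage from $\sqrt{s}$ to $\sqrt{\log n}$ --- is routine.
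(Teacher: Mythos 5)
Your proposal is correct and follows essentially the same route as the paper: the paper states the corollary immediately after noting $n = sN = s2^{s}$, hence $s = \Omega(\log n)$, with the intended (unwritten) argument being exactly your hybrid through the intermediate instance $G_{\calU(S)}$ combining Lemmas~\ref{lem:hard-to-distinguish-1} and~\ref{lem:hard-to-distinguish-2}. Your additional care about averaging over $S$ and the constant-factor amplification needed to add the two indistinguishability bounds only makes explicit what the paper leaves implicit.
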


Now we show a lower bound for forest isomorphism.
From Corollary~\ref{cor:hard-to-distinguish-3},
we know that we need $\Omega(\sqrt{\log{n}})$ queries to distinguish the case $G = G_{\calU([s])}$ from the case that $G = G^{\otimes 2}_{\calU(S)}$ for some $S \subseteq [s]$ with $|S| = \frac{s}{2}$.
In the former case, $G$ is isomorphic to $H$.
We finish the proof of Theorem~\ref{thm:lower-bound} by showing that $G$ and $H$ are indeed far in the latter case.

The following lemma is useful to bound the distance between two graphs.
\begin{lemma}\label{lem:distance-by-degree}
  Let $G = (V_1,E_1)$ and $H = (V_2,E_2)$ be two graphs of $n$ vertices.
  Then,
  \begin{align*}
    \dist(G,H) \geq \min_{\phi:V_1 \to V_2}\frac{1}{2}\sum_{u \in V_1}|\deg(u) - \deg(\phi(u))|,
  \end{align*}
  where $\phi$ is over a bijection from $V_1$ to $V_2$.
\end{lemma}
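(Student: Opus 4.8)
The plan is to start from the bijection that realizes $\dist(G,H)$ and show that \emph{its} degree-discrepancy sum is bounded by $2\dist(G,H)$; since the right-hand side of the claimed inequality takes the minimum over \emph{all} bijections, this immediately yields the lemma. Concretely, let $\pi:V_1\to V_2$ be a bijection attaining the minimum in the definition of $\dist(G,H)$, and let $H'$ be the graph on vertex set $V_1$ defined by $(u,v)\in E(H')$ iff $(\pi(u),\pi(v))\in E_2$. Then $\dist(G,H)=|E_1\triangle E(H')|$, the number of edges one must add or remove to turn $G$ into $H'$, and $\deg_{H'}(u)=\deg_H(\pi(u))$ for every $u\in V_1$.

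The core step is the pointwise estimate: for each $u\in V_1$,
\[
  |\deg_G(u)-\deg_{H'}(u)|\ \le\ \#\{e\in E_1\triangle E(H')\ :\ e \text{ is incident to } u\}.
\]
This holds because, writing $A$ (resp.\ $B$) for the set of edges incident to $u$ that lie in $E_1$ only (resp.\ in $E(H')$ only), the common incident edges cancel and $\deg_G(u)-\deg_{H'}(u)=|A|-|B|$, whose absolute value is at most $|A|+|B|$, exactly the number of symmetric-difference edges incident to $u$. Summing this inequality over all $u\in V_1$ and using that every edge of $E_1\triangle E(H')$ is incident to exactly two vertices gives
\[
  \sum_{u\in V_1}|\deg_G(u)-\deg_H(\pi(u))|\ \le\ 2\,|E_1\triangle E(H')|\ =\ 2\,\dist(G,H),
\]
and rearranging, then bounding below by the infimum over all bijections $\phi:V_1\to V_2$, finishes the proof.

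I do not expect a genuine obstacle here; the argument is essentially bookkeeping. The one point that requires a word of care is that the bijection $\phi$ achieving the minimum on the right-hand side need not coincide with the distance-optimal $\pi$, but since we only want a lower bound on $\dist(G,H)$ it suffices to exhibit the single bijection $\pi$ whose degree-discrepancy sum is controlled and then pass to the minimum. Degenerate situations (isolated vertices, or the padding convention used when the vertex counts differ) are irrelevant since the statement already fixes both graphs to have $n$ vertices.
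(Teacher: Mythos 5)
Your proof is correct and follows essentially the same route as the paper: fix the distance-optimal bijection, bound each vertex's degree discrepancy by the number of modified edges incident to it, and sum with a factor of $\tfrac12$ for double-counting. If anything, your version is slightly more careful than the paper's, since working with the full symmetric difference $E_1\triangle E(H')$ (rather than only the deleted edges $F(u)$) makes the pointwise bound valid even when $\deg_H(\pi(u))>\deg_G(u)$.
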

\begin{proof}
  Let $\phi^*$ be a minimizer. 
  For a vertex $u \in V_1$, we define $F(u)$ as the set of edges $(u,v) \in E_1$ incident to $u$ such that $(\phi^*(u),\phi^*(v))$ is not an edge of $E_2$.
  Clearly, $|F(u)| \geq |\deg(u) - \deg(\phi(u))|$ holds for every $u$.
  The lemma holds as $\dist(G,H) = \frac{1}{2}\sum_{u \in V_1}|F(u)|$.
\end{proof}

\begin{lemma}[Lemma~3 of \cite{Wu:2013yw}]\label{lem:map-component}
  Let $G_1$ and $G_2$ be graphs.
  If some connected component $C_1$ in $G_1$ is isomorphic to a connected component $C_2$ in $G_2$,
  then we can assume that $C_1$ is mapped to $C_2$ in an optimal bijection between $G_1$ and $G_2$.
\end{lemma}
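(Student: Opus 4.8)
The plan is to deduce the lemma from the identity $\dist(G_1,G_2)=\dist(G_1-C_1,G_2-C_2)$, where $G_i-C_i$ denotes $G_i$ with the whole component $C_i$ deleted. We may assume $|V(G_1)|=|V(G_2)|$ (otherwise pad with isolated vertices, which keeps $C_1,C_2$ as components), so $|V(G_1-C_1)|=|V(G_2-C_2)|$ because $C_1\cong C_2$. Fix an isomorphism $\phi\colon C_1\to C_2$ and set $X=V(C_1)$, $Y=V(C_2)$, $W_i=V(G_i)\setminus V(C_i)$; for a bijection $\pi\colon V(G_1)\to V(G_2)$ write $\mathrm{cost}(\pi)$ for the quantity that the definition of $\dist(G_1,G_2)$ minimizes over $\pi$. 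Granting the identity, the lemma follows: pick an optimal bijection $\pi^\ast\colon W_1\to W_2$ for $(G_1-C_1,G_2-C_2)$ and extend it by $\phi$ to a bijection $\Pi\colon V(G_1)\to V(G_2)$. Since $C_1$ and $C_2$ are whole components, $G_1$ has no edge between $X$ and $W_1$ and $G_2$ has none between $Y$ and $W_2$, so $\mathrm{cost}(\Pi)$ splits as (cost on pairs inside $X$) $+$ (cost on pairs inside $W_1$) $+$ (cost on cross pairs); the first summand is $0$ as $\phi$ is an isomorphism, the third is $0$ as cross pairs stay non-adjacent in both graphs, and the second equals $\mathrm{cost}(\pi^\ast)=\dist(G_1-C_1,G_2-C_2)$. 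Thus $\Pi$ is an optimal bijection for $(G_1,G_2)$ that maps $C_1$ onto $C_2$, and in fact isomorphically.

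It remains to prove the identity. The direction $\dist(G_1,G_2)\le\dist(G_1-C_1,G_2-C_2)$ is exactly the extension just described, since $\Pi$ is a legal bijection for $(G_1,G_2)$. For the reverse direction I would take an optimal bijection $\pi\colon V(G_1)\to V(G_2)$ and \emph{reroute} it into one that sends $X$ to $Y$ without increasing the cost. Let $\sigma$ be the permutation of $V(G_1)$ that fixes every vertex outside $X\cup\pi^{-1}(Y)$, agrees with $\pi^{-1}\circ\phi$ on $X$ (a bijection $X\to\pi^{-1}(Y)$), and is completed on $\pi^{-1}(Y)\setminus X$ by a suitable bijection onto $X\setminus\pi^{-1}(Y)$. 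Then $\pi':=\pi\circ\sigma$ has $\pi'|_X=\phi$, hence $\pi'(X)=Y$, and by the same splitting $\mathrm{cost}(\pi')=\mathrm{cost}(\pi'|_{W_1})\ge\dist(G_1-C_1,G_2-C_2)$, where $\pi'|_{W_1}$ is viewed as a bijection from $G_1-C_1$ to $G_2-C_2$. Provided $\mathrm{cost}(\pi')\le\mathrm{cost}(\pi)=\dist(G_1,G_2)$, this gives $\dist(G_1,G_2)\ge\dist(G_1-C_1,G_2-C_2)$ (and shows $\pi'$ is itself optimal), finishing the proof.

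The step I expect to be the real work is the inequality $\mathrm{cost}(\pi\circ\sigma)\le\mathrm{cost}(\pi)$. Writing $\mathrm{cost}(\pi)$ as the number of unordered pairs $\{u,v\}$ for which the adjacency of $\{u,v\}$ in $G_1$ disagrees with the adjacency of $\{\pi(u),\pi(v)\}$ in $G_2$, only pairs meeting the support $X\cup\pi^{-1}(Y)$ of $\sigma$ can change their contribution. I would classify these pairs by the ``quadrants'' in which $u,v$ lie ($X$ versus $W_1$) and in which their images under $\pi$ and under $\pi'$ lie ($Y$ versus $W_2$), using repeatedly that no edge of $G_1$ or of $G_2$ crosses the component boundary. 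The aim is to show that each pair on which $\pi$ strictly beats $\pi'$ is matched by a companion pair on which $\pi'$ beats $\pi$ by at least as much; the cancellation is already transparent in the toy case $C_1=C_2=K_1$, where the reroute is a single transposition and the $G_2$-edges that the transposition detaches are accounted for exactly at the partner vertex. Carrying this bookkeeping through for an arbitrary component $C$ --- and, if necessary, choosing the completion of $\sigma$ on $\pi^{-1}(Y)\setminus X$ so that the companions line up --- is the crux; everything else is routine.
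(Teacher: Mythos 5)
The paper itself gives no proof of this lemma --- it is imported wholesale as Lemma~3 of the cited reference --- so the only question is whether your argument is self-contained, and it is not: the single step you defer, namely that $\mathrm{cost}(\pi\circ\sigma)\le\mathrm{cost}(\pi)$ for a suitable completion of $\sigma$, is not bookkeeping but the entire content of the lemma. What you do establish (the splitting of $\mathrm{cost}(\Pi)$ over pairs inside $X$, pairs inside $W_1$, and cross pairs, and hence the direction $d(G_1,G_2)\le d(G_1-C_1,G_2-C_2)$) is the easy half; as written, the proposal reduces the lemma to an unproved claim that is equivalent to it.

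To see why the deferred inequality is not a routine pair-by-pair cancellation, count preserved edges rather than violated pairs: minimizing the cost is the same as maximizing the number of $G_1$-edges carried onto $G_2$-edges. In passing from $\pi$ to $\pi'=\pi\circ\sigma$ you gain all $|E(C_1)|$ edges of $C_1$, but you lose (a) the $q$ edges of $C_1$ that $\pi$ happened to send onto edges of $G_2-C_2$ and (b) the $r$ edges of $G_1-C_1$ (necessarily with both ends in $Z:=\pi^{-1}(Y)\setminus X$) that $\pi$ sent onto edges of $C_2$. Since $q$ and $r$ can each be as large as $|E(C_1)|$, the guaranteed net gain can be negative by as much as $|E(C_1)|$, and there is no injection from ``pairs where $\pi'$ loses'' to ``companion pairs where $\pi'$ wins'' in sight: the deficit must be recovered from edges inside $Z$ that become preserved only because your completion routes $Z$ onto precisely the $W_2$-vertices $\pi(X\setminus\pi^{-1}(Y))$ vacated by $X$. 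Already in the clean case $\pi(X)\cap Y=\emptyset$, proving that some completion recovers at least $q+r-|E(C_1)|$ such edges requires taking $\sigma|_Z=\phi^{-1}\circ\pi|_Z$ and an inclusion--exclusion bound of the form $|A\cap\phi(B)|\ge|A|+|B|-|E(C_2)|$ over edge subsets $A\subseteq E(C_2)$, $B\subseteq E(C_1)$ --- a global count, not a pairing of bad pairs with good ones. In the mixed case where $\pi(X)$ meets $Y$ only partially, this natural completion need not even map $Z$ into $X\setminus\pi^{-1}(Y)$, so additional surgery (or a different induction) is required. Until this is carried out, the crux of the lemma remains unproved.
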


\begin{lemma}\label{lem:distance}
  Let $S$ be a subset of $[s]$ with $|S| = \frac{s}{2}$.
  Then $\dist(G_{\calU([s])}, G^{\otimes 2}_{\calU(S)}) \geq \frac{n}{8}$.
\end{lemma}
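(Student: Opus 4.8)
The plan is to strip away the connected components that the two graphs share, reducing the claim to a distance bound between a much simpler pair of forests, and then to apply the degree-sequence lower bound of Lemma~\ref{lem:distance-by-degree}.

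First I would peel off isomorphic components using Lemma~\ref{lem:map-component} repeatedly. For every $i\in S$, the graph $G_{\calU([s])}$ contains $N/2^i$ copies of the star $T_{2^i}$ whereas $G^{\otimes 2}_{\calU(S)}$ contains $2N/2^i$ copies, so we may assume an optimal bijection matches $N/2^i$ of these copies to each other, contributing $0$ to the distance. After removing all matched components we are left with a pair $(G',H')$: here $G'$ consists of the stars $T_{2^i}$ with $i\in[s]\setminus S$ (with $N/2^i$ copies each) and $H'$ consists of the stars $T_{2^i}$ with $i\in S$ (with $N/2^i$ copies each). A quick count shows $|V(G')|=|V(H')|=\tfrac{s}{2}N$, and $\dist(G_{\calU([s])},G^{\otimes 2}_{\calU(S)})=\dist(G',H')$.

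Next I would invoke Lemma~\ref{lem:distance-by-degree}, so it suffices to bound $\tfrac12\sum_{u}|\deg_{G'}(u)-\deg_{H'}(\phi(u))|$ from below, uniformly over bijections $\phi$. The key structural fact is that, because $S$ and $[s]\setminus S$ are disjoint, the degrees larger than $2$ appearing in $G'$, namely $\{2^i-1:i\in[s]\setminus S,\ i\ge 2\}$, are disjoint from those appearing in $H'$. Using this I would establish the pointwise estimate
\[
  |\deg_{G'}(u)-\deg_{H'}(\phi(u))|\ \ge\ \tfrac13\bigl(\deg_{G'}(u)\cdot\mathbf{1}[\deg_{G'}(u)\ge 3]+\deg_{H'}(\phi(u))\cdot\mathbf{1}[\deg_{H'}(\phi(u))\ge 3]\bigr)
\]
for every vertex $u$, by checking the cases in which a star center is matched to a leaf, a leaf is matched to a star center, and two star centers are matched (the last case is where disjointness is used, forcing the two degrees $2^i-1\neq 2^j-1$ and hence a gap of at least $2^{\max(i,j)-1}$). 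Summing over $u$ and noting that the high-degree terms of $G'$ and $H'$ together exhaust the centers of all stars $T_{2^i}$ with $i\ge 2$, the right-hand side sums to $\tfrac13\sum_{i=2}^{s}\bigl(N-N/2^i\bigr)=\tfrac13 N\bigl(s-\tfrac32+2^{-s}\bigr)$. This yields $\dist(G',H')\ge\tfrac{N}{6}\bigl(s-\tfrac32\bigr)\ge\tfrac{sN}{8}=\tfrac{n}{8}$, where the last step uses $s\ge 6$, which holds since $s=\log_2 N$ is large.

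I expect the main obstacle to be the pointwise degree inequality — in particular verifying that the constant $\tfrac13$ survives the case of two matched star centers (which reduces to the elementary inequality $2^j-2^i\ge\tfrac13(2^i+2^j-2)$ for $i<j$) — together with tracking the lower-order terms carefully so that the final estimate is genuinely at least $n/8$ and not merely $\Omega(n)$.
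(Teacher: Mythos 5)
Your proof is correct and follows essentially the same route as the paper: strip the shared components via Lemma~\ref{lem:map-component}, then apply Lemma~\ref{lem:distance-by-degree} using the fact that the center degrees $2^i-1$ appearing in $G'$ and $H'$ come from disjoint index sets and hence differ by a constant factor. Your bookkeeping (counting centers of both $G'$ and $H'$ with factor $\tfrac13$, excluding $i=1$, and tracking the $-N/2^i$ terms) is in fact a bit more careful than the paper's, which simply takes the sum of center degrees of $G'$ to be $n/2$ and a gap factor of $\tfrac12$ per center.
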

\begin{proof}
  For notational simplicity,
  let $G = G_{\calU([s])}$ and $H = G^{\otimes 2}_{\calU(S)}$.
  From Lemma~\ref{lem:map-component},
  in the optimal bijection from $G$ and $H$,
  we can assume that for each $i \in S$,
  $T^n_{2^i}$ in $G_{\calU([s])}$ is mapped to the first copy of $T^n_{2^i}$ in $G^{\otimes 2}_{\calU(S)}$.
  Let $G'$ and $H'$ be the graph obtained from $G$ and $H$ by removing these mapped vertices, respectively.

  Now we consider the distance from $G'$ to $H'$.
  We consider the loss caused by center vertices in stars of $G'$.
  Let $u$ be a center vertex of a star in $T^n_{2^i}$ for some $i$.
  Then, $u$ should be mapped to a vertex $v$ in $H'$ such that $\deg_{G'}(u) \leq \frac{1}{2}\deg_{H'}(v)$ or $\deg_{G'}(u) \geq 2\deg_{H'}(v)$.
  From Lemma~\ref{lem:distance-by-degree}, we have
  \begin{align*}
    \dist(G, H) 
    \geq
    \frac{1}{2}\sum_{u:\text{center vertex in } G'}\frac{1}{2}\deg_{G'}(u)
    \geq
    \frac{n}{8}.
  \end{align*}
  We have used the fact that the sum of degrees of center vertices of $G'$ is $\frac{n}{2}$.
\end{proof}

From the previous argument and Lemma~\ref{lem:distance}, we establish Theorem~\ref{thm:lower-bound}.

\end{document}